\title{New Deterministic Algorithms for Solving Parity Games\thanks{This research was supported by ERC Starting Grant 306465 (BeyondWorstCase).}}
\newcommand*\samethanks[1][\value{footnote}]{\footnotemark[#1]}
\author{Matthias Mnich\thanks{Department of Computer Science, University of Bonn, Germany. \texttt{\{mmnich@,roeglin@cs.,roesner@cs.\}uni-bonn.de}}
   \and Heiko R{\"o}glin\samethanks[2]
   \and Clemens R{\"o}sner\samethanks[2]
   }
\newcommand{\reach}{\mathsf{reach}}
\newtheorem{theorem}{Theorem}
\newtheorem{lemma}{Lemma}
\newtheorem{corollary}{Corollary}
\begin{document}

\maketitle

\begin{abstract}
  We study parity games in which one of the two players controls only a small number~$k$ of nodes and the other player controls the $n-k$ other nodes of the game.
  Our main result is a fixed-parameter algorithm that solves bipartite parity games in time~$k^{O(\sqrt{k})}\cdot O(n^3)$, and general parity games in time\linebreak $(p+k)^{O(\sqrt{k})} \cdot O(pnm)$, where~$p$ is the number of distinct priorities and~$m$ is the number of edges.
  For all games with $k = o(n)$ this improves the previously fastest algorithm
by Jurdzi{\'n}ski, Paterson, and Zwick (SICOMP 2008).
  
  \quad We also obtain novel kernelization results and an improved deterministic algorithm for 
graphs with small average degree.
\end{abstract}

\section{Introduction}
\label{sec:introduction}
A parity game~\cite{EmersonJutna1991} is a two-player game of perfect information played on a directed graph $G$ by two players, \emph{even} and \emph{odd}, who move a token from node to node along the edges of $G$ so that an infinite path is formed.
The nodes of $G$ are partitioned into two sets $V_0$ and $V_1$; the even player moves if the token is at a node in $V_0$ and the odd player moves if the token is at a node in $V_1$.
The nodes of $G$ are labeled by a \emph{priority function} $p: V\rightarrow\mathbb N_0$, and the players compete for the parity of the highest priority occurring infinitely often on the infinite path $v_0,v_1,v_2\hdots$ describing a play: the even player wins if $\limsup_{i\rightarrow\infty}p(v_i)$ is even, and the odd player wins if it is odd.

The winner determination problem for parity games is the algorithmic problem to determine for a given parity game $G = (V_0\uplus V_1,E,p)$ and an initial node $v_0\in V_0\cup V_1$, whether the even player has a winning strategy in the game if the token is initially placed on node $v_0$.
We say that an algorithm for this problem \emph{solves} parity games.
Parity games have various applications in computer science and the theory of formal languages and automata in particular.
They are closely related to other games of infinite duration, such as mean payoff games, discounted payoff games, and stochastic games~\cite{Jurdzinski1998}.
Solving parity games is linear-time equivalent to the model checking problem for the modal $\mu$-calculus~\cite{Stirling1995}.
Hence, any parity game solver is also a model checker for the $\mu$-calculus (and vice versa).

Many algorithms have been suggested for solving parity games~\cite{BjorklundEtAl2003,JurdzinskiEtAl2008,JurdzinskiVoge2000,Zielonka1998}, yet none of them is known to run in polynomial time.
McNaughton~\cite{McNaughton1993} showed that the winner determination problem belongs to the class \linebreak $\mathsf{NP}\cap \mathsf{coNP}$, and Jurdzi{\'n}ski~\cite{Jurdzinski1998} strengthened this to $\mathsf{UP}\cap \mathsf{coUP}$.
It is a long-standing open question whether parity games can be solved in polynomial time.
The fastest known deterministic algorithm is due to Jurdzi{\'n}ski, Paterson, and Zwick~\cite{JurdzinskiEtAl2008}
and it has a run time of~$n^{O(\sqrt{n})}$ for general parity games and of~$n^{O(\sqrt{n/\log n})}$ for parity 
games in which every node has out-degree at most two.
The fastest known randomized algorithm for general parity games is due to Bj{\"o}rklund et al.~\cite{BjorklundEtAl2003}
and it has a run time of~$n^{O(\sqrt{n/\log n})}$.

As a polynomial-time algorithm for solving parity games has remained elusive,
researchers have started to consider which restrictions on the game allow for polynomial-time algorithms.
One such well-studied restriction is the treewidth~$t$ of the underlying undirected graph $G$ of the game.
Obdr{\v z}{\'a}lek~\cite{Obdrzalek2003} found an algorithm solving parity games on $n$ nodes in time $n^{O(t^2)}$.
Later, Fearnley and Lachish~\cite{FearnleyLachish2011} gave an algorithm solving parity games in time $n^{O(t\log n)}$.
Another well-studied parameter for parity games is the number~$p$ of distinct priorities by which the nodes of the game are labeled.
The progress-measure lifting algorithm by Jurdzi{\'n}ski~\cite{Jurdzinski2000} solves parity games in time $O(pm(2n/p)^{p/2})$, where $m$ denotes the number of edges of~$G$.
This run time has been improved by Schewe~\cite{Schewe2007} to $O(m((2e)^{3/2}n/p)^{p/3})$.
Fearnley and Schewe~\cite{FearnleySchewe2012} presented an algorithm for solving parity games with
run time $O(n (t+1)^{t+5} (p+1)^{3t+5})$, assuming that a tree decomposition of~$G$ with width~$t$ is given.

For a given parameter~$\kappa$, one usually aims for \emph{fixed-parameter algorithm} algorithms, i.e., algorithms that run in time $f(\kappa)\cdot n^c$ for some computable function~$f$ and some constant~$c$ that is independent of~$\kappa$.
Such an algorithm can be practical for large instances if $f$ grows moderately and $c$ is small.
From the previously mentioned algorithms only the algorithm by Fearnley and Schewe~\cite{FearnleySchewe2012} is a fixed-parameter algorithm for the combined parameter~$(t,p)$.
It is not known if fixed-parameter algorithms exist for the parameter~$t$ or the parameter~$p$ alone.   

Further parameters for which polynomial-time algorithms for parity games have been suggested include DAG-width~\cite{BerwangerEtAl2006}, clique-width~\cite{Obdrzalek2007}, and entanglement~\cite{BerwangerEtAl2012}; none of these are fixed-parameter algorithms.

\subsection{Our Contributions}
\label{sec:ourcontributions}
We study as parameter the number~$k$ of nodes that belong to the player who controls the smaller number of nodes in the parity game.
Our first result is a \emph{subexponential} fixed-parameter algorithm for solving general parity games for parameters~$p$ and~$k$ and for parameter only~$k$ for bipartite parity games (where players alternate between their moves).
\begin{theorem}
\label{thm:fixed-parameter-k-main}
  There is a deterministic algorithm that solves any parity game~$G$ on $n$ nodes and $m$ edges in time $(p+k)^{O(\sqrt{k})} \cdot O(pnm)$, where~$k$ denotes the minimum number of nodes owned by one of the players and~$p$ the number of distinct priorities.
  If~$G$ is bipartite, the algorithm runs in time $k^{O(\sqrt{k})} \cdot O(n^3)$.
\end{theorem}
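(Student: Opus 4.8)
Assume without loss of generality that the odd player owns the smaller side, so $V_1$ is the set of odd-owned nodes and $|V_1|=k$; for general games we also keep track of the number $p$ of distinct priorities. The backbone is Zielonka's recursive procedure together with the small-dominion acceleration of Jurdzi{\'n}ski, Paterson and Zwick, but the entire recursion is driven by the number of $V_1$-nodes rather than by the total number of nodes: every recursive call will remove at least one node of $V_1$, so the recursion depth is $O(k)$, and a suitable fraction of the calls removes $\Omega(\sqrt k)$ nodes of $V_1$ at once. A call on a subgame $G'$ runs in three stages. \emph{Clean-up:} remove all parts of $G'$ that a player wins using only nodes of $V_0$, until no further node is removed; the largest dominion of a player contained in $V_0$ can be computed in polynomial time by solving a one-player parity game, so after this stage $G'$ has no dominion contained in $V_0$. \emph{Dominion search:} look for a dominion $D$ of either player with $|D\cap V_1|\le\ell$, where $\ell=\Theta(\sqrt k)$; if one is found, delete its attractor and recurse on the remainder. \emph{Zielonka step:} if no such dominion exists, remove the maximal-priority nodes together with the attractor for the owner of that parity, recurse, and depending on the outcome remove a second attractor and recurse once more.

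The point of the clean-up stage is that, in the resulting game, the attractor $A$ of the maximal-priority nodes taken in the Zielonka step must contain a node of $V_1$: if $A\subseteq V_0$ then $A$ is an attractor-closed set of $V_0$-nodes from which its owner can revisit the top-priority nodes infinitely often, hence $A$ is a nonempty dominion contained in $V_0$, contradicting the clean-up. Therefore the first Zielonka recursive call, on $G'\setminus A$, removes at least one node of $V_1$; the clean-up and the dominion-removal call obviously do so as well (the removed set contains the attractor of a dominion meeting $V_1$). Moreover the opponent's winning region $W'$ in the first Zielonka call is itself a dominion of $G'$ (a trap for the owner of the top priority inside a trap of the same type), so when no $V_1$-small dominion exists we have $|W'\cap V_1|>\ell$, whence the second attractor $B\supseteq W'$ satisfies $|B\cap V_1|>\ell$ and the second recursive call, on $G'\setminus B$, drops more than $\ell$ nodes of $V_1$. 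Thus every recursive call decreases the number of $V_1$-nodes by at least one, and the ``second'' call decreases it by more than $\ell$.

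For the dominion search, note that a set $D$ is a trap for the even player exactly when $V(G')\setminus D$ is closed under the even-attractor, so for a prescribed $S\subseteq V_1$ the largest candidate odd dominion with $D\cap V_1\subseteq S$ is $D_S:=V(G')\setminus\operatorname{Attr}_{\mathrm{even}}(V_1\setminus S)$, obtained by one attractor computation; the odd player's winning region inside $G'[D_S]$ is then precisely the largest odd dominion with footprint contained in $S$, and $G'[D_S]$ contains at most $|S|\le\ell$ nodes of $V_1$, so it is handled by the recursion on the parameter. Running over all $\binom{k}{\le\ell}=k^{O(\sqrt k)}$ sets $S$, together with the symmetric search for even dominions, either yields a $V_1$-small dominion or certifies that none exists. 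Now the recursion tree has depth $O(k)$, at most two children per node, at most $k/\ell=O(\sqrt k)$ ``drop at least $\ell$'' edges along any root-to-leaf path, and every node additionally spawns $k^{O(\sqrt k)}$ search subcalls on instances with at most $\ell$ nodes of $V_1$; the balancing argument of Jurdzi{\'n}ski, Paterson and Zwick with $k$ in place of $n$ and $\ell=\Theta(\sqrt k)$ then bounds the number of leaves by $k^{O(\sqrt k)}$, and an induction on the number of $V_1$-nodes gives that a call with at most $k'$ such nodes and at most $p$ priorities costs $(p+k')^{O(\sqrt{k'})}$ times a polynomial (the polynomial absorbing all attractor computations and the priority-bounded progress-measure evaluations used on one-sided instances), for a total of $(p+k)^{O(\sqrt k)}\cdot O(pnm)$. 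For bipartite games the number of priorities can be kept out of the exponent: a play's decisive set of infinitely-often-visited nodes is strongly connected and contains at most $k$ nodes of $V_1$, which lets one preprocess the priority function to $O(k)$ distinct relevant values and run the one-player subroutines in time $O(n^2)$, giving $k^{O(\sqrt k)}\cdot O(n^3)$; this refinement is only sketched here.

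\emph{Main obstacle.} The delicate part is the combinatorial heart: verifying that the clean-up really forces the Zielonka attractor to meet $V_1$ and that the attractor-based dominion search is complete, and — most importantly — that these facts combine with the smaller recurrence hidden inside the dominion search into a single recurrence in the parameter $k$ that balances to $k^{O(\sqrt k)}$. The treatment of priorities, which forces the exponent base to be $(p+k)$ in general and demands the separate priority-reduction argument for bipartite games, is where the argument needs the most care.
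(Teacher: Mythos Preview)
Your overall architecture---Zielonka's recursion with a $V_1$-small dominion preprocessing step, and a dominion search that iterates over $\binom{k}{\le\ell}$ footprints $S\subseteq V_1$ and solves the trap $V\setminus\reach_{\neg i}(V_1\setminus S)$ recursively---matches the paper exactly. The second Zielonka call also works as you say: $W'$ is a dominion of $G$, hence $|W'\cap V_1|>\ell$.

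The gap is in your treatment of the \emph{first} Zielonka call. Your claim that, after the clean-up, the attractor $A=\reach_i(V_{p_{\max}})$ must meet $V_1$ is false. An $i$-attractor is not an $i$-trap: from $V_{p_{\max}}$ the play can be forced out of $A$, so $A\subseteq V_0$ does not make $A$ a dominion inside $V_0$ and does not contradict the clean-up. Concretely, take $V_0=\{a,b\}$, $V_1=\{c,d\}$, edges $a\to c$, $a\to d$, $b\to c$, $b\to d$, $c\to a$, $c\to b$, $d\to b$, priorities $p(a)=4$, $p(b)=0$, $p(c)=1$, $p(d)=3$. There is no cycle inside $V_0$, so your clean-up removes nothing; yet $A=\reach_0(\{a\})=\{a\}\subseteq V_0$, and the first recursive call is to a game with the same $k=|V_1|=2$. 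Your recurrence ``every call drops $k$ by at least one'' therefore does not hold, and without a secondary decreasing measure it does not even terminate.

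The paper's repair is exactly the point you flag as the main obstacle but do not resolve. For general games one keeps $p$ as a second parameter and uses only that the first call has one fewer priority, getting $T(k,p)\le\max\{T(k-1,p),\,T(k,p-1)+T(k-\ell,p)\}+\mathsf{dom}_{k,p}(\ell)$; the recursion tree then has depth $O(k+p)$ and $(p+k)^{O(\sqrt k)}$ leaves. For bipartite games one first kernelizes so that every node has an in-neighbour and priorities are contiguous; then if $p_{\max}$ is odd one checks directly that $\reach_1(V_{p_{\max}})$ picks up an in-neighbour in $V_1$, while if $p_{\max}$ is even either $A$ already meets $V_1$ or $A=V_{p_{\max}}$ and the subgame has odd top priority, so $k$ drops within two levels. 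This even/odd alternation, not your clean-up argument, is what makes the bipartite recurrence close without $p$ in the base. Your suggested bipartite shortcut of ``preprocessing to $O(k)$ relevant priorities'' is not the mechanism used and, as stated, is not justified.
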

Thus, our algorithm is particularly efficient if the game is unbalanced, in the sense that one player owns only~$k$ nodes and the other player owns the remaining $n-k \gg k$ nodes.

Let us remark that it is not very hard to show fixed-parameter tractability for parameter $p+k$; indeed McNaughton's algorithm~\cite{McNaughton1993} can be shown to run in time $p^k\cdot n^{O(1)}$, and this was improved to $p^{\log k}\cdot 4^k\cdot n^{O(1)}$ by Gajarsk{\'y} et al.~\cite{GajarskyEtAl2015}.
Our key contribution here is to reduce the dependence of $k$ to a \emph{subexponential} function.
Indeed, this improvement allows us to derive the following immediate corollary of Theorem~\ref{thm:fixed-parameter-k-main} to expedite the run time for solving \emph{general} parity games.
\begin{corollary}
  There is a deterministic algorithm that solves parity games in time~$n^{O(\sqrt{k})}$.
\end{corollary}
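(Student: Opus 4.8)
The plan is to derive the corollary directly from Theorem~\ref{thm:fixed-parameter-k-main} by bounding every parameter other than $k$ against $n$. A parity game on $n$ nodes is labelled by at most $n$ distinct priorities, so $p \le n$; it has at most $n^2$ edges, so $m \le n^2$; and $k \le n$ by definition of $k$ as the minimum of the two players' node counts. Substituting these bounds into the running time $(p+k)^{O(\sqrt k)}\cdot O(pnm)$ from Theorem~\ref{thm:fixed-parameter-k-main} gives
\[
  (p+k)^{O(\sqrt k)}\cdot O(pnm)
  \;\le\; (2n)^{O(\sqrt k)}\cdot O(n^4)
  \;=\; 2^{O(\sqrt k)}\cdot n^{O(\sqrt k)}\cdot O(n^4) .
\]
Since $n \ge 2$ we have $2^{O(\sqrt k)} \le n^{O(\sqrt k)}$, so the whole expression is at most $n^{O(\sqrt k)}\cdot O(n^4)$.

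It then remains to absorb the polynomial overhead $O(n^4)$ into $n^{O(\sqrt k)}$, where the latter is to be understood as $O(n^{c\sqrt k + c})$ for a suitable constant $c$. Whenever $k \ge 1$ this is immediate, because $\sqrt k \ge 1$ and hence $n^{O(\sqrt k)}\cdot O(n^4)$ fits into $O(n^{c\sqrt k + c})$ once $c$ is chosen large enough. In the degenerate case $k = 0$, one of the players owns no node at all, so every play is steered entirely by the other player; such one-player parity games are well known to be solvable in polynomial time, which is again within $O(n^{c\sqrt k + c})$. Combining the two cases yields a deterministic algorithm running in time $n^{O(\sqrt k)}$, as claimed.

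I do not anticipate a genuine obstacle: the corollary is a routine specialisation of Theorem~\ref{thm:fixed-parameter-k-main}. The only point that deserves a moment's care is the small-$k$ regime, where one must verify that the fixed polynomial overhead $O(pnm)$ does not escape the bound $n^{O(\sqrt k)}$; this is exactly what the case distinction above takes care of.
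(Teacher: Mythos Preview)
Your proposal is correct and matches the paper's treatment: the paper presents this as an immediate corollary of Theorem~\ref{thm:fixed-parameter-k-main} without further argument, and your derivation---bounding $p,k,m$ by polynomials in $n$ and absorbing the resulting overhead into the exponent---is exactly the intended routine specialisation. Your explicit handling of the $k=0$ case is more careful than the paper bothers to be, but it is sound.
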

Our algorithm is asymptotically always at least as fast as the fastest known deterministic parity game solver by Jurdzi{\'n}ski, Paterson, and Zwick~\cite{JurdzinskiEtAl2008}, which runs in time $n^{O(\sqrt{n})}$.
For the case~$k= o(n)$, our algorithm is asymptotically faster than theirs and constitutes the fastest known deterministic solver for such games.

We also prove the existence of a small kernel, as our second result.
For a parameterized problem, a \emph{kernelization algorithm} takes as input an instance $x$ with parameter $\kappa$ and computes in time $(|x| + \kappa)^{O(1)}$ an equivalent instance~$x'$ with parameter $\kappa'$ (a \emph{kernel}) with size $|x'| \leq g(\kappa)$, for some computable function~$g$; here, equivalent means that an optimal solution for $x$ can be derived in polynomial time from an optimal solution of $x'$.
\begin{theorem}
\label{thm:kernel-pk-main}
  Parity games can be kernelized in time $O(pmn)$ to at most $(p+1)^k+(p+1) k$ nodes, and bipartite parity games can be kernelized in time
$O(n^3)$ to at most~$k+2^k\cdot\min\{k,p\}$ nodes and at most $k2^k\cdot\min\{k,p\}$ edges.
\end{theorem}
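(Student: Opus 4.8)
The plan is to exploit that the player owning only $k$ nodes --- the \emph{small} player, say, with node set $A$, $|A|=k$ --- has no say inside $B:=V\setminus A$, the node set of the \emph{large} player, so that $G[B]$ is a one-player arena. First I would normalise: after swapping the two players if necessary (which only negates the winning condition, i.e.\ shifts all priorities by one and hence preserves the number $p$ of distinct priorities) we may assume $|A|=k\le|B|$. Next I would solve the one-player parity game $G[B]$, with the convention that the large player loses at $B$-sinks, and let $W\subseteq B$ be the region from which the large player wins without ever leaving $B$; collapse all of $W$ into a single sink won by the large player, and add an edge from $u\in A$ to this sink whenever $u$ had an out-neighbour in $W$. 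Because $G[B]$ is controlled entirely by the large player, $B\setminus W$ contains no edge into $W$ and no cycle whose maximum priority has the large player's parity; hence every play that stays in $B\setminus W$ forever is a loss for the large player, and otherwise the play returns to $A$.

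The core step is to replace $B\setminus W$ by small gadgets indexed by a \emph{type}. Write $q_1\succeq q_2$ if the large player weakly prefers $q_1$ to $q_2$ as the dominant priority of a play (any priority of her parity beats any of the opponent's; within her parity larger is better, within the opponent's parity smaller is better). For $v\in B\setminus W$ and $w\in A$ let $q^{*}(v,w)$ be the $\succeq$-largest value that occurs as the maximum priority of the $B$-portion of some walk from $v$ to $w$ inside $B\setminus W$, and $q^{*}(v,w)=\top$ if $w$ is unreachable from $v$ inside $B\setminus W$; the type $\theta(v)$ of $v$ is the function $w\mapsto q^{*}(v,w)$, so there are at most $(p+1)^{k}$ types. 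The kernel keeps $A$ together with the original $A$--$A$ edges; it has one node $G_{\theta}$ (owned by the large player) for each occurring type $\theta$, one priority-$q$ node $h_{w,q}$ with a single outgoing edge to $w$ for each $w\in A$ and priority $q$, an edge $G_{\theta}\to h_{w,q^{*}(\theta,w)}$ whenever $q^{*}(\theta,w)\neq\top$, and an edge $u\to G_{\theta(v)}$ for every original edge $u\to v$ with $v\in B\setminus W$. Counting, this is at most $(p+1)^{k}+kp+1\le(p+1)^{k}+(p+1)k$ nodes; the start node, if it lies in $B$, is represented by its own $G_{\theta}$ and costs at most one more node.

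For correctness I would show the two games have the same winner by translating strategies in both directions, using positional determinacy in the kernel and finite-memory strategies in the original game. From a positional winning strategy of the small player in the kernel, the small player plays the ``same'' move, resolving each edge $u\to G_{\theta}$ to an arbitrary concrete $B$-node of type $\theta$; one checks that every resulting play projects onto a kernel play consistent with that strategy and that the two plays have the same maximum recurring priority, hence the same winner. Conversely, from a positional winning strategy of the large player in the kernel, the large player, upon entering $B\setminus W$ at some $v$, follows a walk realising $q^{*}(v,w)$ to the exit $w$ prescribed at $G_{\theta(v)}$, ignoring the types of the nodes passed through; the two facts that make this work are that replacing a recurring priority by a $\succeq$-better one never hurts the large player (so she never needs a suboptimal exit value), and that, since $B\setminus W$ has no edge into $W$ and no large-player-winning cycle, such an excursion is either infinite and losing for her --- which is then a direct loss --- or reaches $A$. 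I expect this translation, and in particular isolating the lemma that a one-sided excursion through $B\setminus W$ is completely summarised by its exit node and the $\succeq$-optimal internal maximum priority, to be the main obstacle.

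In the bipartite case $G[B]$ has no edges, so $W=\varnothing$ and every excursion into $B$ is a single node $v$, whose type degenerates to the pair $\bigl(N(v),p(v)\bigr)$ with $N(v)\subseteq A$. The extra observation here is that, when the large player sits at a $B$-node to which the small player routed from some $u\in A$, the small player always prefers, among $u$'s out-neighbours with a fixed neighbourhood $S$, one of $\succeq$-worst priority; as there are only $k$ choices of $u$ and only $p$ priorities, at most $\min\{k,p\}$ priority values per neighbourhood $S$ survive once the dominated edges are deleted. Keeping $A$ and, for each $S$ and each surviving priority $q$, a node $G_{S,q}$ of priority $q$ with an edge to every vertex of $S$, gives at most $k+2^{k}\min\{k,p\}$ nodes and $O\!\bigl(k\,2^{k}\min\{k,p\}\bigr)$ edges; winner-equivalence follows from the same translation as above with trivial excursions. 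Finally, the running time $O(pmn)$ comes from solving the one-player game $G[B]$ once and computing all values $q^{*}(\cdot,w)$, $w\in A$, by one bottleneck/fixed-point computation per target; in the bipartite case only neighbourhoods and the dominance reduction are needed and $O(n^{3})$ suffices. (If the stated size bound already exceeds $n$, the algorithm simply returns the input.)
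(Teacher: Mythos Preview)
Your approach is essentially the paper's: your helper nodes $h_{w,q}$ are exactly the paper's nodes $v(p',w)$ from Lemma~8, your ``type'' $\theta(v)$ coincides with the out-neighbourhood of an old $V_0$-node after the paper's transformation, and for the bipartite case your domination-and-merge step is precisely the combination of the paper's reduction rules (Lemmas~9--13). One small slip to fix: your node count $(p+1)^k+kp+1$ omits the $k$ nodes of $A$ that the kernel keeps, and including them overshoots the stated bound by one; you recover the exact bound by removing the large player's attractor to $W$ (as the paper does via Lemma~5) instead of collapsing $W$ to an extra sink node.
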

This kernelization result is not only interesting for its own sake, but it is also an important ingredient in the proof of Theorem~\ref{thm:fixed-parameter-k-main}. 

As our third result, we generalize the algorithm by Jurdzi{\'n}ski, Paterson, and Zwick~\cite{JurdzinskiEtAl2008}
for parity games with maximum out-degree 2 to arbitrary out-degree~$\Delta$.
\begin{theorem}
\label{thm:outdegreebased-algorithm}
  There is a deterministic algorithm that solves parity games on $n$ nodes out of which $s_j$ nodes have out-degree at most $j$ in time 
  \begin{equation*}
   n^{O\left(\min_{1\leq j \leq n}\left\{\sqrt{n-s_j} + \sqrt{\frac{s_j}{\log_{j}{s_j}}}\right\}\right)} \enspace .
  \end{equation*}
\end{theorem}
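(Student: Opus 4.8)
The plan is to adapt the framework of Jurdzi\'nski, Paterson, and Zwick~\cite{JurdzinskiEtAl2008}, which interleaves the recursive algorithm of Zielonka and McNaughton with a subroutine that searches for \emph{small dominions}. Recall that a \emph{dominion} for player~$i$ is a set~$D$ of nodes that is a trap for the opponent and on which player~$i$ has a positional strategy winning every play; once found, one removes its $i$-attractor and recurses on the strictly smaller remaining game. The running time of this scheme is essentially $\delta(n)\cdot n^{O(r)}$, where $\delta(n)$ is the cost of one dominion search and $r$ bounds the depth of the Zielonka recursion, the point being that if the game has no dominion of ``small size'' then $r$ is small accordingly. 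Feeding in the brute-force search over all subsets of size at most~$\ell$ (cost $n^{O(\ell)}$) together with $r=O(n/\ell)$ and optimising $\ell\approx\sqrt n$ recovers $n^{O(\sqrt n)}$. To prove the theorem I want a dominion search whose cost is controlled by \emph{out-degrees}, so that the $s_j$ nodes of out-degree at most~$j$ are cheap to handle.

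The first ingredient is a structural observation: an inclusion-minimal $i$-dominion~$D$, together with a positional winning strategy~$\sigma$ on~$D$, is strongly connected in the subgraph $H_\sigma$ in which each $i$-node of~$D$ keeps only its $\sigma$-edge and every other node of~$D$ keeps all its out-edges (which all stay inside~$D$, since $D$ is a trap). Indeed, every terminal strongly connected component of~$H_\sigma$ is itself an $i$-dominion, so by minimality it equals~$D$. Hence a minimal dominion can be \emph{reconstructed from any single one of its nodes}: fix a root~$r$ (at most~$n$ choices), grow the $H_\sigma$-reachable set from~$r$, branching over the out-edges only at $i$-nodes (these are the guesses for~$\sigma$) and following all out-edges at the opponent's nodes, aborting a branch once a size budget is exceeded; each candidate set produced this way is checked for being a genuine dominion in polynomial time. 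I would run this search with \emph{two} budgets: at most~$\ell_b$ nodes of out-degree $>j$ and at most~$\ell_s$ nodes of out-degree at most~$j$. A surviving branch then performs at most~$\ell_b$ branchings of degree at most~$n$ and at most~$\ell_s$ branchings of degree at most~$j$, so the whole search (over all roots and both players) costs $\delta = n^{O(\ell_b)}\cdot j^{O(\ell_s)}\cdot n^{O(1)}$, and it succeeds iff the game has a dominion with at most~$\ell_b$ high-degree and at most~$\ell_s$ low-degree nodes.

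The second ingredient is the recursion analysis. Along any root-to-leaf path of the Zielonka recursion the removed sets are pairwise disjoint, and the ``significant'' removals are genuine dominions. If the game contains no dominion with at most~$\ell_b$ high-degree and at most~$\ell_s$ low-degree nodes, then each such peeled dominion has more than~$\ell_b$ high-degree or more than~$\ell_s$ low-degree nodes; by disjointness at most $(n-s_j)/\ell_b$ are of the first kind and at most $s_j/\ell_s$ of the second, so the recursion depth is $r=O\big((n-s_j)/\ell_b+s_j/\ell_s\big)$. The total running time is therefore $\delta\cdot n^{O(r)}=n^{O(\ell_b+(n-s_j)/\ell_b)}\cdot j^{O(\ell_s)}\cdot n^{O(s_j/\ell_s)}$. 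Balancing the two parts separately, with $\ell_b\approx\sqrt{n-s_j}$ and $\ell_s\approx\sqrt{s_j\log n/\log j}$, the exponent of~$2$ becomes $O\big(\sqrt{n-s_j}\cdot\log n+\sqrt{s_j\cdot\log j\cdot\log n}\big)$, which, using $\log s_j\le\log n$, is at most $O\big((\sqrt{n-s_j}+\sqrt{s_j/\log_j s_j})\cdot\log n\big)$; that is, the running time is $n^{O(\sqrt{n-s_j}+\sqrt{s_j/\log_j s_j})}$. Computing all the $s_j$ and running this for the best~$j$ gives the $\min$ in the statement.

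I expect the main obstacle to be the recursion analysis rather than the dominion search: one must redo the Zielonka/McNaughton analysis with care so that the ``attractor of the highest-priority nodes'' steps — which reduce neither the number of high- nor the number of low-degree nodes but only the number of priorities — are absorbed into the polynomial factor per recursion node and do not inject a factor~$p$ into the exponent, exactly as in~\cite{JurdzinskiEtAl2008} for the uniform case. The remaining points — completeness of the rooted-search enumeration (the strong-connectivity lemma), soundness of pruning by the two budgets, and the arithmetic identifying $\sqrt{\log n/\log j}$ with $1/\sqrt{\log_j s_j}$ up to constants in the relevant regime — are routine.
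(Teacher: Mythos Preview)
Your proposal is correct and follows essentially the same approach as the paper: the rooted two-budget dominion search you describe is exactly the paper's $\textbf{dominion}_2(G,\ell,s)$ (Lemma~\ref{dominionsize2}), and your recursion bound via counting ``significant removals'' along a root-to-leaf path matches the paper's tree argument in Lemmas~\ref{lemma:recurrences3} and~\ref{thm:solve-recurrences2}, with the same choices $\ell_b\approx\sqrt{n-s_j}$ and $\ell_s\approx\sqrt{s_j\log_j s_j}$. One small correction: the ``attractor of the highest-priority nodes'' step does remove at least one node (of either high or low out-degree), which is precisely what keeps the total path length bounded by~$n$ rather than introducing a factor~$p$; otherwise your anticipated obstacle and its resolution are exactly as in the paper.
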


\begin{corollary}
\label{cor:outdegreebased-algorithm}
  There is a deterministic algorithm that solves parity games on~$n$ nodes with maximum out-degree~$\Delta$ in time~$n^{O(\sqrt{\log(\Delta)\cdot n/\log(n)})}$ and parity games on~$n$ nodes with average out-degree~$\Delta$ in time $n^{O(\sqrt{\log(\log(n)\Delta)\cdot n/\log(n)})}$.
\end{corollary}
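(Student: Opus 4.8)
\noindent\emph{Proof idea for Corollary~\ref{cor:outdegreebased-algorithm}.}
The plan is to obtain both run-time bounds directly from Theorem~\ref{thm:outdegreebased-algorithm}, whose exponent is a \emph{minimum} over all $1 \le j \le n$; hence any single choice of~$j$ already yields a valid upper bound, and the whole task reduces to bounding $n - s_j$, the number of nodes of out-degree more than~$j$, and then optimizing~$j$.

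For the maximum-out-degree statement I would simply take $j = \Delta$. Every node then has out-degree at most~$\Delta$, so $s_\Delta = n$, the summand $\sqrt{n - s_\Delta}$ vanishes, and the remaining summand is $\sqrt{s_\Delta/\log_\Delta s_\Delta} = \sqrt{n/\log_\Delta n} = \sqrt{(n\ln\Delta)/\ln n}$. This is exactly $O\bigl(\sqrt{\log(\Delta)\cdot n/\log n}\bigr)$. (For $\Delta \le 1$ every play is forced, so the game is solvable in polynomial time and the bound holds trivially; thus one may assume $\Delta \ge 2$.)

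For the average-out-degree statement, write $m = \Delta n$ for the number of edges. Since the out-degrees sum to~$m$, at most $m/j = \Delta n/j$ nodes can have out-degree exceeding~$j$, so $n - s_j \le \Delta n/j$. I would then set $j = \lceil \Delta\log n\rceil$. This gives $n - s_j \le n/\log n$, hence $s_j \ge n/2$ (for $n \ge 4$, say) and therefore $\log_j s_j = \ln s_j/\ln j = \Omega\bigl(\log n/\log(\Delta\log n)\bigr)$. Substituting into Theorem~\ref{thm:outdegreebased-algorithm}, the first summand is at most $\sqrt{n/\log n}$ and the second is $O\bigl(\sqrt{n\log(\Delta\log n)/\log n}\bigr)$; as $\log(\Delta\log n) \ge 1$, their sum is $O\bigl(\sqrt{\log(\Delta\log n)\cdot n/\log n}\bigr)$, which is the claimed exponent.

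The only delicate point is the regime $\Delta > n/\log n$, where $\lceil\Delta\log n\rceil > n$ and the above choice of~$j$ is inadmissible. Here I would instead use $j = n$: then $s_n = n$ and the exponent is $O(\sqrt n)$, which is consistent with the claimed bound because $\Delta > n/\log n$ forces $\log(\Delta\log n) \ge \log n$ and hence $\sqrt{\log(\Delta\log n)\cdot n/\log n} = \Omega(\sqrt n)$. I expect this small case distinction, together with confirming that $s_j = \Theta(n)$ so that $\ln s_j = \Theta(\ln n)$, to be the only (minor) obstacle; the heart of the argument is the one-line counting bound $n - s_j \le \Delta n/j$ and the balancing choice $j = \Theta(\Delta\log n)$.
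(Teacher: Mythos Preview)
Your proposal is correct and follows essentially the same route as the paper: for the maximum-degree bound both you and the paper take $j=\Delta$ so that $s_\Delta=n$, and for the average-degree bound both take $j=\Theta(\Delta\log n)$ and use the Markov/counting bound $n-s_j\le \Delta n/j\le n/\log n$ before substituting into Theorem~\ref{thm:outdegreebased-algorithm}. Your treatment is in fact slightly more careful than the paper's, since you explicitly handle the regime $\Delta>n/\log n$ (where the chosen $j$ would exceed $n$) by falling back to $j=n$, a boundary case the paper does not discuss.
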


\noindent

\subsection{Detailed Comparison with Previous Work}

Let us discuss in detail how our results compare to previous work.  
It is well-known (cf.~\cite[Lemma 3.2]{KloksBodlaender1992}) and easy to prove that the treewidth of a complete bipartite graph equals the size of the smaller side.
Since the treewidth of a graph can only decrease when deleting edges, the graph underlying a bipartite parity game 
in which one player owns~$k$ nodes has a treewidth of at most~$k$. However, as it is not known if there exists
a fixed-parameter algorithm for parameter treewidth, the result in Theorem~\ref{thm:fixed-parameter-k-main} for the bipartite case
does not follow from previous work about parity games with bounded treewidth. As a parity game in which one player
owns~$k$ nodes can have up to~$n$ different priorities, also the fixed-parameter algorithm for the combined parameter~$(t,p)$ 
by Fearnley and Schewe~\cite{FearnleySchewe2012} does not imply our result.

The algorithm of Jurdzi{\'n}ski, Paterson, and Zwick~\cite{JurdzinskiEtAl2008}
for parity games with maximum out-degree two with run time~$n^{O(\sqrt{n/\log n})}$
can easily be generalized to arbitrary parity games at the expense of its run time.
For this, one only needs to observe that every parity game can be transformed into
a game with maximum out-degree two by replacing each node with a higher out-degree
by an appropriate binary tree. This transformation increases the number of nodes from~$n$
to~$\Theta(m)$ where~$m$ denotes the number of edges in the original parity game.
Hence, the run time becomes~$m^{O(\sqrt{m/\log m})}=n^{O(\sqrt{m/\log n})}$.
For graphs with average out-degree~$\Delta=\omega(\log\log{n})$ the resulting run time of~$n^{O(\sqrt{\Delta n/\log n})}$
is asymptotically worse than the run time we obtain in Corollary~\ref{cor:outdegreebased-algorithm}

For graphs in which the variance of the out-degrees is large, our algorithm can even be better than stated in Corollary~\ref{cor:outdegreebased-algorithm}.
If, for example, there are~$n^{1-\varepsilon}$ nodes with an arbitrary out-degree for some~$\varepsilon>0$ and all remaining nodes have constant out-degree
at most~$c$ then our algorithm has a run time of $n^{O(\sqrt{\frac{n}{\log{n}}})}$ (the minimum in Theorem~\ref{thm:outdegreebased-algorithm} is assumed for~$j=c$). 
This matches the best known bound for randomized algorithms.

Gajarsk{\'y} et al.~\cite{GajarskyEtAl2015} present an algorithm that solves parity games in time $w^{O(\sqrt{w})}\cdot n^{O(1)}$, where~$w$ denotes the modular width of~$G$. Since the modular width of a bipartite graph can be exponential in the size of the smaller side, Theorem~\ref{thm:fixed-parameter-k-main} does not follow from this result.

\section{Fundamental Properties of Parity Games}
\label{sec:paritygames}
A parity game $G = (V_0 \uplus V_1,E,p)$ consists of a directed graph $(V_0\uplus V_1,E)$, where~$V_0$ is the set of \textit{even} nodes and $V_1$ is the set of \textit{odd} nodes, and a priority function $p: V_0 \cup V_1 \rightarrow \mathbb{N}_0$.
We often abuse notation and also refer to $(V_0\uplus V_1,E)$ as the graph $G$.
For each node $v\in V(G)$, we denote by $N_G^+(v)$ and $N_G^-(v)$ the set of out-neighbors and in-neighbors of $v$ in $G$, respectively.

Two standard assumptions about parity games are (1) that $G$ is bipartite with $E \subseteq (V_0 \times V_1) \cup (V_1 \times V_0)$, and (2) that each node $u \in V$ has at least one outgoing edge $(u,v) \in E$.
The first assumption is often made because it is easy to transform a non-bipartite instance into a bipartite instance.
However, the usual transformation increases the number of nodes in~$V_i$ by an amount of $|\{v \in  V_{1-i}\mid  N_G^-(v) \cap V_{1-i} \neq \emptyset\}|$, and can therefore increase the parameter $k = \min\{|V_0|,|V_1|\}$ significantly.
We therefore consider bipartite and non-bipartite instances separately in Theorem~\ref{thm:fixed-parameter-k-main}.

We write $n = |V(G)|$, $m = |E|$ and $p = |\{p(v)\mid v \in V(G)\}|$.
The game is played by two players, the \textit{even} player (or player~$0$) and the \textit{odd} player (or player~$1$).
The game starts at some node $v_0 \in V(G)$.
The players construct an infinite path (a \emph{play}) as follows.
Let $u$ be the last node added so far to the path.
If $u \in V_0$, then player~$0$ chooses an edge $(u, v) \in E$.
Otherwise, if $u \in V_1$, then player $1$ chooses an edge $(u, v) \in E$.
In either case, node $v$ is added to the path and a new edge is then chosen by either player~$0$ or player~$1$.
As each node has at least one outgoing edge, the path constructed can always be continued.
Let $v_0, v_1, v_2, \ldots$ be the infinite path constructed by the two players and let $p(v_0), p(v_1), p(v_2), \ldots$ be the sequence of the priorities of the nodes on the path.
Player $0$ \emph{wins} the game if the largest priority seen infinitely often is even, and
player~$1$ wins if the largest priority seen infinitely often is odd.

We will define $p_1(v)$ as $p(v)$ if $p(v)$ is odd and as $-p(v)$ if $p(v)$ is even.
This allows us to say that, in case $p_1(v) > p_1(u)$ for some $v,u \in V$, player~$1$ \emph{prefers}~$p(v)$ over $p(u)$.
Observe that removing an arbitrary finite prefix of a play in a parity game does not change the winner; we refer to this property of parity games as \emph{prefix independence}.
A \emph{strategy} for player $i\in\{0,1\}$ in a game~$G$ specifies, for every finite path $v_0, v_1,\ldots, v_k$ in $G$ that ends in a node $v_k \in V_i$, an edge $(v_k, v_{k+1}) \in E$.
A strategy is \emph{positional} if the edge $(v_k, v_{k+1}) \in E$ chosen depends only on the last node $v_k$ visited and is independent of the prefix path $v_0, v_1,\ldots, v_{k-1}$.
A strategy for player $i\in \{0,1\}$ is \emph{winning} (for player $i$) from a start node~$v_0$ if following this strategy ensures that player $i$ wins the game, regardless of which strategy is used by the other player.

The fundamental determinacy theorem for parity games \cite{EmersonJutna1991,GradelEtAl2002} says that for every parity game $G$ and every start node $v_0$, either player $0$ has a winning strategy or player $1$ has a winning strategy.
Furthermore, if a player has a winning strategy from some node in a parity game, then she also has a winning positional strategy from this node.
From now on we will therefore, unless stated differently, assume every strategy to be positional.
Given positional strategies $s_0$ on $V_0$ and~$s_1$ on~$V_1$ and a start node $v_0 \in V$ the infinite path starting in~$v_0$ corresponding to these strategies consists of a finite prefix and an infinite recurrence of a cycle $C = C(s_0,s_1,v_0)$.
We call $C$ the cycle \emph{corresponding to $s_0,s_1,v_0$} and say that $s_0$ and $s_1$ \emph{create} $C$.
The parity of the highest priority $p(u)$ of all nodes $u\in V(C)$ in cycle $C$ then determines the winner of the game.
The \emph{winning set} of player~$i \in \{0,1\}$ is the set $\mathsf{win}_i(G)\subseteq V$ of nodes of the game~$G$ from which player~$i$ has a winning strategy.

For $i \in \{0,1\}$, an \emph{$i$-dominion} is a set of nodes $D \subseteq V$ so that player~$i$ can win from every node of $D$, without leaving $D$ and without allowing the other player to leave~$D$.
An example of an $i$-dominion is the set $\mathsf{win}_i(G)$, but there may be smaller subsets of $\mathsf{win}_i(G)$ that are $i$-dominions as well.
Although finding $i$-dominions can be just as hard as finding $\mathsf{win}_i(G)$, searching only for dominions with certain properties (e.g. small dominions) can be easier.
In our algorithm we will use the fact that once an $i$-dominion is found, it can easily be removed from the graph, leaving a smaller game to be solved.

\medskip

Next, we recall some well-known results about parity games that form the basis of the algorithms for solving parity games by McNaughton~\cite{McNaughton1993} and Zielonka~\cite{Zielonka1998}.
We include them here as our algorithm relies on them as well; for a detailed exposition we refer to Gr{\"a}del et al.~\cite{GradelEtAl2002}.
Fix a parity game $G = (V_0\uplus V_1,E,p)$.

For $i\in\{0,1\}$, a set $B \subseteq V(G)$ is \emph{$i$-closed} if for every $u \in B$ the following holds (we use the notation $\neg i$ for the element $1-i\in\{0,1\}$):
\begin{itemize}
 \item If $u \in V_i$, then there exists some $(u,v) \in E$ such that $v \in B$; and
 \item if $u \in V_{\neg i}$, then for every $(u,v) \in E$, we have $v \in B$.
\end{itemize}
In other words, a set $B$ is $i$-closed if player $i$ can always choose to stay in $B$ while simultaneously player $\neg i$ cannot escape from it, i.e., $B$ is a ``trap'' for player $\neg i$.

\begin{lemma}
\label{thm:winning-sets-are-closed}
 For each $i \in \{0,1\}$, the set $\mathsf{win}_i(G)$ is $i$-closed.
\end{lemma}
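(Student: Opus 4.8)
The plan is to prove that $\mathsf{win}_i(G)$ is $i$-closed by verifying the two defining conditions directly, using the characterization of $\mathsf{win}_i(G)$ as the set of nodes from which player~$i$ has a (positional) winning strategy, together with the determinacy theorem. Fix a positional strategy $s_i$ for player~$i$ that is winning from every node of $W := \mathsf{win}_i(G)$ simultaneously; such a uniform strategy exists by determinacy and positional determinacy (one can take, for each $v \in W$, a positional winning strategy from $v$ and argue that these can be merged, or simply invoke the standard fact that positional determinacy yields a single positional strategy winning from all of $W$).

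First I would handle a node $u \in W \cap V_i$. I need some edge $(u,v) \in E$ with $v \in W$. Take $v$ to be the node $s_i(u)$ prescribed by the uniform winning strategy. I claim $v \in W$: from $v$, player~$i$ can play according to $s_i$; any play from $v$ under $s_i$ is a suffix of a play from $u$ under $s_i$ (prepend the move $u \to v$), and since $s_i$ wins from $u$ against every opponent strategy, and winning is prefix-independent, $s_i$ wins from $v$ as well. Hence $v \in \mathsf{win}_i(G) = W$, as required.

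Next I would handle a node $u \in W \cap V_{\neg i}$, where I must show that \emph{every} edge $(u,v) \in E$ leads back into $W$. Suppose for contradiction that $(u,v) \in E$ with $v \notin W$. By the determinacy theorem, $v \in \mathsf{win}_{\neg i}(G)$, so player~$\neg i$ has a winning strategy $s_{\neg i}$ from $v$. Now consider the play from $u$ in which player~$\neg i$ first moves $u \to v$ and thereafter follows $s_{\neg i}$, while player~$i$ follows $s_i$. This play, after its first step, is exactly a play from $v$ consistent with $s_{\neg i}$, hence won by player~$\neg i$; by prefix independence it is also won by player~$\neg i$ when viewed as a play from $u$. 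But $u \in W$ means $s_i$ is winning for player~$i$ from $u$ against all opponent strategies --- a contradiction. Therefore no such $v$ exists, and every out-neighbor of $u$ lies in $W$.

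The only real subtlety --- and the step I would be most careful about --- is the existence of a \emph{single} positional strategy $s_i$ that is winning from all of $W$ at once, rather than a separate strategy per start node; this is the content of (uniform) positional determinacy and is exactly the fact quoted just before the lemma in the excerpt (``if a player has a winning strategy from some node\ldots she also has a winning positional strategy''), strengthened in the standard way to a uniform positional strategy on the winning region. Once that is in hand, both cases are immediate applications of prefix independence, so I do not expect any further obstacle.
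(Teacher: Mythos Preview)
Your proof is correct. Note, however, that the paper does not actually provide its own proof of this lemma: it is listed among the ``well-known results about parity games'' in Section~\ref{sec:paritygames} and the reader is referred to Gr{\"a}del et al.\ for details. So there is no paper proof to compare against; what you wrote is precisely the standard argument one finds in that reference.

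One small remark: the uniform positional winning strategy you flag as the main subtlety is convenient but not strictly necessary. A per-start-node positional winning strategy $\sigma_u$ from each $u \in W$ already suffices. For $u \in V_i$, set $v = \sigma_u(u)$ and observe that $\sigma_u$ itself wins from $v$ by prefix independence, so $v \in W$. For $u \in V_{\neg i}$, one need not invoke determinacy: since $\sigma_u$ wins from $u$ against every opponent strategy, in particular against any strategy that first moves $u \to v$, the positional strategy $\sigma_u$ must win from $v$ against all continuations, hence $v \in W$ directly. This sidesteps both the uniformity issue and the appeal to determinacy in your second case. Your version is perfectly valid, just slightly heavier than needed.
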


Let $A \subseteq V(G)$ be a set of nodes and let $i\in\{0,1\}$.
The \emph{$i$-reachability set} of $A$ is the set $\reach_i(A)$ of nodes in $A$ together with all nodes in $V(G)\setminus A$ from which player~$i$ has a strategy $\sigma$ to enter $A$ at least once (regardless of the strategy of the other player); we call such a strategy $\sigma$ an \emph{$i$-reachability strategy} to~$A$.

\begin{lemma}
\label{thm:winning-sets-are-complement-closed}
 For $A \subseteq V(G)$ and $i \in \{0,1\}$, the set $V(G) \setminus \reach_i(A)$ is $(\neg i)$-closed.
\end{lemma}

We will from now on assume that the graph of the parity game we operate on is encoded as an adjacency list.

\begin{lemma}
\label{thm:computing-reach-linear-time}
 For every set $A \subseteq V(G)$ and $i \in \{0, 1\}$, the set $\reach_i(A)$ can be computed in $O(m)$ time, where $m = |E|$ is the number of edges in the game.
\end{lemma}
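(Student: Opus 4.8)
The plan is to compute $\reach_i(A)$ by the standard \emph{attractor} construction, realized as a single backward graph search augmented with counters, and then to verify correctness and the claimed running time.

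First I would build, in $O(n+m)$ time, the reverse adjacency lists (the in-neighborhoods $N_G^-(v)$) from the given adjacency lists, and initialize for every node $u\in V_{\neg i}$ a counter $c(u):=|N_G^+(u)|$. I maintain a set $R$ as a boolean array, initially $R:=A$, together with a FIFO queue holding the nodes of $A$. While the queue is nonempty I pop a node $v$ and scan every in-neighbor $u\in N_G^-(v)$ with $u\notin R$: if $u\in V_i$, I add $u$ to $R$ and to the queue, recording a choice $\sigma(u):=v$; if $u\in V_{\neg i}$, I decrement $c(u)$, and if $c(u)$ reaches $0$ I add $u$ to $R$ and to the queue. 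When the queue empties I output $R$. For the time bound: each node is enqueued at most once (only when it first enters $R$), hence popped at most once, and popping $v$ costs $O(|N_G^-(v)|)$; summing over all nodes this is $O(m)$. The preprocessing is $O(n+m)$, and under the standing assumption that every node has an outgoing edge we have $n\le m$, so the total is $O(m)$.

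Correctness splits into the two inclusions. For $R\subseteq\reach_i(A)$, order the nodes by the time they are added to $R$ and set $\mathrm{rank}(v)$ to that time, with $\mathrm{rank}(v)=0$ for $v\in A$; then for $u\in V_i\setminus A$ the recorded edge satisfies $\mathrm{rank}(\sigma(u))<\mathrm{rank}(u)$, and for $u\in V_{\neg i}\setminus A$ the fact that $c(u)$ dropped to $0$ means \emph{every} out-neighbor of $u$ already lies in $R$ with smaller rank, so following $\sigma$ the rank strictly decreases each step and $A$ is reached, which exhibits $u\in\reach_i(A)$. For $\reach_i(A)\subseteq R$, suppose on termination some node of $\reach_i(A)$ lies outside $R$. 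By the termination condition, no $u\in V_i\setminus R$ has an out-edge into $R$ (else it would have been enqueued when that neighbor was popped), and every $u\in V_{\neg i}\setminus R$ has an out-edge leaving $R$ (else $c(u)$ would have reached $0$); thus $V(G)\setminus R$ is $(\neg i)$-closed, and since $A\subseteq R$, player $\neg i$ can keep any play starting outside $R$ inside $V(G)\setminus R$ forever, so it never visits $A$, contradicting membership in $\reach_i(A)$. Hence $R=\reach_i(A)$.

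This argument is largely routine; the only point needing care is the amortized $O(m)$ bound, which relies on charging the scan of each edge $(u,v)$ to the single moment its head $v$ is popped, and on observing that the counters let us detect a $(\neg i)$-node all of whose out-edges point into $R$ without ever rescanning its out-neighborhood.
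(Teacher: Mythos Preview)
The paper does not actually prove this lemma: it is listed among the ``well-known results about parity games'' that the authors recall without proof, referring the reader to Gr{\"a}del et al.\ for details. Your argument is correct and is precisely the standard backward attractor computation with out-degree counters that one finds in those references; the two inclusions and the $O(m)$ charging argument are all sound, and your use of the standing assumption that every node has an outgoing edge to absorb the $O(n)$ term into $O(m)$ is exactly right.
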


If $B \subseteq V(G)$ is such that for each node $u \in V(G) \setminus B$ there is an edge $(u, v)$ with $v \in V(G) \setminus B$, then the sub-game $G - B$ is the game obtained from $G$ by removing the nodes of $B$.
We will only be using $B$'s for which $V(G) \setminus B$ is an $i$-closed set for some $i$.
In this case every node in $v \in V(G) \setminus B$ has at least one out-going edge $(v,w)$ with $w \in V(G) \setminus B$ and $G-B$ will therefore be well-defined. 
The next lemmas show some useful properties of sub-games.
\begin{lemma}
\label{thm:winning-sets-closed-under-subgames}
 Let $G'$ be a sub-game of $G$ and let $i \in \{0, 1\}$.
 If the node set of $G'$ is $i$-closed in $G$, then $\mathsf{win}_i(G') \subseteq \mathsf{win}_i(G)$.
\end{lemma}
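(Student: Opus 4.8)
The plan is to take a winning positional strategy for player $i$ in the sub-game $G'$ and argue that the *same* strategy, viewed as a (partial) positional strategy in $G$, is still winning for player $i$ from every node of $\mathsf{win}_i(G')$. The key point that makes this work is the hypothesis that $V(G')$ is $i$-closed in $G$: this is precisely what guarantees that a play in $G$ that starts inside $V(G')$ and in which player $i$ follows the $G'$-strategy can never leave $V(G')$.

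First I would fix $i$, let $U = V(G')$, and let $\sigma$ be a winning positional strategy for player $i$ in $G'$ from every node of $\mathsf{win}_i(G')$ (such a strategy exists by positional determinacy, applied to $G'$; strictly speaking one takes the union of the per-node winning strategies, or invokes the standard fact that there is a uniform positional winning strategy on $\mathsf{win}_i(G')$). Note $\sigma$ assigns to each node in $U \cap V_i$ an out-edge that stays in $U$, since $\sigma$ is a legal strategy in the graph $G'$ whose vertex set is $U$. Now consider any start node $v_0 \in \mathsf{win}_i(G')$ and let player $\neg i$ play an arbitrary positional strategy $\tau$ in $G$; I would show by induction on the length of the play prefix that the resulting play $v_0, v_1, v_2, \ldots$ stays entirely within $U$. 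For the inductive step, if $v_j \in U \cap V_i$ then $v_{j+1} = \sigma(v_j) \in U$ by the above; if $v_j \in U \cap V_{\neg i}$ then $v_{j+1} \in U$ because $U$ is $i$-closed in $G$ and hence *every* out-edge of $v_j$ in $G$ leads back into $U$. So the whole play lives in $G'$, and moreover along this play player $i$ is following $\sigma$ while player $\neg i$ is following the restriction of $\tau$ to $U$ — which is a legal strategy in $G'$. Since $\sigma$ is winning in $G'$ from $v_0$, player $i$ wins this play; as $\tau$ was arbitrary, $v_0 \in \mathsf{win}_i(G)$, giving $\mathsf{win}_i(G') \subseteq \mathsf{win}_i(G)$.

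The one technical point to handle carefully — and the closest thing to an obstacle — is making sure that $\sigma$, which is defined only on $U$, is "enough" of a strategy in $G$: it tells player $i$ what to do only at nodes of $U$, but since we just proved the play never leaves $U$, player $i$ is never asked to move outside $U$, so $\sigma$ extended arbitrarily on $V(G) \setminus U$ is a perfectly good positional strategy in $G$ that is winning from $v_0$. The other mild subtlety is that a strategy of player $\neg i$ in $G$ may, at a node of $V_{\neg i} \cap U$, *want* to leave $U$; the $i$-closedness of $U$ says there is simply no such edge available, so this cannot happen. No nontrivial computation is involved; the argument is purely a "strategy-transfer / trap" argument, and everything else reduces to the definition of $i$-closed and prefix independence is not even needed.
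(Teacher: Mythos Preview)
Your proof is correct and is exactly the standard strategy-transfer argument for this well-known fact. Note that the paper itself does not prove this lemma at all: it is listed among several ``well-known results'' recalled from McNaughton, Zielonka, and Gr\"adel et al., and the reader is referred to those sources for a detailed exposition. So there is no paper proof to compare against; your argument is precisely what one finds in those references.
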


The next lemma shows that if we know some non-empty subset $U$ of the winning set of some player $\neg i$ in a game $G$, then computing the winning sets of both players in $G$ can be reduced to computing their winning sets in the smaller game $G - \reach_{\neg i}(U)$.
\begin{lemma}
\label{thm:winning-set-from-reachability-set}
 For any parity game $G$ and $i \in \{0,1\}$, if $U \subseteq \mathsf{win}_{\neg i}(G)$ and $U^* = \reach_{\neg i}(U)$, then $\mathsf{win}_{\neg i}(G) = U^* \cup \mathsf{win}_{\neg i}(G - U^*)$ and $\mathsf{win}_i(G) = \mathsf{win}_i(G - U^*)$.
\end{lemma}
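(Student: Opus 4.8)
The plan is to prove the two claimed identities by establishing inclusions in both directions, using the structural lemmas already available. Write $U^* = \reach_{\neg i}(U)$ and let $H = G - U^*$; note $H$ is well-defined since $V(G)\setminus U^*$ is $i$-closed by Lemma~\ref{thm:winning-sets-are-complement-closed}, hence in particular every node outside $U^*$ has an outgoing edge staying outside $U^*$.

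First I would handle the "easy" containments. For $U^* \cup \mathsf{win}_{\neg i}(H) \subseteq \mathsf{win}_{\neg i}(G)$: every node of $U^*$ lies in $\mathsf{win}_{\neg i}(G)$ because player $\neg i$ can follow a $(\neg i)$-reachability strategy to $U$, reach some node of $U \subseteq \mathsf{win}_{\neg i}(G)$, and then switch to her winning strategy there — this uses prefix independence, so the finite detour into $U^*$ does not affect the winner. For $\mathsf{win}_{\neg i}(H) \subseteq \mathsf{win}_{\neg i}(G)$, apply Lemma~\ref{thm:winning-sets-closed-under-subgames}: the node set of $H$ is exactly $V(G)\setminus U^*$, which is $(\neg i)$-closed, so $\mathsf{win}_{\neg i}(H) \subseteq \mathsf{win}_{\neg i}(G)$. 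Symmetrically, $\mathsf{win}_i(H) \subseteq \mathsf{win}_i(G)$ would follow from the same lemma \emph{if} $V(H)$ were $i$-closed — and it is, again by Lemma~\ref{thm:winning-sets-are-complement-closed}. So $\mathsf{win}_i(G) \supseteq \mathsf{win}_i(G-U^*)$ is immediate.

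The substantive direction is $\mathsf{win}_i(G) \subseteq \mathsf{win}_i(H)$; the two reverse inclusions for the $\neg i$ identity then follow by complementation, since in any parity game $\mathsf{win}_0$ and $\mathsf{win}_1$ partition the node set (determinacy), and likewise for $H$. So fix $v \in \mathsf{win}_i(G)$. First observe $v \notin U^*$: indeed $U \subseteq \mathsf{win}_{\neg i}(G)$ and $\mathsf{win}_{\neg i}(G)$ is $(\neg i)$-closed (Lemma~\ref{thm:winning-sets-are-closed}), and one checks that $\reach_{\neg i}$ of a $(\neg i)$-closed set is again $(\neg i)$-closed and still contained in $\mathsf{win}_{\neg i}(G)$ — actually the cleanest route is: $\mathsf{win}_i(G)$ is $i$-closed, so it is a trap for $\neg i$, meaning $\neg i$ cannot force the token from $\mathsf{win}_i(G)$ into $U$; hence $v \notin \reach_{\neg i}(U) = U^*$, so $v \in V(H)$. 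Now take a positional winning strategy $\sigma$ for player $i$ in $G$ from $v$. I claim $\sigma$, restricted to $V(H)$, never leaves $V(H)$: any play in $G$ starting at $v$ and consistent with $\sigma$ stays inside $\mathsf{win}_i(G)$ (a winning strategy cannot move the token out of its winning set, since that set is $i$-closed and is a trap for the opponent), and $\mathsf{win}_i(G) \subseteq V(G)\setminus U^*$ by the previous sentence applied to every reachable node. Therefore $\sigma$ is a legal strategy in $H$ from $v$, every resulting play in $H$ coincides with a play in $G$ consistent with $\sigma$, and hence is won by player $i$. Thus $v \in \mathsf{win}_i(H)$.

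The main obstacle is the verification that a winning strategy for player $i$ "cannot escape" its own winning set, i.e.\ that no play consistent with a winning strategy ever leaves $\mathsf{win}_i(G)$; this is where $i$-closedness of $\mathsf{win}_i(G)$ (Lemma~\ref{thm:winning-sets-are-closed}) and a short inductive argument on play prefixes are needed, together with the observation that leaving $\mathsf{win}_i(G)$ could only happen via an opponent move, contradicting closedness. Once this is in place, the rest is bookkeeping with prefix independence and determinacy.
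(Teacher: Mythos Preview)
The paper does not actually prove this lemma: it is listed among the ``well-known results'' in Section~\ref{sec:paritygames} with a pointer to Gr\"adel et al., so there is no in-paper argument to compare against. Your plan is the standard one and is essentially correct; it would be accepted as a proof.

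There is, however, one slip you should fix. You write that $V(H)=V(G)\setminus U^*$ is $(\neg i)$-closed and invoke Lemma~\ref{thm:winning-sets-closed-under-subgames} to conclude $\mathsf{win}_{\neg i}(H)\subseteq\mathsf{win}_{\neg i}(G)$. That is backwards: by Lemma~\ref{thm:winning-sets-are-complement-closed}, the complement of $\reach_{\neg i}(U)$ is $i$-closed, not $(\neg i)$-closed (indeed you use exactly this in the very next sentence for the inclusion $\mathsf{win}_i(H)\subseteq\mathsf{win}_i(G)$). So Lemma~\ref{thm:winning-sets-closed-under-subgames} does not directly yield $\mathsf{win}_{\neg i}(H)\subseteq\mathsf{win}_{\neg i}(G)$.

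Fortunately this step is not load-bearing. Once you have established $U^*\subseteq\mathsf{win}_{\neg i}(G)$ and the equality $\mathsf{win}_i(G)=\mathsf{win}_i(H)$ (both inclusions of which you argue correctly), the $\neg i$-identity follows purely from determinacy and the fact that $U^*$ and $V(H)$ partition $V(G)$:
\[
\mathsf{win}_{\neg i}(G)=V(G)\setminus\mathsf{win}_i(G)=\bigl(U^*\cup V(H)\bigr)\setminus\mathsf{win}_i(H)=U^*\cup\mathsf{win}_{\neg i}(H),
\]
where the last step uses $\mathsf{win}_i(H)\subseteq V(H)$. So simply delete the faulty appeal to Lemma~\ref{thm:winning-sets-closed-under-subgames} for player $\neg i$ and rely on the complementation you already sketch; the rest of your argument stands.
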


The next lemma complements Lemma~\ref{thm:winning-set-from-reachability-set} by providing a way to find a non-empty subset of the winning set of player $\neg i$ in a parity game~$G$ or to conclude that player $i$ can win from every node in $G$.
\begin{lemma}
\label{thm:recurse-after-reach-removable}
 Let $G$ be a parity game with largest priority $p_{\max}$ and let $V_{p_{\max}} \subseteq V(G)$ be the set of nodes with priority $p_{\max}$.
 Let $i = p_{\max} \pmod 2$ and let $G' = G - \reach_i(V_{p_{\max}})$.
 Then $\mathsf{win}_{\neg i}(G') \subseteq \mathsf{win}_{\neg i}(G)$.
 Also, if $\mathsf{win}_{\neg i}(G') = \emptyset$, then $\mathsf{win}_i(G) = V$, i.e., player~$i$ wins from every node of $G$.
\end{lemma}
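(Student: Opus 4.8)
The plan is to argue the two claims separately, relying on the closure lemmas proved earlier. First I would verify that $G' = G - \reach_i(V_{p_{\max}})$ is a well-defined sub-game: by Lemma~\ref{thm:winning-sets-are-complement-closed}, the set $V(G)\setminus\reach_i(V_{p_{\max}})$ is $(\neg i)$-closed, so in particular every node of $G'$ retains an outgoing edge inside $G'$, as required by the definition of a sub-game. For the first claim, note that $V(G')$ is $(\neg i)$-closed in $G$, so Lemma~\ref{thm:winning-sets-closed-under-subgames} (applied with the roles of the players as stated, i.e.\ with $\neg i$ in place of $i$) gives immediately $\mathsf{win}_{\neg i}(G')\subseteq\mathsf{win}_{\neg i}(G)$.

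The substantive part is the second claim: if $\mathsf{win}_{\neg i}(G')=\emptyset$, then player~$i$ wins from every node of $G$. Since $\mathsf{win}_{\neg i}(G')=\emptyset$, determinacy for the sub-game $G'$ gives $\mathsf{win}_i(G')=V(G')$, so player~$i$ has a positional winning strategy $\sigma'$ on all of $G'$. I would combine $\sigma'$ with an $i$-reachability strategy $\tau$ to $V_{p_{\max}}$ (which exists on $\reach_i(V_{p_{\max}})$ by definition) to build a positional strategy $\sigma$ for player~$i$ on all of $G$: on nodes of $V(G')\cap V_i$ follow $\sigma'$, and on nodes of $\reach_i(V_{p_{\max}})\cap V_i$ follow $\tau$. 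The claim is that $\sigma$ is winning from every node. Consider any play consistent with $\sigma$. If after some point the play stays forever inside $V(G')$, then it is a play in $G'$ consistent with $\sigma'$ and hence won by player~$i$ — here I use prefix independence to discard the finite prefix outside $V(G')$. Otherwise the play visits $\reach_i(V_{p_{\max}})$ infinitely often; each time it enters that set, the reachability strategy $\tau$ forces a visit to a node of priority $p_{\max}$ before the play can leave — and it can only leave through player-$i$ moves that $\tau$ will not make, or through player-$(\neg i)$ moves, but $(\neg i)$-closedness of $V(G')$ means player $\neg i$ cannot push the play from $V(G')$ back into $\reach_i(V_{p_{\max}})$, so once the play is in $\reach_i(V_{p_{\max}})$ it stays until a priority-$p_{\max}$ node is seen. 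Hence $p_{\max}$ occurs infinitely often and is the largest priority, so since $i\equiv p_{\max}\pmod 2$ player~$i$ wins.

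The main obstacle I anticipate is the bookkeeping in the last case: one has to argue carefully that the play cannot oscillate between $\reach_i(V_{p_{\max}})$ and $V(G')$ in a way that visits $\reach_i(V_{p_{\max}})$ infinitely often yet sees priority $p_{\max}$ only finitely often. The key structural fact that rules this out is that every entry into $\reach_i(V_{p_{\max}})$ must happen at a node from which $\tau$ guarantees reaching $V_{p_{\max}}$, and the $(\neg i)$-closedness of the complement guarantees the play cannot be knocked out of $\reach_i(V_{p_{\max}})$ by the opponent before that happens; the one move that could leave is a player-$i$ move, which $\sigma$ declines to make while still inside $\reach_i(V_{p_{\max}})$. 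Making this precise — essentially showing $\tau$ is a genuine attractor strategy whose every maximal sojourn in $\reach_i(V_{p_{\max}})$ either ends at $V_{p_{\max}}$ or never ends — is the only place where a short but genuine argument is needed; everything else is a direct appeal to Lemmas~\ref{thm:winning-sets-are-complement-closed} and~\ref{thm:winning-sets-closed-under-subgames}, determinacy, and prefix independence.
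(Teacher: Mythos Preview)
The paper does not give its own proof of this lemma; it is stated among the well-known facts recalled from the literature (with a pointer to Gr\"adel et al.), so there is no in-paper argument to compare against. Your overall strategy---combine a positional winning strategy $\sigma'$ on $G'$ with an $i$-reachability strategy $\tau$ on $\reach_i(V_{p_{\max}})$, then split into ``the play eventually stays in $V(G')$'' versus ``the play visits $\reach_i(V_{p_{\max}})$ infinitely often''---is exactly the standard argument, and your derivation of the first claim from Lemmas~\ref{thm:winning-sets-are-complement-closed} and~\ref{thm:winning-sets-closed-under-subgames} is clean.

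There is, however, a slip in your justification of the key step in the second case. You invoke ``$(\neg i)$-closedness of $V(G')$'' to conclude that the opponent cannot knock the play out of $\reach_i(V_{p_{\max}})$ before $V_{p_{\max}}$ is reached, and separately that player~$\neg i$ ``cannot push the play from $V(G')$ back into $\reach_i(V_{p_{\max}})$''. Neither follows: $(\neg i)$-closedness of $V(G')$ says that player~$\neg i$ \emph{can} stay in $V(G')$ and player~$i$ \emph{must} stay---it is a statement about nodes of $V(G')$, not of $\reach_i(V_{p_{\max}})$. In fact player~$\neg i$ \emph{may} move from $V(G')$ into $\reach_i(V_{p_{\max}})$; that is precisely how the infinitely-many-visits case arises. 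What you actually need is the attractor property itself: by the inductive construction of $\reach_i(V_{p_{\max}})$, every $u\in V_{\neg i}\cap(\reach_i(V_{p_{\max}})\setminus V_{p_{\max}})$ has \emph{all} successors in $\reach_i(V_{p_{\max}})$, and $\tau$ strictly decreases the attractor rank, so each sojourn in $\reach_i(V_{p_{\max}})$ reaches $V_{p_{\max}}$ within finitely many steps. With this correction (and after specifying $\sigma$ arbitrarily on $V_{p_{\max}}\cap V_i$, where $\tau$ need not be defined), your argument goes through.
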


\section{Kernelization of Parity Games}
\label{sec:reductionofparitygames}
In this section, we describe some reduction rules for parity games.
Theses rules are such that we can efficiently compute the winning sets of the original parity game once we know the winning sets of the reduced game.

\subsection{General Parity Games}
\begin{lemma}
\label{lemma:reduction}
  Any parity game $G = (V_0 \uplus V_1,E,p)$ can be transformed in time $O(pmn)$ to a parity game $G' = (V'_0 \uplus V'_1,E',p')$ with $V'_1\subseteq V_1$ such that 
  \begin{itemize}
    \item there are no edges inside~$V_1'$, and
    \item for each node $v \in V'_0$ either $N_G^+(v) \subseteq V'_1$ or $N_G^-(v) \subseteq V'_1$, and
    \item $|V'_0| \leq \min\{n + p k, (p+1)^k+p k\}$, where $k = |V_1|$.
  \end{itemize}
   Moreover, $G$ and $G'$ have the same winning sets on $V'_1$ and the winner of the remaining nodes of $G$ can be computed either during the transformation or from the winning sets of $G'$ in linear time.
\end{lemma}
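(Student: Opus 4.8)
The plan is to reduce $G$ in two phases: first we strip off the parts of $V_0$ on which player~$0$ can ``settle'' without ever reaching $V_1$, and then we replace every stretch of play that runs inside $V_0$ between two consecutive visits to $V_1$ by a constant-size gadget recording only which node of $V_1$ is reached next and the largest priority seen in between.

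\emph{Phase~1 (preprocessing).} Regarding the moves of player~$0$ inside $V_0$ as a one-player parity game on $G[V_0]$, one finds in polynomial time the set $W\subseteq V_0$ of nodes from which player~$0$ can force an infinite play staying in $V_0$ with even largest priority; such a play is winning in $G$ as well, so $W\subseteq\mathsf{win}_0(G)$, and by Lemma~\ref{thm:winning-set-from-reachability-set} (together with Lemmas~\ref{thm:winning-sets-are-complement-closed} and~\ref{thm:computing-reach-linear-time}) it suffices to continue with $G-\reach_0(W)$, recording the removed winners. Symmetrically one removes the $1$-reachability set of the $V_0$-nodes of the resulting game that can neither reach $V_1$ nor force an even play inside $V_0$ (these lie in $\mathsf{win}_1(G)$). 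Iterating until stable yields a sub-game $\hat G$ in which no $V_0$-node can force an even play inside $\hat G[V_0]$ and every $V_0$-node can reach $V_1$ in $\hat G$; henceforth we write $G$ for $\hat G$, and note that $|V_1|,|V_0|,p$ have not increased.

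\emph{Phase~2 (gadgets).} For each edge $(x,y)$ with $x,y\in V_1$ we introduce one fresh player-$0$ node $r_y$ of priority $p(y)$ with a single out-edge to $y$ and reroute the edge as $x\to r_y\to y$; since $y$ is visited anyway no play changes its winner, at most $|V_1|$ nodes are added, and now there are no edges inside $V_1$. Next, for $u\in V_0$ and $w'\in V_1$ we set $f_u(w')=\bot$ if no path inside $V_0$ joins $u$ to an in-neighbour of $w'$, and otherwise let $f_u(w')$ be the priority that player~$0$ most prefers (smallest $p_1$-value) among the possible values of the largest priority on such a path together with $p(u)$; all functions $f_u$ are computed by $O(pk)$ reachability computations and the whole transformation runs in $O(pmn)$ time. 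We then replace each edge $(w,u)$ with $w\in V_1$, $u\in V_0$, by $w\to D_{f_u}$, where $D_f$ is a player-$0$ ``decision'' node of priority $\min_v p(v)$ created once per occurring type $f$, and we give $D_f$ an out-edge to a player-$0$ ``exit'' node $e_{w',f(w')}$ --- of priority $f(w')$, with a single out-edge to $w'$ --- for each $w'$ with $f(w')\neq\bot$; finally we delete every original node of $V_0$, obtaining $G'$. In $G'$ there are no edges inside $V_1'\subseteq V_1$; every node of $V_0'$ is an exit or rerouting node (all out-neighbours in $V_1'$) or a decision node (all in-neighbours in $V_1'$); no new priorities appear; and since each decision node corresponds to an occurring type $f_u$, there are at most $\min\{n,(p+1)^k\}$ of them and at most $pk$ exit and rerouting nodes, whence $|V_0'|\le\min\{n+pk,(p+1)^k+pk\}$ with $k=|V_1|$.

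\emph{Correctness.} It remains to show $\mathsf{win}_i(G)\cap V_1'=\mathsf{win}_i(G')\cap V_1'$ for $i\in\{0,1\}$; the winners on the remaining nodes of $G$ then follow, the ones removed in Phase~1 from the records, and a surviving $u\in V_0$ from the fact that $u\in\mathsf{win}_0(G)$ iff player~$0$ can reach, inside $V_0$, an in-neighbour of some $w'\in\mathsf{win}_0(G')\cap V_1'$ --- a single reachability computation (Lemma~\ref{thm:computing-reach-linear-time}). For the equality we fix strategies and match the plays of $G$ and $G'$ between consecutive $V_1$-visits: a segment of the $G$-play from $w$ to $w'$ realised by player~$0$ has some largest priority, whereas its image $w\to D_{f_u}\to e_{w',f_u(w')}\to w'$ in $G'$ contributes exactly $f_u(w')$, and both plays visit the same sequence of $V_1$-nodes, so by prefix independence the winner depends only on the priorities of the recurring segments. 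The delicate point --- and the main obstacle --- is that the winning condition, being the parity of the \emph{numeric} maximum over the priorities occurring infinitely often, is not monotone under replacing a segment's largest priority by the player-$0$-optimal value $f_u(w')$. One handles this segment by segment: the two plays either induce the same set of infinitely recurring priorities (this is the case whenever player~$0$ is made to realise $f_u(w')$ exactly on each segment, which she can always do in $G$ by definition of $f_u$ and is forced to do by the gadget in $G'$), or the set in one play arises from that in the other by replacing some recurring priorities with priorities player~$0$ prefers at least as much; a short case analysis on player~$0$'s preference order (any priority strictly preferred to an odd value $q$ is even, or odd and smaller than $q$) then shows that in the latter case the parity of the numeric maximum can change only in favour of the player who was already winning. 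Hence the winner is preserved, and combining this with the bookkeeping from Phase~1 (Lemmas~\ref{thm:winning-sets-closed-under-subgames} and~\ref{thm:winning-set-from-reachability-set}) proves the lemma.
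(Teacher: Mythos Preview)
Your approach is essentially the paper's: preprocess $V_0$ to strip off the trivially decided parts (even cycles for player~$0$, dead ends for player~$1$), then summarise each $V_0$-stretch between two $V_1$-visits by a gadget recording only the target $V_1$-node together with the player-$0$-optimal maximum priority seen, and finally merge gadgets of identical type to get the $(p{+}1)^k$ bound. Two small remarks. First, your iteration in Phase~1 is in fact needed (a single pass can leave a $V_0$-component that could reach $V_1$ before the removal of $\reach_1(W_1)$ but not after), so this is slightly more careful than the paper's write-up. Second, the monotonicity conclusion should read ``in favour of player~$0$'' rather than ``of the player who was already winning'': replacing recurring priorities by ones player~$0$ prefers can turn a player-$1$-winning play into a player-$0$-winning one, so the claim as phrased is false; but ``in favour of player~$0$'' is exactly the half you need (if player~$0$ wins in $G$ she still wins in $G'$), and the other half is supplied by your exact-realisation observation (player~$0$ can mimic any $G'$-strategy in $G$ by choosing paths that realise $f_u(w')$).
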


\begin{proof}
  We will modify $G$ in multiple steps. We will slightly abuse notation and refer 
  in every step to the parity game that we obtained in the step before as~$G = (V_0 \uplus V_1,E,p)$.
  First we eliminate all edges inside~$V_1$.
  This can easily be achieved by adding for each edge $e = (v,w) \in E$ with $v,w \in V_1$
  a new node~$v_e$ with~$p'(v_e) = p(w)$ to $V_0$ and by replacing the edge~$e$ by the two edges~$(v,v_e)$ and~$(v_e,w)$.
  Since the new node~$v_e$ has only a single outgoing edge, this transformation does neither change the winning sets
  nor the winning strategies.

  Next, we remove certain cycles inside~$V_0$ from the game. Let~$W_0\subseteq V_0$ denote all nodes in~$V_0$ that
  are part of at least one cycle that lies completely inside~$V_0$ and whose highest priority is even. Clearly player~0
  can win from all nodes in~$\reach_0(W_0)$ by enforcing that such a cycle is entered and never left again. 
  Hence, we can remove~$\reach_0(W_0)$ from the game according to Lemma~\ref{thm:winning-set-from-reachability-set}. 
  Let~$W_1\subseteq V_0$ denote all nodes that are left in~$V_0$ and from which player~0 cannot reach~$V_1$.
  Then all paths that start in some node $u\in W_1$ must end in some cycle that is completely contained in $V_0$.
  Since we have removed all cycles whose highest priority is even, the maximum priority of this cycle must be odd.
  Thus, player~1 wins from all nodes in~$\reach_1(W_1)$. Hence, we can also remove~$\reach_1(W_1)$ from the game according to Lemma~\ref{thm:winning-set-from-reachability-set}.

  We use again the notation~$G = (V_0 \uplus V_1,E,p)$ to refer to the parity game obtained after the previously discussed steps. 
  Since we have removed all cycles from~$V_0$ whose highest priority is even, player~$0$ loses for sure if she does
  not leave~$V_0$. Hence, we can assume without loss of generality that the play leaves~$V_0$ from every starting
  node if player~$0$ plays an optimal strategy. Then for every node~$v\in V_0$ player~0 uses a (possibly empty) path inside~$V_0$
  followed by an edge that leads to some node~$w \in V_1$. 
  To determine the winning sets of a strategy of player~$0$ it is not important to know the exact paths player~$0$ chooses.
  Rather, it suffices to know for each $v \in V_0$ which node $w \in V_1$ will be reached and what the highest priority on the chosen $v$-$w$-path is.
  To get rid of long paths, we add $p \cdot |V_1|$ new nodes to~$V_0$, one node $v(p',w)$ for each pair of a priority $p'$ and a node $w\in V_1$.
  Node $v(p',w)$ has a priority $p'$ and its only out-neighbor is $w$.
  The winner does not change if player~$0$ goes from~$v\in V_0$ directly to~$v(p',w)$ and from there directly to~$w\in V_1$
  instead of taking some other path from~$v$ inside~$V_0$ with maximum priority~$p'$, followed by an edge that leads to~$w$.
	For all such paths we add the corresponding edge~$(v,v(p',w))$ and can therefore delete all edges inside $V_0$ that do not end in one of the new nodes~$v(p',w)$
  without changing the winning sets of the game. Observe that this ensures that all out-neighbors of the 
  new nodes~$v(p',w)$ belong to~$V_1$ while all in-neighbors of the old nodes~$v\in V_0$ belong to~$V_1$. 

  It can be the case that for some pair~$(v,w)\in V_0\times V_1$ there are multiple nodes~$v(p',w)$ that can be
  reached from~$v$. We can assume without loss of generality that if player~$0$ decides to go from~$v$ to~$w$ via one of these nodes then
  she chooses the one that is best for her, i.e., the one with lowest~$p_1$-value. All edges from~$v$ to other nodes~$v(p',w)$
  can be removed.    

  The inequality~$|V'_0| \leq n+\min\{m,k^2\}+pk$ follows directly from the previously discussed construction:
  initially~$V_0$ consists of~$n-k\le n$ nodes, there are at most~$\min\{m,k^2\}$ edges inside~$V_1$ for which we create a new node~$v_e$,
  and there are only~$pk$ new nodes~$v(p',w)$. To get rid of the term~$\min\{m,k^2\}$ we can identify each node~$v_e$, which derived from an edge~$e=(v,w)$ inside~$V_1$, with the node~$v(p(w),w)$. 
  This ensures that there are only~$pk$ new nodes.
	To show that $|V'_0| \leq (p+1)^k + pk$ we can reduce the number of old nodes in $V_1$ to ensure that at most $(p+1)^k$ remain.
  At first we remove all nodes $v \in V_0$ with $N_G^-(v) = \emptyset$, because they obviously cannot be part of a cycle and we can compute in linear time to which winning set they belong, once we know to which winning set their out-neighbors belong. Now let $v$ and $v'$ be two such nodes in $V_0$ with $N_G^+(v) = N_G^+(v')$.
  We then identify $v$ and $v'$ without changing the winning sets in $V_1$, since all nodes in $N_G^+(v)$ must have a priority at least as high as $\max\{p(v),p(v')\}$. This is because the priority of any node in $N_G^+(v)$ ($N_G^+(v')$) corresponds to the highest priority on a path that starts in $v$ ($v'$) and therefore must be at least $p(v)$ ($p(v')$).
  Afterwards there remains at most one node $v\in V_0$ for each possible set $N_G^+(v)$.

  Since $N_G^+(v)$ can contain at most one new node corresponding to $w$ for each $w \in V_1$ and there are $p$ different ones to choose from there are at most $(p+1)^k$ different possibilities for $N_G^+(v)$.

  It remains to analyze the run time of the transformation.
  We consider the different steps of the reduction separately.
  The first step of removing all edges inside $V_1$ can be performed in $O(m)$ because we only need to check for every edge $e=(v,w) \in E$ if~$v,w \in V_1$ and then remove one edge and add two edges and a node.
  The second step of removing dominions completely inside $V_0$ can be executed in time $O(\log(p) \cdot m)$ as follows.
  First, we solve the solitary game on $V_0 \setminus \reach_1(V_1)$ and remove the $0$-reachability-set of its $0$-winning set; this can be done in time $O(\log{p} \cdot m)$ \cite{BerwangerEtAl2004}.
  Thereafter, we compute the $1$-reachability set of $V_0 \setminus \reach_0(V_1)$ and remove it; this can be done in time $O(m)$~\cite{JurdzinskiEtAl2008}.
  The third step of removing long paths inside $V_0$ can be performed as follows.
  The algorithm computes the best priority for player~$0$ that a path in $V_0$ from a node $v \in V_0$ to a node $w \in V_1$ can have.
  To determine which nodes in $v\in V_0$ can reach which nodes $w \in V_1$ via a path in $V_0$ whose highest priority has fixed value of $p'$, consider the subgraph $G^{\leq p'}$ of $G$ that is induced by the set~$V^{\leq p'}$ of nodes of priority at most $p'$ and remove from it those edges that start in $V_1$.
  We then consider the set of nodes with priority exactly $p'$ and compute by DFS in time~$O(m)$ all nodes in $V_1$ reachable from them.
  Then we compute with DFS for each node in $V_0$ which of the nodes with priority $p'$ they can reach.
This takes a total of at most $2n$ applications of DFS for each priority and therefore in total $O(p m n)$ time.
  In the last step where we remove and contract some of the nodes in~$V_0$ we can find all nodes without incoming edges in time $O(m)$ and we can order all remaining nodes by their outgoing edges in time  $O(|V_1| \cdot (n + p+1))$ using a version of radix-sort, where we view the set of out-neighbors as an $(p+1)$-adic number with $|V_1|$ digits.
  Thereafter, in linear time we identify sets of nodes with the same outgoing neighbors and identify them in total time $O(n+m)$.
\end{proof}

\subsection{Bipartite Parity Games}

In this section we give some reduction rules that efficiently reduce any bipartite game $G = (V_0 \uplus V_1,E,p)$ to a structurally simpler bipartite game $G'=(V'_0 \uplus V'_1,E',p')$, such that the winning sets of $G$ can be efficiently recovered from the winning sets of $G'$.
After exhaustive application of the reduction rules, the reduced game~$G'$ will have size bounded by some function of $k$ and $p$ only, independent of the size of~$G$.

The digraphs of our underlying parity game may have self-loops and bidirected edges, but (without loss of generality) no parallel edges between the same two nodes.
Thus, whenever parallel edges arise during the application of one of the reduction rules, we remove one of them without explicit mention.

\begin{lemma}
\label{outdegreecriteria}
  Let $G = (V_0\uplus V_1,E,p)$ be a bipartite parity game, and let $u,v\in V_0$ be such that $N_G^+(v) \subseteq N_G^+(u)$ and $p_1(v) \geq p_1(u)$.
  Let $G'$ be the parity game obtained from $G$ by deleting the edges~$\{(w,u) \in E\mid (w,v)\in E\}$.
  Then the winning sets of $G$ and $G'$ are equal.
\end{lemma}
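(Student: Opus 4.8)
The plan is to show that a winning strategy for either player in $G$ can be converted into a winning strategy in $G'$ and vice versa, using the hypothesis $N_G^+(v) \subseteq N_G^+(u)$ together with $p_1(v) \geq p_1(u)$ (so player~$1$ weakly prefers $v$ to $u$). Since $G'$ is obtained from $G$ only by deleting some incoming edges of $u$, any play in $G'$ is also a play in $G$, so $\mathsf{win}_0(G') \subseteq \mathsf{win}_0(G)$ and $\mathsf{win}_1(G') \subseteq \mathsf{win}_1(G)$ trivially; the work is in the reverse inclusions. By the determinacy theorem it suffices to prove one direction, say $\mathsf{win}_0(G) \subseteq \mathsf{win}_0(G')$ and $\mathsf{win}_1(G) \subseteq \mathsf{win}_1(G')$, but I will describe the strategy transfer for both players explicitly since they are not quite symmetric.

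First I would handle player~$0$. Suppose $s_0$ is a positional winning strategy for player~$0$ from some node in $G$. Since $G$ is bipartite and $u,v \in V_0$, the only edges that $G'$ removes are edges from nodes $w \in V_1$ into $u$; none of player~$0$'s own moves are removed, so $s_0$ is still a valid positional strategy in $G'$. The issue is whether it is still winning: a play in $G'$ following $s_0$ might differ from the corresponding play in $G$ because player~$1$ is now forced to move elsewhere at a node $w$ where she previously could have moved to $u$. I would argue that this does not help player~$1$: intuitively, the edge $(w,v)$ still exists in $G'$ (since we only deleted $(w,u)$ when $(w,v) \in E$), and because $N_G^+(v) \subseteq N_G^+(u)$ and $p_1(v) \ge p_1(u)$, player~$1$ moving to $v$ instead of $u$ leads to a "position at least as good for player~$1$" — yet player~$1$ was losing against $s_0$ from such positions in $G$. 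Making this precise requires a strategy-stealing style argument on the arena: I would show that the set $\mathsf{win}_0(G)$ is still $0$-closed in $G'$ and is a trap for player~$1$ in $G'$, by checking the closure conditions node by node, the only nontrivial case being a node $w \in V_1 \cap \mathsf{win}_0(G)$ that loses the edge to $u$; here I use that $u \in \mathsf{win}_0(G)$ forces (via $N^+(v)\subseteq N^+(u)$, $p_1(v)\ge p_1(u)$, and prefix independence / the cycle characterization) that $v \in \mathsf{win}_0(G)$ as well, so $w$ still has an out-neighbor in $\mathsf{win}_0(G)$ within $G'$. Then by Lemma~\ref{thm:winning-sets-closed-under-subgames}-style reasoning (restricting $G'$ to this trap and playing $s_0$) player~$0$ wins in $G'$ from all of $\mathsf{win}_0(G)$.

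Next, player~$1$: suppose $s_1$ is positional winning for player~$1$ from some node in $G$. Modify it to $s_1'$ by rerouting: whenever $s_1(w) = u$ for some $w \in V_1$, the edge $(w,u)$ is present in $G$ only if — wait, $(w,u)$ might or might not have been deleted; it is deleted exactly when $(w,v) \in E$. If $(w,u)$ survives, keep $s_1'(w) = s_1(w) = u$. If $(w,u)$ was deleted, then $(w,v) \in E$ and $v \in V_0 = $ a node of player~$0$, but $(w,v)$ is an edge out of $w \in V_1$, so it is available in $G'$; set $s_1'(w) = v$. I then claim $s_1'$ is winning in $G'$: consider any play in $G'$ consistent with $s_1'$; I want to exhibit a corresponding play in $G$ consistent with $s_1$ with the same $\limsup$ parity, or directly analyze the cycle created. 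At every step where $s_1'$ chose $v$ in place of $u$, player~$0$ at $v$ has fewer options ($N^+(v) \subseteq N^+(u)$) and the detour through $v$ contributes priority $p(v)$ with $p_1(v) \ge p_1(u)$, so player~$1$ is at least as happy; I would formalize this by comparing the cycle $C' = C(s_0, s_1', v_0)$ in $G'$ against a cycle in $G$ obtained by replacing each occurrence of the sub-walk $w \to v \to x$ (with $x = s_0(v) \in N^+(v) \subseteq N^+(u)$) by $w \to u \to x$, which is a legal play fragment in $G$ consistent with $s_1$, and noting $\max p_1$ over the cycle can only decrease or stay the same under this substitution, so if player~$1$ won the original she wins the modified.

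The main obstacle I anticipate is the player~$1$ direction, specifically the bookkeeping needed to turn "the detour through $v$ instead of $u$ is at least as good for player~$1$" into a rigorous statement about the parity of the largest priority seen infinitely often — because the rerouted play in $G'$ visits $v$ where the reference play in $G$ visited $u$, and one must be careful that the substitution produces a genuine infinite play (not just a finite fragment) that is consistent with $s_1$ and whose winning cycle has the right parity. Handling this cleanly is easiest via the positional/cycle characterization stated in Section~\ref{sec:paritygames}: reduce to comparing the single cycles $C(s_0,s_1,\cdot)$ and $C(s_0,s_1',\cdot)$, where the substitution $u \leftrightarrow v$ on the cycle is finite and the priority comparison $p_1(v) \ge p_1(u)$ directly controls $\max_{x \in V(C)} p_1(x)$.
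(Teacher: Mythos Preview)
Your approach is essentially the paper's: reroute player~1's winning strategy through $v$ instead of $u$ at every deleted edge, then compare the resulting cycle in $G'$ with the corresponding (closed) walk in $G$ obtained by swapping $v$ back to $u$, using $N^+(v)\subseteq N^+(u)$ to make the walk legal and $p_1(v)\ge p_1(u)$ to control the winner. The paper does this one edge at a time and by contradiction (it assumes a counter-strategy $s_0'$ for the rerouted $s_1'$, then builds a non-positional $s_0$ beating $s_1$), but this is the same mechanism as your direct cycle comparison.

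Two things need cleaning up. First, you have the easy and hard directions swapped: since $G'$ only deletes edges out of $V_1$-nodes, every play in $G'$ is already a play in $G$, so $\mathsf{win}_0(G)\subseteq\mathsf{win}_0(G')$ (equivalently $\mathsf{win}_1(G')\subseteq\mathsf{win}_1(G)$) is the trivial direction, not $\mathsf{win}_0(G')\subseteq\mathsf{win}_0(G)$. Your entire player-0 paragraph is thus unnecessary, and the closure check you sketch is for the wrong condition (for $w\in V_1$ in a $0$-closed set one needs \emph{all} out-neighbours in the set, which only gets easier after deletion). The sole nontrivial inclusion is $\mathsf{win}_1(G)\subseteq\mathsf{win}_1(G')$, and that is precisely your player-1 paragraph.

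Second, in that paragraph, ``$\max p_1$ over the cycle'' does not determine the winner (priorities $\{3,4\}$ have $\max p_1=3$ yet player~0 wins), so the clause ``$\max p_1$ can only decrease under the substitution'' is true but not the right justification, and as phrased your inference runs the wrong way. What is actually needed---and what the paper also asserts without detail---is: if the highest priority on the $G$-walk (with $u$) is odd, then the highest priority on $C'$ (with $v$) is odd. This follows by a short case split; the only interesting case is when $p(v)$ dominates the rest of $C'$, and there $p(v)>p(u)$ combined with $p_1(v)\ge p_1(u)$ forces $p(v)$ to be odd.
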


\begin{proof}
  We show that an edge $(w',u) \in \{(w,u) \in E\mid(w,v) \in E\}$ can only be part of a winning strategy for player~$1$ on node $w'$ if the edge $(w',v)$ is part of a winning strategy for player~$1$ on $w'$ as well.
  Therefore, after deleting $(w',u)$, player~$1$ wins from $w'$ in $G'$ if and only if he wins from $w'$ in $G$.
  Deleting the edges in $\{(w,u) \in E\mid(w,v)\in E\}$ does therefore not change the winning sets.

  Assume that player~$1$ has a winning strategy $s_1: V_1 \rightarrow V_0$ for $w'$ with $s_1(w') = u$.
  Let $s_1': V_1 \rightarrow V_0$ be defined by $s_1'(w') = v$ and $s_1'(w) = s_1(w)$ for all $w \in V_1\setminus\{w'\}$.
  We claim that $s_1'$ is a winning strategy for player~$1$ on $w'$ as well.
  Assume that there exists a counter strategy $s_0'$ for $s_1'$ such that player~$0$ wins the game with starting node $w'$.
  We will define a strategy $s_0$ for player~$0$ and show that $s_0$ is a counter strategy for $s_1$. 
  Note that $s_0$ will not necessarily be a positional strategy.
  For all $w \in V_0\setminus\{u\}$, $s_0$ chooses the same successor as $s_0'$, but on $u$ it might change its behavior.
  Each time the play encounters $u$ directly after encountering~$w'$, strategy $s_0$ chooses $s_0'(v)$ as the successor of $u$.
  Every other time the play encounters~$u$, strategy $s_0$ chooses~$s_0'(u)$ as the successor of $u$.
  
  The play defined by $s_0'$ and $s_1'$ can then be transformed into the play defined by $s_0$ and $s_1$ by replacing every appearance of the sequence $w',v,s_0'(v)$ with the sequence $w',u,s_0'(v)$.
  Let $C'$ be the cycle created by~$s_0'$ and $s_1'$; then~$C'$ is a winning cycle for player~$0$.
  Then $s_0$ and $s_1$ will also create the cycle $C'$, if~$C'$ does not contain the sequence $w',v,s_0'(v)$.
  Let us therefore assume that $C'$ contains the sequence $w',v,s_0'(v)$.
  Let~$C$ be the closed walk obtained, when replacing $v$ with $u$ in $C'$.
  After a finite prefix the play defined by $s_0$ and~$s_1$ will consist of an infinite recurrence of $C$.
  Since we have $p_1(v) \geq p_1(u)$ player~$0$ is wining the play defined by $s_0$ and $s_1$.
  This contradicts that $s_1$ is a winning strategy for player~$1$.
\end{proof}

\begin{lemma}
\label{equalitycriteria}
  Let $G = (V_0\uplus V_1,E,p)$ be a bipartite parity game, and let $u,v\in V_0$ be nodes with~$N_G^+(u) = N_G^+(v)$ and~$p(v) = p(u)$.
  Let $G'$ be the parity game obtained from~$G$ by contracting~$u$ and~$v$ into a new node $v'$ with priority $p(v)$.
  Then~$u$ and~$v$ belong to the same winning set~$\mathsf{win}_i(G)$ in~$G$ and~$v'$ belongs to the winning set~$\mathsf{win}_i(G')$ of the same player in~$G'$.
  For all other nodes the winning sets of~$G$ and~$G'$ coincide.  
\end{lemma}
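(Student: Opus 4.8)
The plan is to present the contraction as a quotient of $G$ and to push winning strategies through it in both directions. Write $V(G')=(V(G)\setminus\{u,v\})\cup\{v'\}$ and let $\phi\colon V(G)\to V(G')$ be the map sending both $u$ and $v$ to $v'$ and fixing every other node. First I would record a short list of compatibility properties of $\phi$, all immediate from bipartiteness together with the hypotheses $N_G^+(u)=N_G^+(v)$ and $p(u)=p(v)$: $\phi$ respects the partition $V_0\uplus V_1$ and maps $E$ into $E'$; it preserves priorities, in the sense that $p'(\phi(x))=p(x)$ for all $x$; the out-neighbours of $v'$ in $G'$ are exactly the common set $N_G^+(u)=N_G^+(v)\subseteq V_1$, so this whole set consists of legal successors of $u$ as well as of $v$; every edge $(w,v')\in E'$ with $w\neq v'$ comes from an edge $(w,u)\in E$ or $(w,v)\in E$; and $v'$ carries no self-loop. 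These properties, and only these, are where the hypotheses of the lemma get used.

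The core of the argument is a one-directional \emph{simulation claim}: for each player $i\in\{0,1\}$ and each node $x\in\mathsf{win}_i(G)$ one has $\phi(x)\in\mathsf{win}_i(G')$. To prove it, fix a positional strategy $s_i$ for player~$i$ that is winning from $x$ in $G$, and let player~$i$ play in $G'$ from $\phi(x)$ by simulating a play in $G$: the construction keeps track of a ``virtual'' node $g\in V(G)$ satisfying $\phi(g)=$ the current node of the $G'$-play, initialised with $g=x$. At a $G'$-node $\phi(g)$ belonging to player~$i$ (so $g\in V_i$), player~$i$ moves to $\phi(s_i(g))$ and resets $g$ to $s_i(g)$, which is legal because $\phi$ maps the edge $(g,s_i(g))$ to an edge of $G'$. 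When instead the opponent moves from $\phi(g)$ to some node $r'$, one resets $g$ to any $\phi$-preimage of $r'$ that is an out-neighbour of the old $g$ in $G$; the compatibility list guarantees that such a preimage always exists, the only nontrivial case being $r'=v'$, where one of $u,v$ does the job. The invariant $\phi(g)=$ current node is thereby maintained, so the priority sequence of the $G'$-play coincides with that of the sequence of virtual nodes. But the latter sequence is a genuine play of $G$ that starts at $x$ and is consistent with $s_i$, hence is won by player~$i$; since the two priority sequences agree, player~$i$ also wins the $G'$-play. As the opponent played arbitrarily, $\phi(x)\in\mathsf{win}_i(G')$, as claimed.

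Finally I would upgrade the simulation claim to the full statement using determinacy, which the excerpt supplies for both $G$ and $G'$. Since $\mathsf{win}_0(G)\uplus\mathsf{win}_1(G)=V(G)$ and $\mathsf{win}_0(G')\uplus\mathsf{win}_1(G')=V(G')$, and the claim shows $\phi(\mathsf{win}_i(G))\subseteq\mathsf{win}_i(G')$ for both $i$, comparing these two partitions (and using that $\phi$ is onto) forces $\phi$ to map $\mathsf{win}_i(G)$ onto $\mathsf{win}_i(G')$ for each $i$, hence $x\in\mathsf{win}_i(G)\iff\phi(x)\in\mathsf{win}_i(G')$ for every node $x$ and every $i$. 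Specialising this equivalence settles all three assertions at once: $\phi(u)=\phi(v)=v'$, so $u$ and $v$ lie in the winning set of one and the same player~$i$, and $v'\in\mathsf{win}_i(G')$; while for any node $w\notin\{u,v\}$ we have $\phi(w)=w$, so $w\in\mathsf{win}_i(G)\iff w\in\mathsf{win}_i(G')$. I expect the only real obstacle to be the bookkeeping in the simulation step — checking that an opponent move into $v'$ can always be lifted to a legal move in $G$ and that the sequence of virtual nodes is indeed a play consistent with $s_i$ — whereas verifying the compatibility properties of $\phi$ and running the determinacy argument are routine.
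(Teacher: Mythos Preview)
Your proof is correct and takes a genuinely different route from the paper's. The paper argues symmetrically: it observes that player~$0$ may as well choose the same successor for $u$ and for $v$, and then sets up an explicit bijection between \emph{positional} strategies in $G$ and in $G'$ (identifying $v'$ with $u$ and $v$), checking that matched strategy pairs create cycles with the same priority profile. Your argument is asymmetric: you only transport winning strategies in one direction, from $G$ to $G'$, via a history-dependent simulation that tracks a virtual preimage $g$, and then invoke determinacy of both games to get the reverse implication for free. What your approach buys is economy---one direction plus determinacy replaces two explicit constructions---and a cleaner treatment of the delicate point (lifting an opponent move into $v'$), which you isolate precisely. What the paper's approach buys is that it actually produces positional winning strategies in $G'$ from positional ones in $G$ and vice versa; your simulated strategy in $G'$ is in general non-positional, which is harmless for the lemma as stated but would matter if one later wanted to read off positional strategies directly from the reduction.
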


\begin{proof}
  Note that $u$ and $v$ belong to the winning set of the same player~$i$ in $G$.
  We can assume that player~$0$ chooses the same successor for $u$ and~$v$ in her optimal strategy. 
  Then no cycle created by optimal strategies contains both $u$ and $v$ and,
  after the contraction, each simple cycle that does not contain both~$u$ and $v$ is again a simple cycle with the same priorities.
  Also, each cycle in the contracted game either exists in the original game (i.e., it does not contain~$v'$) or an equivalent cycle, which can be created by replacing $v'$ with~$v$ or $u$, exists in the original game.
  We can also map winning strategies in the original game, where $u$ and $v$ have the same successor into winning strategies in the resulting game and vice versa by simply identifying the successors of $v'$ with the successor of $u$ and $v$ and vice versa, while keeping the rest of the strategy.
  We again assume that in the winning strategies in the original game, $v$ and $u$ have the same successor $w$.
  We then set the successor of $v'$ to $w$ and set the successor of any node $w'$ with successor $v$ or $u$ to $v'$. 
  The other way around $v$ and $u$ get the same successor as $v'$ and any node $w'$ with successor $v'$ gets either $v$ or~$u$ as its successor, depending on which of the edges $(w',v)$ and $(w',u)$ exists in the original game.
  A pair of strategies and the pair of strategies, to which they are mapped to, then create corresponding cycles and must therefore either both be winning for player~$1$ or both be winning for player~0.
\end{proof}

\begin{lemma}
\label{indegreecriteria}
  Let $G = (V_0\uplus V_1,E,p)$ be a bipartite parity game, and let $v \in V(G)$ be such that $N_G^-(v) = \emptyset$.
  Then for the parity game $G' = G - v$ and for $i\in\{0,1\}$, any node $v'\not= v$ is winning for player~$i$ in~$G$ if and only if it is winning for player~$i$ in $G'$.
\end{lemma}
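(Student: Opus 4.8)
The plan is to derive the claim from the closure lemmas established above, together with determinacy, rather than by an explicit strategy transfer. First I would check that $G' = G - v$ is well-defined: since $N_G^-(v) = \emptyset$, deleting $v$ removes no outgoing edge of any node $u \neq v$, so every node of $G'$ still has an outgoing edge, and $\{v\}$ trivially meets the condition in the definition of a sub-game. Next, observe that no node can reach $v$, because $v$ has no in-edges; hence $\reach_i(\{v\}) = \{v\}$ for both $i \in \{0,1\}$. By Lemma~\ref{thm:winning-sets-are-complement-closed}, the set $V(G') = V(G) \setminus \{v\} = V(G) \setminus \reach_i(\{v\})$ is therefore $(\neg i)$-closed in $G$ for both choices of $i$; that is, $V(G')$ is simultaneously $0$-closed and $1$-closed in $G$.

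Now I would apply Lemma~\ref{thm:winning-sets-closed-under-subgames} with $G'$ as the sub-game and with each $i \in \{0,1\}$ in turn: since $V(G')$ is $i$-closed in $G$, this yields $\mathsf{win}_i(G') \subseteq \mathsf{win}_i(G)$ for both $i$. To finish, fix any node $v' \neq v$. By determinacy applied to $G'$, the node $v'$ lies in exactly one of $\mathsf{win}_0(G')$, $\mathsf{win}_1(G')$, say $v' \in \mathsf{win}_i(G')$; then $v' \in \mathsf{win}_i(G)$ by the inclusion just established. Since, by determinacy applied to $G$, the node $v'$ also lies in exactly one of $\mathsf{win}_0(G)$, $\mathsf{win}_1(G)$, that player must be the same $i$. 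Hence $v' \in \mathsf{win}_i(G')$ if and only if $v' \in \mathsf{win}_i(G)$, which is exactly the assertion of the lemma.

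I do not expect a genuine obstacle here; the one point requiring care is the logical direction, since Lemma~\ref{thm:winning-sets-closed-under-subgames} supplies only one inclusion per player, so one really needs the partition-into-winning-sets property in \emph{both} $G$ and $G'$ to combine the two one-sided inclusions into the stated equivalence. As an alternative I could argue directly on plays: since $v$ has no in-edge it can occur only as the initial node of a play, so every play starting at a node $v' \neq v$ avoids $v$ entirely and is thus a legal play in both $G$ and $G'$; positional strategies of either player transfer between the two games (extending arbitrarily at $v$ when $v$ belongs to that player) without changing the outcome of any such play, which immediately gives the equivalence of winning strategies from $v'$. I would present the short lemma-based proof as the main one and, if desired, mention this play-based variant in one sentence.
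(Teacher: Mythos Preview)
Your proof is correct. The argument via $\reach_i(\{v\})=\{v\}$, Lemma~\ref{thm:winning-sets-are-complement-closed}, Lemma~\ref{thm:winning-sets-closed-under-subgames}, and determinacy is sound, and you handle the two-sided equivalence properly by invoking determinacy in both $G$ and $G'$.

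Your approach differs from the paper's. The paper argues directly at the level of strategies: since $v$ has no in-edge it lies on no cycle, and any positional strategy on $V\setminus\{v\}$ is simultaneously a strategy in $G$ and in $G-v$, so winning sets on $V\setminus\{v\}$ coincide. This is essentially the play-based alternative you sketch at the end. Your main argument instead leverages the closure machinery already set up (Lemmas~\ref{thm:winning-sets-are-complement-closed} and~\ref{thm:winning-sets-closed-under-subgames}); the advantage is that it reduces the lemma to a formal consequence of results already proved, with no new reasoning about plays or strategies. The paper's version is marginally more self-contained and, as a side benefit, also indicates how to recover the winner at $v$ itself from the winning sets of $G-v$ (which matters for the kernelization but is not part of the lemma statement). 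Either route is perfectly adequate here.
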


\begin{proof}
  The condition $N_G^-(v) = \emptyset$ implies that $v$ cannot be part of any cycle.
  Let $v \in V_j$; then $v \in \mathsf{win}_j(G)$ is equivalent to the existence of some node $w \in \mathsf{win}_j(G) \cap N_G^+(v)$.
  Since all possible strategies for all nodes except $v$ are also possible strategies in $G-\{v\}$, all nodes in $V\setminus\{v\}$ belong to the same winning set in $G$ and in $G-v$.
  (Notice that $G - \{v\}$ is again a parity game.)
  Once we computed the winning sets for $G-\{v\}$, we can check in time $O(n)$ whether $v \in \mathsf{win}_j(G)$ or $v \in \mathsf{win}_{\neg j}(G)$.
\end{proof}

\begin{lemma}
\label{prioritycriteria}
  Let $G = (V_0\uplus V_1,E,p)$ be a parity game with largest priority $p_{\max} = \max\{p(v)\mid v \in V(G)\}$.
  If $p^{-1}(z) = \emptyset$ for some $z\in\{1,\hdots,p_{\max}\}$ then let $G' = (V_0\uplus V_1,E,p')$ be the parity game obtained from $G$ by setting
    $p'(v) = p(v) - 2$ for all~$v\in V$ with $p(v) > z$ and $p'(v) = p(v)$ for all~$v\in V$ with $p(v) < z$.
  Then the winning sets of the games~$G$ and~$G'$ coincide.
\end{lemma}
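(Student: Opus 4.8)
The plan is to reduce everything to one elementary arithmetic fact about the relabeling. Define $f\colon\mathbb N_0\to\mathbb N_0$ by $f(x)=x$ if $x<z$ and $f(x)=x-2$ if $x>z$; since $z\ge 1$, every priority exceeding $z$ is at least $2$, so $f$ is well defined, and $p'=f\circ p$. Because $p^{-1}(z)=\emptyset$, the map $f$ is only ever evaluated on priorities that actually occur in $G$, so the fact that $f$ is left undefined at $z$ is immaterial. The crucial claim I would establish first is purely about numbers: for every nonempty finite set $S\subseteq\mathbb N_0\setminus\{z\}$, the values $\max S$ and $\max f(S)$ have the same parity.

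To prove the claim I would split on $M:=\max S$. If $M<z$, then $f$ restricted to $S$ is the identity and there is nothing to show. If $M>z$ (the case $M=z$ is excluded by $S\cap\{z\}=\emptyset$), then $M\ge z+1$, so $f(M)=M-2\ge z-1$; every $s\in S$ with $s<z$ has $f(s)=s\le z-1\le M-2=f(M)$, and every $s\in S$ with $s>z$ has $f(s)=s-2\le M-2=f(M)$. Hence $\max f(S)=M-2$, which has the same parity as $M$. The one subtlety worth stating carefully is that $f$ is in general not injective --- for instance it maps both $z-1$ and $z+1$ to $z-1$ --- so distinct occurring priorities may collapse onto one another; the claim is precisely the statement that such a collapse cannot flip the relevant parity.

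Next I would transfer the claim from sets to plays. The games $G$ and $G'$ share the same underlying graph, the same partition $V_0\uplus V_1$, and the same edge set; only the priority labels differ. Hence a strategy (positional or not) for a player in $G$ is verbatim a strategy in $G'$, and a play $v_0,v_1,v_2,\dots$ in one game is a play in the other. Given such a play, let $S_\infty$ be the finite nonempty set of priorities seen infinitely often along it in $G$; then $S_\infty\subseteq\mathbb N_0\setminus\{z\}$, and the set of priorities seen infinitely often in $G'$ along the same play is exactly $f(S_\infty)$, since a value $y$ recurs in $f(p(v_0)),f(p(v_1)),\dots$ iff some $x\in f^{-1}(y)$ recurs in $p(v_0),p(v_1),\dots$. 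Therefore the highest priority seen infinitely often is $\max S_\infty$ in $G$ and $\max f(S_\infty)$ in $G'$, and by the claim these agree in parity, so the play is won by the same player in both games.

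The conclusion is then immediate: since every play has the same winner in $G$ and in $G'$, a strategy for player $i$ is winning from a node $v$ in $G$ iff it is winning from $v$ in $G'$, whence $\mathsf{win}_i(G)=\mathsf{win}_i(G')$ for $i\in\{0,1\}$. Equivalently, one could invoke positional determinacy together with prefix independence and argue only about the single cycle $C(s_0,s_1,v)$ created by two positional strategies, applying the claim to the priority set $\{p(u):u\in V(C)\}$; either route works. I do not expect a genuine obstacle here --- the only delicate point is the non-injectivity of $f$ noted above, which the arithmetic claim is tailored to neutralize.
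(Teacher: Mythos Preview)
Your proof is correct and follows essentially the same approach as the paper: both reduce the lemma to the arithmetic fact that the relabeling $f$ preserves the parity of the maximum over any finite set of occurring priorities, and both observe that since $G$ and $G'$ share the same underlying game graph, this immediately transfers to winners of plays (or cycles). The paper dispatches the arithmetic fact with ``it is easy to see,'' whereas you spell out the case split on $\max S$ and flag the non-injectivity of $f$; your primary route argues via the $\limsup$ of priorities along arbitrary plays, while the paper goes straight to the positional-strategy cycle $C(s_0,s_1,v)$ --- a route you also mention as an alternative --- so the two proofs differ only in level of detail, not in substance.
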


\begin{proof}
  Let~$s_0$ and~$s_1$ be strategies for player~$0$ and player~$1$, respectively, and let~$C=(v_0,v_1,\ldots,v_{\ell})$ be the cycle created by these strategies when the
  game starts at some node~$v$. The parity~$j$ of the largest element in the set~$Q=\{p(v_0),\ldots,p(v_{\ell})\}$ determines which player wins in~$G$
  and the parity~$j'$ of the largest element in the set~$Q'=\{p'(v_0),\ldots,p'(v_{\ell})\}$ determines which player wins in~$G'$. It is easy to see that our
  reduction ensures that~$j=j'$. Since this is true for any cycle, the lemma follows.
\end{proof}

\begin{corollary}
  In any parity game with maximum priority $p_{\max}$ to which the reduction rule described in
  Lemma~\ref{prioritycriteria} cannot be applied anymore, the set of priorities is
  either $\{0,1,\hdots,p_{\max}\}$ or $\{1,\hdots,p_{\max}\}$.
\end{corollary}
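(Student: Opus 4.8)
The plan is to obtain the corollary directly from the definition of the reduction rule in Lemma~\ref{prioritycriteria}, by contraposition. First I would recall exactly when that rule is applicable to a parity game $G$ with maximum priority $p_{\max}$: it applies precisely when there exists some $z \in \{1,\ldots,p_{\max}\}$ with $p^{-1}(z) = \emptyset$, i.e., no node of $G$ carries priority exactly $z$. Consequently, if the rule is \emph{not} applicable to $G$, then for every $z \in \{1,\ldots,p_{\max}\}$ there is at least one node of priority $z$, which means $\{1,\ldots,p_{\max}\} \subseteq \{p(v) \mid v \in V(G)\}$.

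Second, I would note the trivial bounds on the set of priorities $P := \{p(v)\mid v \in V(G)\}$: since $p$ maps into $\mathbb{N}_0$ and $p_{\max}$ is by definition the largest priority attained, we have $P \subseteq \{0,1,\ldots,p_{\max}\}$. Combining this with the inclusion from the first step gives $\{1,\ldots,p_{\max}\} \subseteq P \subseteq \{0,1,\ldots,p_{\max}\}$, which leaves exactly the two possibilities: $P = \{1,\ldots,p_{\max}\}$ (when $0 \notin P$) and $P = \{0,1,\ldots,p_{\max}\}$ (when $0 \in P$), which is the claim.

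The only point that merits explicit comment — and is really the entire content of the argument — is the asymmetric treatment of priority $0$: the index $z$ in Lemma~\ref{prioritycriteria} ranges only over $\{1,\ldots,p_{\max}\}$ and never takes the value $0$, so a ``gap'' at priority $0$ does not make the rule applicable; this is exactly what accounts for the second alternative $\{1,\ldots,p_{\max}\}$ in the statement. (As a degenerate sanity check, if $p_{\max}=0$ then $\{1,\ldots,p_{\max}\}=\emptyset$ and both alternatives collapse to $P=\{0\}$, consistently.) I do not expect any real obstacle here: the proof is a short unwinding of the applicability condition of the reduction rule together with the definition of $p_{\max}$.
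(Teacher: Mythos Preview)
Your proposal is correct and is exactly the natural argument: the paper in fact states this corollary without proof, treating it as an immediate consequence of the applicability condition in Lemma~\ref{prioritycriteria}, and your contrapositive unwinding together with the observation that~$z$ ranges over $\{1,\ldots,p_{\max}\}$ (so a missing priority~$0$ does not trigger the rule) is precisely what is intended.
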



\begin{lemma}
\label{kernelsize}  
  Let $G=(V_0 \uplus V_1,E,p)$ be a bipartite parity game with $|V_1| = k$ that is reduced according to Lemmas~\ref{outdegreecriteria}--\ref{indegreecriteria}.
  Then $|V_0| \leq 2^k\cdot \min\{k,p\}$.
\end{lemma}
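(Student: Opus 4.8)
The plan is to classify the nodes of $V_0$ by their sets of out-neighbors. Since $G$ is bipartite we have $N_G^+(v)\subseteq V_1$ for every $v\in V_0$, and since every node of a parity game has an outgoing edge, $N_G^+(v)$ is a non-empty subset of the $k$-element set $V_1$. Hence $V_0$ is partitioned into the classes $A_S:=\{v\in V_0 : N_G^+(v)=S\}$, indexed by the at most $2^k-1$ non-empty sets $S\subseteq V_1$. It therefore suffices to prove $|A_S|\le\min\{k,p\}$ for every such $S$; summing over all classes then gives $|V_0|\le(2^k-1)\cdot\min\{k,p\}\le 2^k\cdot\min\{k,p\}$.

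First I would bound $|A_S|$ by $p$. Any two distinct nodes of $A_S$ have, by definition, the same out-neighborhood $S$; if in addition they had equal priority, then the contraction rule of Lemma~\ref{equalitycriteria} would be applicable, contradicting the assumption that $G$ is reduced. Thus the nodes of $A_S$ carry pairwise distinct priorities, and as the game uses only $p$ distinct priorities we get $|A_S|\le p$.

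Next I would bound $|A_S|$ by $k$ by showing that the in-neighborhoods of the nodes of $A_S$ are pairwise disjoint. Recall that by Lemma~\ref{indegreecriteria} every node of $V_0$ has a non-empty in-neighborhood, which by bipartiteness is contained in $V_1$. Let $v,v'\in A_S$ be distinct, and assume without loss of generality that $p_1(v)\ge p_1(v')$. Then $N_G^+(v)=S\subseteq S=N_G^+(v')$ and $p_1(v)\ge p_1(v')$, so the hypotheses of Lemma~\ref{outdegreecriteria} are met with $u:=v'$; since $G$ is reduced, the edge set that this rule would delete, namely $\{(w,v')\in E : (w,v)\in E\}$, is already empty, i.e.\ $N_G^-(v)\cap N_G^-(v')=\emptyset$. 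Hence $\{N_G^-(v) : v\in A_S\}$ is a family of pairwise-disjoint non-empty subsets of $V_1$, so $|A_S|\le|V_1|=k$, completing the argument.

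The one delicate point is the invocation of Lemma~\ref{outdegreecriteria}: one must first order the pair $v,v'$ by $p_1$-value so that the hypothesis $p_1(v)\ge p_1(u)$ holds, and then read the fact that this rule ``does not fire'' on the reduced game as the disjointness of the two in-neighborhoods rather than as some statement about priorities. Beyond that the proof is only the elementary counting above, which in fact yields the slightly stronger bound $|V_0|\le(2^k-1)\cdot\min\{k,p\}$.
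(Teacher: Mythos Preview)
Your proof is correct and follows essentially the same approach as the paper: partition $V_0$ by out-neighborhood, then use Lemma~\ref{equalitycriteria} to bound each class by~$p$ and Lemmas~\ref{outdegreecriteria} and~\ref{indegreecriteria} to bound each class by~$k$ via disjointness of the in-neighborhoods. Your write-up is in fact more careful than the paper's in making explicit the ordering by $p_1$-value needed to invoke Lemma~\ref{outdegreecriteria}, and you obtain the slightly sharper constant $(2^k-1)$.
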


\begin{proof}
  For each node $v \in V_0$ there are $2^k$ possible choices for $N_G^+(v)$.
  Lemma~\ref{outdegreecriteria} yields that for two nodes $v \neq u \in V_0$ with $N_G^+(v) = N_G^+(u)$ we must have $N_G^-(v) \cap N_G^-(u) = \emptyset$.
  Lemma~\ref{indegreecriteria} then yields that there can be at most $k$ nodes in $V_0$ for every possible choice of $N_G^+(v)$.
  Also Lemma~\ref{equalitycriteria} yields that for each possible choice of $(N_G^+(v),p(v))$ there exists at most one node in $V_0$.
\end{proof}

\begin{lemma}
\label{kernelizationtime}
  There exists a sequence of applications of the reduction rules described in Lemmas~\ref{outdegreecriteria}--\ref{prioritycriteria} with a total run time
  of~$O(n^3)$  that leads to a game in which none of these rules applies anymore. 
\end{lemma}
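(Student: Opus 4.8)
The plan rests on one structural observation that I would establish first: \emph{none of the rules of Lemmas~\ref{outdegreecriteria}--\ref{prioritycriteria} ever changes the out-neighbourhood $N_G^+(u)\subseteq V_1$ of a surviving node $u\in V_0$.} Indeed, Lemma~\ref{outdegreecriteria} only deletes edges entering $V_0$ (equivalently, leaving $V_1$); Lemma~\ref{equalitycriteria} contracts two nodes of $V_0$ that already have equal out-neighbourhoods; Lemma~\ref{indegreecriteria} deletes a node with no in-edges, which is therefore in nobody's out-neighbourhood, so its removal only shrinks some in-neighbourhoods; and Lemma~\ref{prioritycriteria} does not touch edges at all. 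Consequently the out-neighbourhood function, and with it the strict partial order $\sqsubseteq$ on $V_0$ given by $v\sqsubseteq u:\Leftrightarrow N_G^+(v)\subseteq N_G^+(u)\wedge p_1(v)\ge p_1(u)$ (which is antisymmetric exactly once Lemma~\ref{equalitycriteria} no longer applies), is essentially \emph{static}: it changes only when a node of $V_0$ is deleted or when two such nodes are contracted, and it is invariant under the monotone priority relabelling of Lemma~\ref{prioritycriteria}.

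With this in hand I would run the rules in a fixed order and argue that no outer loop is needed: (1)~apply Lemma~\ref{prioritycriteria} exhaustively to normalise the priorities to a range of size $O(n)$ in time $O(n\log n)$; (2)~apply Lemma~\ref{indegreecriteria} exhaustively, removing all sources by one reachability-style cascade in time $O(m)$; (3)~apply Lemma~\ref{equalitycriteria} exhaustively by radix-sorting the nodes of $V_0$ on the pair (out-neighbourhood, priority), each out-neighbourhood stored as a bit-vector over $V_1$, and contracting each bucket, in time $O(nk)$; (4)~compute the relation $\sqsubseteq$ on $V_0$ explicitly in time $O(|V_0|^2k)=O(n^2k)$; (5)~apply Lemma~\ref{outdegreecriteria} in a single sweep: for every $w\in V_1$, using the table $\sqsubseteq$, delete all edges $(w,u)$ for which $u$ is not $\sqsubseteq$-minimal in $N_G^+(w)$, in time $O\bigl(\sum_{w\in V_1}|N_G^+(w)|^2\bigr)$; (6)~apply Lemma~\ref{indegreecriteria} again to remove any sources created in step~(5), in time $O(m)$; (7)~apply Lemma~\ref{prioritycriteria} again to close the priority gaps possibly created in step~(6), in time $O(n^2)$; (8)~apply Lemma~\ref{equalitycriteria} once more, in time $O(nk)$. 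Since each $V_0$-node has at most $|V_1|=k$ in-edges, $\sum_{w}|N_G^+(w)|^2\le|V_0|\sum_w|N_G^+(w)|\le|V_0|^2k=O(n^2k)$, so the total is $O(n^2k)=O(n^3)$.

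To prove that the process terminates with \emph{no} rule applicable, I would verify a handful of implications, each invoking the static-out-neighbourhood observation: deleting a source or relabelling priorities never re-enables Lemma~\ref{equalitycriteria} or Lemma~\ref{outdegreecriteria}; applying Lemma~\ref{outdegreecriteria} only shrinks in-neighbourhoods (and out-edges of $V_1$), so it cannot create a new common in-neighbour and hence cannot re-enable itself or Lemma~\ref{equalitycriteria}; after step~(5) every $w\in V_1$ points only to $\sqsubseteq$-minimal elements of $N_G^+(w)$, which is precisely the negation of the precondition of Lemma~\ref{outdegreecriteria}; and the final contraction round (8) does not re-enable Lemma~\ref{outdegreecriteria}, because if that rule is inapplicable while $u,v\in V_0$ have equal out-neighbourhood and priority then no $w\in V_1$ points to both $u$ and $v$ (it would be forced to delete one of those edges), so the contraction merely renames a node inside each $N_G^+(w)$ without producing a new comparable pair. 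Monotonicity of a potential such as $4|V(G)|+|E|+\min\{p_{\max},n\}$ then gives termination (and bounds the number of node deletions, edge deletions, and re-normalisations).

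The main obstacle is the treatment of the out-degree rule in steps~(4)--(5): one must recognise that $\sqsubseteq$ needs to be computed only \emph{once} --- this is exactly where the static-out-neighbourhood observation is indispensable, since a naive re-scan of $\Theta(n^2)$ candidate pairs after each of the up to $n$ contractions would give $\Theta(n^4)$ --- that a single exhaustive sweep of Lemma~\ref{outdegreecriteria} then suffices, and that precomputing the table makes that sweep cost $O(\sum_w|N_G^+(w)|^2)=O(n^2k)$ rather than the $O(n^2k^2)$ one gets by redoing subset tests inside the sweep. The other rules, once the data structures (adjacency lists with in-degree counters, bit-vector out-neighbourhoods, bucket lists for Lemma~\ref{equalitycriteria}, and the $|V_0|\times|V_0|$ table for $\sqsubseteq$, maintained under deletions and contractions in $O(n)$ each) are in place, are routine bookkeeping.
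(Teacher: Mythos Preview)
Your central structural observation---that none of the rules changes the out-neighbourhood $N_G^+(u)\subseteq V_1$ of a surviving $u\in V_0$---is correct and is exactly the observation the paper relies on. But you then overextend it: you claim that the relation $\sqsubseteq$, which also involves the priority comparison $p_1(v)\ge p_1(u)$, ``is invariant under the monotone priority relabelling of Lemma~\ref{prioritycriteria}''. This is false. The relabelling of Lemma~\ref{prioritycriteria} is monotone in $p$, not in $p_1$; concretely, if $p(v)=1$, $p(u)=3$, and priority~$2$ is empty, then before the rule $p_1(v)=1<3=p_1(u)$ so $v\not\sqsubseteq u$, whereas after it $p(u)=1$ and hence $p_1(v)=p_1(u)$, so $v\sqsubseteq u$ becomes true whenever $N^+(v)\subseteq N^+(u)$. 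In other words, Lemma~\ref{prioritycriteria} can merge two priority classes of the same parity, and this can create \emph{new} comparable pairs for Lemma~\ref{outdegreecriteria}. Consequently your fixed eight-step schedule need not terminate at a fully reduced game: step~(6) may empty a priority level, step~(7) then merges two classes, and this re-enables Lemma~\ref{outdegreecriteria} on pairs that were incomparable in the $\sqsubseteq$-table you computed in step~(4); but you never revisit that rule. The concluding claim ``deleting a source or relabelling priorities never re-enables Lemma~\ref{outdegreecriteria}'' is therefore the point where the argument breaks.

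The paper avoids this by \emph{not} fixing a bounded number of rounds. It interleaves the rules freely and bounds the total work amortised over all applications: since out-neighbourhoods are static, the only way Lemma~\ref{outdegreecriteria} or Lemma~\ref{equalitycriteria} can become newly applicable to a pair $\{u,v\}$ is that a Lemma~\ref{prioritycriteria} step merges the priority classes of $u$ and $v$, and this can happen at most once per pair. Hence at most $O(n^2)$ pair-tests are ever performed, each in $O(n)$ time, giving $O(n^3)$. Your approach could be repaired along similar lines---iterate steps (4)--(8) and argue that the $\sqsubseteq$-table needs at most $O(n^2)$ entry updates in total---but as written the single pass does not suffice.
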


\begin{proof}
  We show for each reduction rule separately how to apply it exhaustively in time $O(n^3)$.
 Although it can happen, that some reductions corresponding to one of the rules could lead to allowing some other reductions which were not allowed before.
 Therefore we cannot only apply all reductions corresponding to one rule after all reductions corresponding to another rule.
 
Most of the run time will be necessary to test if Lemma~\ref{outdegreecriteria} or Lemma~\ref{equalitycriteria} applies to an ordered pair of nodes.
 We will argue how to apply the reductions such that we do not have to test the same ordered pair of nodes more than once, yielding a total run time of $O(n^3)$.

 To apply all reductions of Lemma~\ref{prioritycriteria}, we first sort the nodes in increasing order of their priorities and create an order of subsets each containing all nodes with the same priority; this can be done in $O(n \log(n))$ time.
 We then save for each of the subsets if its corresponding priority is odd or even and unite consecutive sets with the same parity.
 If the parity of the priority in the first subset is even, all nodes in the $i$-th subset get priority $i-1$; otherwise all nodes in the $i$-th subset get priority $i$.
 Uniting sets can be done in linear time, and we cannot unite more than $n$ times.
 The time for applying Lemma~\ref{prioritycriteria} is thus~$O(n^2)$.

 To apply all reductions for Lemma~\ref{outdegreecriteria}, we need to check for each pair of nodes $\{u,v\}\subseteq V_0$ with $p_1(v) \geq p_1(u)$ whether $N_G^+(v) \subseteq N_G^+(u)$, and find all nodes $w$ with $(w,u) \in E$ and $(w,v) \in E$.
 There are $O(n^2)$ node pairs $\{u,v\}\in V_0$ with $p_1(v) \geq p_1(u)$, which can easily be found using the order of subsets created for Lemma~\ref{prioritycriteria}.
 Checking if $N_G^+(v) \subseteq N_G^+(u)$ and finding all nodes $w$ with $(w,u) \in E$ and $(w,v) \in E$ can be done in time~$O(n)$.
 The total run time for Lemma~\ref{outdegreecriteria} therefore is~$O(n^3)$.

 To apply all reductions for Lemma~\ref{equalitycriteria} we need to check for each pair of nodes $\{u,v\}\subseteq V_0$ with $p(v) = p(u)$ whether $N_G^+(v) = N_G^+(u)$.
 There are $O(n^2)$ such pairs $\{u,v\}$ with $p(v) = p(u)$, which can easily be found using the order of subsets created for Lemma~\ref{prioritycriteria}.
 Testing whether $N_G^+(v) = N_G^+(u)$ and identifying~$u$ and $v$ can be done in time $O(n)$.
 The total run time for Lemma~\ref{equalitycriteria} therefore is $O(n^3)$.

 To apply all reductions for Lemma~\ref{indegreecriteria}, we only need to check for each node if it has incoming edges and possibly delete it.
 Testing a node can be done in constant time, and deleting a node takes at most linear time.
 The time for applying Lemma~\ref{indegreecriteria} is thus $O(n^2)$.

 We will first apply all feasible reductions for Lemma~\ref{prioritycriteria}, then all feasible reductions for Lemmas~\ref{outdegreecriteria},~\ref{equalitycriteria} and~\ref{indegreecriteria}.  
 Any reduction that is now possible was not feasible in the beginning. 

 Observe that some reductions can result in other reductions becoming feasible.
 Since we do not change the out-neighborhood of any node in $V_0$, reductions corresponding to Lemmas~\ref{outdegreecriteria} and~\ref{equalitycriteria} for a pair of nodes $\{u,v\}\subseteq V_0$ can only become feasible when we combine the two subsets containing $v$ and $u$ in a reduction corresponding to Lemma~\ref{prioritycriteria}.
 For each node pair $\{u,v\}\subseteq V_0$ this can happen at most once.
 The total run time for all reductions corresponding to Lemma~\ref{outdegreecriteria} and~\ref{equalitycriteria} therefore is in $O(n^3)$.
 Reductions corresponding to Lemma~\ref{indegreecriteria} and a node $v \in V_0$ can only become feasible when we remove incoming edges of $v$.
 This can happen at most~$n$ times for each node $v \in V_0$, before we remove it.
 The total run time for all reductions corresponding to Lemma~\ref{indegreecriteria} therefore is in $O(n^2)$.
 Reductions corresponding to Lemma~\ref{prioritycriteria} can only become feasible when all nodes of one subset have been removed.
 This can happen at most $n$ times; hence any node will be moved to another subset at most $n$ times.
 The total run time for all reductions corresponding to Lemma~\ref{prioritycriteria} therefore is in~$O(n^2)$.
 \end{proof}

We can now prove our main kernelization result.
 
\begin{proof}[Theorem~\ref{thm:kernel-pk-main}]
The part of the theorem for general instances follows directly from Lemma~\ref{lemma:reduction}.
The part for bipartite instances follows from Lemma~\ref{kernelsize} and Lemma~\ref{kernelizationtime}
because the reduced bipartite parity game~$G' = (V'_0 \uplus V'_1,E',p')$ satisfied $|V'_0| \leq 2^k\cdot \min\{k, p\}$ and $|V'_1|\leq k$.
Since $G'$ is bipartite, this implies that it contains at most $k2^k\cdot \min\{k, p\}$ edges.
\end{proof}

\section{A Simple Exponential-Time Algorithm}
\label{sec:exponentialalgorithm}
A simple algorithm with run time~$O(2^n)$ for the solution of parity games originates from the work of McNaughton~\cite{McNaughton1993} and was first presented for parity games by Zielonka~\cite{Zielonka1998}; see also Gr{\"a}del et al.~\cite{GradelEtAl2002}.
Algorithm $\mathbf{win}(G)$ receives a parity game~$G$ and returns the pair of winning sets $(\mathsf{win}_0(G) = W_0, \mathsf{win}_1(G) = W_1)$.

Algorithm $\mathbf{win}(G)$ is based on Lemmas~\ref{thm:winning-set-from-reachability-set} and~\ref{thm:recurse-after-reach-removable}.
Let $p_{\max}$ be the largest priority in~$G$ and let $V_{p_{\max}}$ be the set of nodes with priority $p_{\max}$.
Let $i = p_{\max} \pmod 2$ be the player who owns the highest priority.
The algorithm first finds the winning sets $(W'_0,W'_1)$ of the smaller game $G' = G - \reach_i(V_{p_{\max}})$ in a first recursive call.
If~$W'_{\neg i} = \emptyset$, then by Lemma~\ref{thm:recurse-after-reach-removable} player $i$ wins from all nodes of $G$ and we are done.
Otherwise, again by Lemma~\ref{thm:recurse-after-reach-removable} we know that $W'_{\neg i} \subseteq \mathsf{win}_{\neg i}(G)$.
The algorithm then finds the winning sets $(W''_0 ,W''_1)$ of the smaller game $G'' = G - \reach_{\neg i}(W'_{\neg i})$ by a second recursive call.
By Lemma~\ref{thm:winning-set-from-reachability-set}, $\mathsf{win}_i(G) = W''_i$ and $\mathsf{win}_{\neg i}(G) = \reach_{\neg i}(W'_{\neg i}) \cup W''_{\neg i} = V (G) \setminus W''_i$.

\begin{algorithm*}[htpb]
 \caption{$\mathbf{win}(G)$}
 \label{algorithm:win}
 \begin{algorithmic}[1]
 \Require A parity game $G = (V_0\uplus V_1,E,p)$ with maximum priority $p_{\max}$.
 \Ensure $(W_0,W_1)$, where $W_i$ is the winning set of player $i\in\{0,1\}$.
 \If{$V = \emptyset$}
  \State \textbf{return} $(\emptyset, \emptyset)$
 \EndIf
 \State $i \leftarrow p_{\max} \pmod 2 ; j \leftarrow \neg i$
	\State $(W'_0,W'_1) \leftarrow \textbf{win}(G - \reach_i(V_{p_{\max}}))$
	\If{$W'_j = \emptyset$}
	 \State $(W_i,W_j) \leftarrow (V, \emptyset)$
	\Else 
	 \State $(W''_0,W''_1) \leftarrow \textbf{win}(G - \reach_j(W'_j))$ 
	 \State $(W_i,W_j) \leftarrow (W''_i,V \setminus W''_i)$
	\EndIf\\
 \Return $(W_0,W_1)$
 \end{algorithmic}
\end{algorithm*}

\begin{theorem}
\label{thm:runtimeWin}
 Algorithm $\mathbf{win}(G)$ finds the winning sets of any parity game on~$n$ nodes in time~$O(2^n)$. 
\end{theorem}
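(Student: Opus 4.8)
The plan is to prove correctness by induction on $n=|V(G)|$ (formalizing the case analysis already sketched just before the algorithm) and then to bound the running time by setting up and solving a simple recurrence. For correctness, the base case $V=\emptyset$ is handled explicitly by the first line of the algorithm. For the inductive step I would first check that both recursive calls are made on well-defined parity games on strictly fewer than $n$ nodes. Since $G$ is non-empty, $V_{p_{\max}}\neq\emptyset$, so $\reach_i(V_{p_{\max}})\supseteq V_{p_{\max}}$ is non-empty and $G'=G-\reach_i(V_{p_{\max}})$ has at most $n-1$ nodes; moreover, by Lemma~\ref{thm:winning-sets-are-complement-closed} the set $V(G)\setminus\reach_i(V_{p_{\max}})$ is $(\neg i)$-closed, so each of its nodes keeps an outgoing edge and $G'$ is indeed a parity game. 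In the branch where the second call occurs we have $W'_j\neq\emptyset$, so the same reasoning applied with $A=W'_j$ shows that $G''=G-\reach_j(W'_j)$ is a parity game on at most $n-1$ nodes. The induction hypothesis therefore applies to both recursive calls, and the two cases of the algorithm are exactly Lemma~\ref{thm:recurse-after-reach-removable} (when $W'_j=\emptyset$, yielding $\mathsf{win}_i(G)=V$) and Lemma~\ref{thm:winning-set-from-reachability-set} applied with $U=W'_j\subseteq\mathsf{win}_j(G)$ (when $W'_j\neq\emptyset$, yielding $\mathsf{win}_i(G)=W''_i$ and $\mathsf{win}_j(G)=\reach_j(W'_j)\cup W''_j=V\setminus W''_i$), where the inclusion $W'_j\subseteq\mathsf{win}_j(G)$ is the first part of Lemma~\ref{thm:recurse-after-reach-removable}.

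For the running time, let $T(n)$ denote the worst-case time of $\mathbf{win}$ on a parity game with $n$ nodes. Within a single invocation, computing $p_{\max}$, $V_{p_{\max}}$ and $i$, computing the two reachability sets (each in $O(m)$ time by Lemma~\ref{thm:computing-reach-linear-time}), forming the two subgames, and combining the returned pairs all together take $O(m)=O(n^2)$ time. Since, as established above, each of the at most two recursive calls operates on a game with at most $n-1$ nodes, we obtain $T(n)\leq 2\,T(n-1)+cn^2$ for a suitable constant $c$, with $T(0)=O(1)$. Unrolling this recurrence gives $T(n)\leq 2^nT(0)+c\sum_{i=1}^{n}2^{n-i}i^2 = O(2^n)+c\,2^n\sum_{i=1}^{n} i^2/2^i$, and since the series $\sum_{i\geq 1} i^2/2^i$ converges (to $6$), the whole expression is $O(2^n)$, which is the claimed bound.

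The step I expect to need the most care is the verification that \emph{both} recursive calls are invoked on strictly smaller games — that is, that $\reach_i(V_{p_{\max}})$ and $\reach_j(W'_j)$ are non-empty exactly in the situations where they are removed — since this is precisely what makes the recurrence close at $2^n$ instead of blowing up. Once that is in place, solving the recurrence is a routine bound on a convergent geometric-type series.
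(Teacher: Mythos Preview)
Your proposal is correct and follows essentially the same approach as the paper: correctness via Lemmas~\ref{thm:winning-set-from-reachability-set} and~\ref{thm:recurse-after-reach-removable}, and the running-time bound via the recurrence $T(n)\le 2T(n-1)+O(n^2)$ which solves to $O(2^n)$. The paper's proof is terser (it offloads correctness to the preceding discussion and simply asserts the recurrence), whereas you spell out the well-definedness of the subgames and the recurrence solution explicitly, but there is no substantive difference.
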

\begin{proof}
 The correctness of the algorithm follows from Lemmas~\ref{thm:winning-set-from-reachability-set} and~\ref{thm:recurse-after-reach-removable}, as argued above.
 Let $T'(n)$ be the number of steps needed by algorithm $\mathbf{win}(G)$ to solve a game $G$ on $n$ nodes. 
 Algorithm $\mathbf{win}(G)$ makes two recursive calls $\mathbf{win}(G')$ and $\mathbf{win}(G'')$ on games with at most $n-1$ nodes.
 Other than that, it performs only $O(n^2)$ operations.
 (The most time-consuming operations are the computations of the sets $\reach_i(V_{p_{\max}})$ and $\reach_j(W'_j )$.)
 Therefore, $T'(n) \leq 2T'(n-1) + O(n^2)$, which implies~$T'(n) = O(2^n)$.
\end{proof}

\section{Overview of the New Algorithms}
\label{sec:overviewofthenewalgorithm}

Before we describe our new algorithms that lead to Theorems~\ref{thm:fixed-parameter-k-main}
and~\ref{thm:outdegreebased-algorithm} in detail in Sect.~\ref{sec:newalgorithm} and 
Sect.~\ref{sec:degreebased_algorithm}, we present an overview of the main ideas.
The algorithm~$\textbf{new-win}(G)$ by Jurdzi{\'n}ski, Paterson, and Zwick~\cite{JurdzinskiEtAl2008} with run time~$n^{O(\sqrt{n})}$
is a slight modification of the just described algorithm~$\textbf{win}(G)$.
At the beginning of each recursive call it tests in time~$O(n^{\ell})$ if the parity game contains a dominion $D$ of
size at most~$\ell=\lceil\sqrt{2n}\rceil$. If this is the case then $D$ is removed and the remaining game
is solved recursively. Else, the parity game is solved by the algorithm~$\textbf{win}(G)$,
except that the recursive calls in lines~4 and~8 are made to~$\textbf{new-win}(G)$. Since this happens only when~$G$
does not contain a dominion of size at most~$\ell$, the dominion~$\reach_j(W'_j)$ that is removed in line~8 has
size greater than~$\ell$ and hence, the second recursive call is to a substantially smaller game. Overall, this
leads to the improved run time of~$n^{O(\sqrt{n})}$.      

Our new algorithms are based on a similar idea. Instead of simply searching for a dominion of size at most~$\ell$,
our algorithm~$\textbf{new-win}_1(G)$ that leads to Theorem~\ref{thm:fixed-parameter-k-main} searches for a dominion that contains at most~$\ell = \lfloor \sqrt{2k}\rfloor$
nodes of the odd player, assuming without loss of generality that the odd player controls fewer nodes, i.e., $k=|V_1|$.
If such a dominion is found then we remove it from the game and solve the remaining game recursively. Otherwise, we use the algorithm~$\textbf{win}(G)$
to solve the parity game, except that the recursive calls in lines~4 and~8 are made to~$\textbf{new-win}_1(G)$.
It can happen that in the game to which the first recursive call in line~4 is made, the odd player controls again~$k$ nodes. 
We will show that in bipartite instances this cannot happen in two consecutive calls. For general instances we use
that the observation that at least the number of different priorities decreases by one in the recursive call.
Searching efficiently for a dominion that contains at most~$\ell = \lfloor \sqrt{2k}\rfloor$ nodes of the odd player 
is more involved than simply searching for dominions whose total size is at most~$\ell$. We use multiple recursive calls of~$\textbf{new-win}_1$ to test if such a dominion exists, which makes the recursion of our algorithm and its analysis 
more complicated.  

Our second algorithm leading to Theorem~\ref{thm:outdegreebased-algorithm} is based on the same approach and inspired by the algorithm of Jurdzi{\'n}ski, Paterson, and Zwick~\cite{JurdzinskiEtAl2008}.
In this case we let~$s_j$, for some~$j\in\mathbb{N}$, equal the number of nodes with out-degree at most~$j$. We separate the nodes into~$s_j$ nodes with out-degree at most~$j$ and~$n-s_j$ nodes with out-degree larger than~$j$ and, at the beginning of each iteration, search for and remove dominions that contain at most~$\ell=\lceil\sqrt{2(n-s_j)}\rceil$ nodes with out-degree larger than~$j$ and
at most~$s=\lceil \sqrt{s_j \cdot \log_{j}{s_j}}\rceil$ nodes with out-degree at most~$j$. 
This algorithm runs in time 
$
n^{O\left(\sqrt{n-s_j} + \sqrt{\frac{s_j}{\log_{j}{s_j}}}\right)},
$ 
which implies Theorem~\ref{thm:outdegreebased-algorithm}.\qed 
\section{Finding Small Dominions}
\label{sec:findingsmalldominions}
We now describe how dominions with the previously discussed properties can be found.
Let $G = (V_0\uplus V_1,E,p)$ be a parity game.
Recall that for $i \in \{0,1\}$, a set $D \subseteq V$ is an $i$-dominion if player~$i$ can win from every node of $D$ without ever leaving $D$, regardless of the strategy of player~$\neg i$.
Note that any $i$-dominion must be $i$-closed.
A set $D \subseteq V$ is a \emph{dominion} if it is either a $0$-dominion or a $1$-dominion.
By prefix independence of parity games, the winning set $\mathsf{win}_i(G)$ of player~$i$ is an $i$-dominion.

For $k,p\in\mathbb N$, let $T(k)$ denote the maximum number of steps needed to solve a bipartite parity game $G = (V_0\uplus V_1,E,p)$ and let $T(k,p)$ denote the maximum number of steps needed to solve a general parity game $G = (V_0\uplus V_1,E,p)$ with $|V_1| = k$ and $p = |\{p(v)\mid v \in V\}|$ using some fixed algorithm. 
For $k,p,\ell\in\mathbb N$, let $\mathsf{dom}_k(\ell)$ denote the maximum number of steps required to find a dominion~$D$ with $|V_1 \cap D| \leq \ell$ in a bipartite parity game $G = (V_0\uplus V_1,E,p)$ with $|V_1| = k$ and let $\mathsf{dom}_{k,p}(\ell)$ denote the maximum number of steps required to find a dominion~$D$ with $|V_1 \cap D| \leq \ell$ in a general parity game $G = (V_0\uplus V_1,E,p)$  with $|V_1| = k$ and $p = |\{p(v)\mid v \in V\}|$, or to determine that no such dominion exists.

We will in the analysis of run times make the assumption that computation and removal of reachability sets as well as kernelization are elementary operation and can therefore be performed in time~$O(1)$.
To obtain the actual run times of our algorithms we will in the end multiply the computed run times by a factor corresponding to the time needed for these operations.

\begin{lemma}
\label{dominionsize1}
 For~$k\ge 4$,
   $\mathsf{dom}_k(\ell) = O(k^{\ell} \cdot T(\ell))$ and
   $\mathsf{dom}_{k,p}(\ell) = O(k^{\ell} \cdot T(\ell,p))$.
\end{lemma}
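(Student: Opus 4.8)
The plan is to enumerate candidate sets of odd-player nodes and, for each such set, check whether it can be extended to an $i$-dominion by solving a small auxiliary parity game. Concretely, let $G = (V_0 \uplus V_1, E, p)$ with $|V_1| = k$. Any dominion $D$ with $|V_1 \cap D| \le \ell$ is determined, as far as the odd player is concerned, by the set $S = V_1 \cap D$, which has at most $\binom{k}{0} + \binom{k}{1} + \dots + \binom{k}{\ell} = O(k^\ell)$ possibilities (using $k \ge 4$ and $\ell \le k$ to absorb the sum into the leading term). So first I would iterate over all such subsets $S \subseteq V_1$ with $|S| \le \ell$.

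Second, for a fixed $S$, I need to decide whether there is a dominion $D$ with $V_1 \cap D = S$. The key observation is that the nodes of $V_0$ that such a $D$ can contain are forced: if $D$ is an $i$-dominion, then from every $V_0$-node in $D$ player $i$ (whoever that is) must be able to stay inside $D$, and player $\neg i$ must not be able to escape; but escaping via a $V_1$-node outside $S$ is the only way player $\neg i$'s nodes could leak, and the $V_0$-nodes reachable from $S$ before hitting $V_1$ again form a well-defined region. The clean way to phrase this: restrict attention to the induced subgame on $S$ together with all of $V_0$, but with the reachability structure that only allows returning to $V_1$ through $S$. This auxiliary game has at most $\ell$ odd nodes, and one solves it with the fixed algorithm in time $T(\ell)$ (bipartite) or $T(\ell,p)$ (general, since the number of priorities does not increase). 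From its winning sets one extracts the maximal dominion compatible with $S$ (via the reachability machinery of Lemmas~\ref{thm:winning-sets-are-closed}--\ref{thm:recurse-after-reach-removable}) and checks whether it is non-empty and genuinely $i$-closed in $G$. If some $S$ yields such a dominion, output it; otherwise report that none exists.

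Third, the running-time bookkeeping: there are $O(k^\ell)$ choices of $S$, each costing $T(\ell)$ (resp.\ $T(\ell,p)$) plus lower-order reachability/closure work, which under the stated convention counts as $O(1)$; multiplying gives $\mathsf{dom}_k(\ell) = O(k^\ell \cdot T(\ell))$ and $\mathsf{dom}_{k,p}(\ell) = O(k^\ell \cdot T(\ell,p))$.

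I expect the main obstacle to be the second step: making precise exactly which auxiliary game on $\le \ell$ odd nodes one must solve so that its solution correctly certifies the existence (or non-existence) of a full dominion $D$ in $G$ with $V_1 \cap D = S$. One has to argue both directions — that every genuine dominion with odd-part $S$ is recovered this way, and that every positive answer of the auxiliary computation yields a legitimate $i$-closed dominion in the original $G$ — and one has to be careful that passing to the auxiliary game does not increase the parameter $\ell$ (trivially true) nor the number of priorities $p$ (true because we only delete nodes and edges, never relabel). Once that reduction is pinned down, the counting is routine.
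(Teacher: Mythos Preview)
Your plan is essentially the paper's approach: enumerate the $O(k^\ell)$ candidate odd-sides $S\subseteq V_1$ with $|S|\le\ell$, and for each one solve a small sub-game with at most $\ell$ odd nodes. The only thing you leave imprecise is exactly the point you flag as the obstacle, and the paper's resolution is short enough to state here.

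For a fixed $S$ and a fixed $i\in\{0,1\}$, the right auxiliary game is not ``$S$ together with all of $V_0$'' (that induced subgraph need not even be a parity game, since some nodes may lose all outgoing edges). Instead one removes $\reach_{\neg i}(V_1\setminus S)$ from $G$. By Lemma~\ref{thm:winning-sets-are-complement-closed} the remaining node set $V'=V\setminus\reach_{\neg i}(V_1\setminus S)$ is $i$-closed, so $G-\reach_{\neg i}(V_1\setminus S)$ is a well-defined sub-game; moreover $V'\cap V_1\subseteq S$, so it has at most $\ell$ odd nodes and the same (or fewer) priorities. Any $i$-dominion $D$ with $D\cap V_1\subseteq S$ is $i$-closed and hence disjoint from $\reach_{\neg i}(V_1\setminus S)$, so $D\subseteq V'$; conversely, by Lemma~\ref{thm:winning-sets-closed-under-subgames}, $\mathsf{win}_i$ of this sub-game is an $i$-dominion in $G$. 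Thus solving this sub-game (in time $T(\ell)$ or $T(\ell,p)$) decides whether an $i$-dominion with odd-part contained in $S$ exists. Note that the construction depends on $i$, so you solve \emph{two} sub-games per $S$, one for each player; your write-up should make this explicit. With that in place, the counting step is exactly as you wrote.
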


\begin{proof}
  There are~$O(k^{\ell})$ sets~$V_D\subseteq V_1$ with $|V_D| = \ell$. We argue below that for each such set~$V_D$,
  one can determine whether or not there exists a dominion~$D$ with~$D \cap V_1 \subseteq V_D$ by solving two
  parity games that are sub-games of~$G$, i.e., these games arise from~$G$ by removing some of the nodes. This implies
  the lemma because each of these sub-games can be solved in time~$T(\ell)$ or~$T(\ell,p)$ for bipartite or 
  general parity games, respectively.

  Let $V_D \subseteq V_1$ be a set with $|V_D| = \ell$.
  We will now show how to check if there exists an $i$-dominion $D$ with $D \cap V_1 \subseteq V_D$.
  If such an $i$-dominion $D$ exists, then it is $i$-closed.
  Therefore, it does not contain any node $v \in V$ from which player~$\neg i$ can reach a node in $V_1\setminus V_D$.
  Let $V' = V(G) \setminus \reach_{\neg i}(V_1\setminus V_D)$ be the set of nodes from which player~$\neg i$ cannot force to reach a node in $V_1\setminus V_D$; the set $V'$ can therefore be computed by computing and removing a reachability set, which as we assumed is an elementary operations.
  We then have $D \subseteq V'$, and since no node in $V_1\setminus V_D$ can be part of $V'$, it holds that $V' \cap V_1 \subseteq V_D$.
	Since $V'$ is an $i$-closed set, the game $G-\reach_{\neg i}(V_1\setminus V_D)$ is well defined.
  Let $\mathsf{win}_{i}(V')$ be the winning set of player~$1$ in the game~$G-\reach_{\neg i}(V_1\setminus V_D)$.
  Then~$\mathsf{win}_{i}(V')$ is an $i$-dominion that contains $D$.

  This shows that for each set $V_D \subseteq V_1$ with $|V_D| = \ell$ we only need to compute for $i \in \{0,1\}$ the sets $V_i' = V \setminus \reach_{\neg i}(V_1\setminus V_D)$ of nodes from which player~$\neg i$ cannot force to enter $V_1 \setminus V_D$ and compute the winning sets 
  of the game~$G-\reach_{\neg i}(V_1\setminus V_D)$ to determine whether or not there exists a dominion~$D$ with~$D \cap V_1 \subseteq V_D$. 
\end{proof}

With the algorithm described in Lemma~\ref{dominionsize1} we can find a dominion $D$ such that $|D \cap V_1| \leq \ell$ if such a dominion exists.
We denote this algorithm by~$\textbf{dominion}_1(G,\ell)$ and assume that it returns either the pair $(D, i)$ if an $i$-dominion~$D$ is found, or $(\emptyset,-1)$ if not.

We will give the pseudocode for algorithm $\textbf{dominion}(G,\ell,s)$.
In the pseudocode, let $U_m$ denote the set of marked nodes from $U$ and let $\textbf{king}(U,\text{strategy}_i)$ denote an execution of the algorithm by King et al.~\cite{King2001} that determines the winners of the sub-game~$G$ restricted to~$U$ with a given strategy for player~$i$.
\begin{algorithm*}[ht]
 \caption{$\textbf{dominion}(G,\ell,s)$}
 \label{algorithm:dominion}
 \begin{algorithmic}[1]
  \Require A parity game $G = (V_0\uplus V_1,E,p)$ and $\ell, s \in \{ 0,\ldots, |V(G)|\}$.
  \Ensure An $i$-dominion $(D,i)$ for $i \in \{0,1\}$ or $(\emptyset,-1)$ if no dominion is found.
  \State Fix a total order $\prec$ on the nodes of $G$.
	\State For each $u \in V(G)$ sort the edges emanating from $u$ by $\prec$ on their respective endpoint.
	\For{$i \in \{0,1\}$}
		\For{$v \in V_i$}
			\For{$\left\langle a_1,\ldots,a_{\ell},b_1,\ldots,b_s\right\rangle \in \{1,\ldots,|V(G)|\}^{\ell} \times \{1,\ldots,j\}^s$}
				\State $r_1 = 1, r_2 = 1, U = \{v\}, U_m = \emptyset$
				\While{$U \neq \emptyset$, $r_1 \leq \ell$ and $r_2 \leq s$}
					\State Choose $u = \min(U,\prec)$.
					\State $U = U\setminus \{u\}$, $U_m = U_m \cup \{u\}$.
					\If{$u \in V_i$}
						\If{$|\delta^+(u)| > j$}
							\If{$|\delta^+(u)| \leq r_1$}
								\State Let $e = (u,w)$ be the $a_{r_1}$-th outgoing edge of $u$.
								\State $U = U \cup (\{w\}\setminus U_m)$, $r_1 = r_1 +1$, $\text{strategy}_i(u) = w$.
							\Else
								\State $U = \emptyset$, $U_m = \emptyset$.
							\EndIf
						\Else
							\If{$|\delta^+(u)| \leq r_2$}
								\State Let $e = (u,w)$ be the $b_{r_2}$-th outgoing edge of $u$.
								\State $U = U \cup (\{w\}\setminus U_m)$, $r_2 = r_2 +1$, $\text{strategy}_i(u) = w$.
							\Else
								\State $U = \emptyset$, $U_m = \emptyset$.
							\EndIf
						\EndIf
					\Else
						\State $U = U \cup (N^+(u)\setminus U_m)$
					\EndIf
				\EndWhile
				\If{$U_m \neq \emptyset$ contains at most $\ell$ high out-degree nodes and at most $s$ low out-degree nodes}
					\State $(W_0,W_1) = \textbf{king}(U,\text{strategy}_i)$.
					\If{$W_i = U$}
						\State \Return{$(U,i)$}
					\EndIf
				\EndIf
			\EndFor
		\EndFor
	\EndFor
  \State \Return $(\emptyset,-1)$.
 \end{algorithmic}
\end{algorithm*}

\section{New Algorithms for Solving Parity Games}
\label{sec:newalgorithm}

We present the algorithm~$\textbf{new-win}_1(G)$ discussed in Sect.~\ref{sec:overviewofthenewalgorithm} in detail. 
Let~$G = (V_0\uplus V_1,E,p)$ with $|V_1| = k$ be a parity game with~$p$ distinct priorities.

The algorithm~$\textbf{new-win}_1$ starts by trying to find a ``small'' dominion $D$, where small means~$|D \cap V_1| \leq \ell$, where $\ell = \lfloor \sqrt{2k}\rfloor$ is a parameter chosen to minimize the run time of the algorithm.
If such an $i$-dominion is found, then we remove it together with its $i$-reachability set from the game and solve the remaining game recursively.
If no small dominion is found, then~$\textbf{new-win}_1$ simply calls algorithm~$\textbf{old-win}_1$, which is almost identical to algorithm $\textbf{win}$.
The only difference between $\textbf{old-win}_1$ and $\textbf{win}$ is that its recursive calls are made to $\textbf{new-win}_1$ and not to itself.

The recursion stops once the number of odd nodes is at most $4$, in which case we will test each of the at most $((p+1)^4)^4$ (due to the size of our kernel) different strategies for player~$1$ in constant time.
We will call this brute force method $\textbf{solve}(G)$.
We will also kernelize using the reduction rules described in Sect.~\ref{sec:reductionofparitygames}. We will call the kernelization subroutine $\textbf{kernel}(G)$.
The pseudocode of $\textbf{new-win}_1(G)$ can be found in Sect.~\ref{sec:pseudocode-new-win}.

The correctness of the algorithm follows analogously to the correctness of~$\textbf{win}(G)$.
We analyze the run time of~$\textbf{new-win}_1(G)$
and prove Theorem~\ref{thm:fixed-parameter-k-main} in section~\ref{sec:RunTimeAppendix}.

\section{Out-degree based Algorithm}
\label{sec:degreebased_algorithm}
We now describe our second algorithm~$\textbf{new-win}_2(G,j)$.
In order to describe it, let~$j\in\mathbb{N}$ and let~$s_j$ denote the number of nodes of out-degree at most~$j$. 
$\textbf{new-win}_2(G,j)$ is then almost identical to $\textbf{new-win}_1(G)$, but instead of dominions that contains at most~$\ell' = \lfloor \sqrt{2k}\rfloor$
nodes of the odd player, we search for and delete dominions that contain at most~$\ell=\lceil\sqrt{2(n-s_j)}\rceil$ nodes with out-degree larger than~$j$ and
at most~$s=\lceil \sqrt{s_j \cdot \log_{j}{s_j}}\rceil$ nodes with out-degree at most~$j$. 
This algorithm has a run time of 
$
n^{O\left(\sqrt{n-s_j} + \sqrt{\frac{s_j}{\log_{j}{s_j}}}\right)},
$ 
which implies Theorem~\ref{thm:outdegreebased-algorithm}.

In the following let us assume~$j =\arg\min_{1\leq j' \leq n} \Big\{\sqrt{n-s_{j'}}+\sqrt{\frac{s_{j'}}{\log_{j'}{s_{j'}}}}\Big\}$. 
We say that a node~$v$ has \emph{high out-degree} if $|\delta^+(v)| > j$ and \emph{low out-degree} otherwise.
For $n,z,\ell,s\in\mathbb N$, let $\mathsf{dom}_{n,z}(\ell,s)$ denote the maximum number of steps required to find a dominion~$D$ with at most $\ell$ high out-degree and at most $s$ low out-degree nodes in a parity game $G = (V_0\uplus V_1,E,p)$ with $n$ nodes out 
of which $z$ are high out-degree nodes, or to determine that no such dominion exists.

\begin{lemma}
\label{dominionsize2}
 For all values of $\ell,s \in \{0,\ldots,n\}$, it holds
 \[
    \mathsf{dom}_{n,z}(\ell,s) = O\left(n^{\ell+1} j^s (\ell +s)^2 \cdot \max\{1,\log(\ell+s)\}\right)
                               = O\left(n^{\ell+4} j^s\right).
 \]
\end{lemma}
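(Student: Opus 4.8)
The plan is to mirror the argument of Lemma~\ref{dominionsize1}, but track the run-time contribution of Algorithm~\ref{algorithm:dominion} ($\textbf{dominion}(G,\ell,s)$) explicitly rather than reducing to recursive calls. The algorithm enumerates a starting node $v$ (at most $n$ choices, times $2$ for the owner $i$) and a guess vector $\langle a_1,\dots,a_\ell,b_1,\dots,b_s\rangle$ that prescribes which outgoing edge player $i$ takes at the $r_1$-th high-out-degree node and the $r_2$-th low-out-degree node encountered during a BFS/DFS-style exploration of the candidate dominion. The $a_r$ range over $\{1,\dots,n\}$, so there are $n^\ell$ choices for them, and the $b_r$ range over $\{1,\dots,j\}$, so there are $j^s$ choices for them; altogether the three nested loops execute the while-loop body $O(n \cdot n^\ell \cdot j^s) = O(n^{\ell+1} j^s)$ times.

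Next I would bound the cost of one iteration of the inner block. The while-loop explores a set $U_m$ of at most $\ell+s$ nodes (it aborts as soon as $r_1 > \ell$ or $r_2 > s$), so it touches $O(\ell+s)$ nodes; for each node it either follows a single prescribed edge (constant work, after the one-time edge-sorting in lines~1--2) or, for a $\neg i$-node, adds all out-neighbours to $U$ — but since $U_m$ stays bounded by $\ell+s$, the total number of edges examined across the whole while-loop is $O((\ell+s)^2)$ in the worst case (each of the $O(\ell+s)$ explored nodes contributes out-neighbours, and we only care about those landing among the $O(\ell+s)$ candidate nodes; others cause an abort via the degree test). Maintaining $U$ as a priority queue under the order $\prec$ costs an extra $O(\log(\ell+s))$ factor per insertion/extraction, giving $O((\ell+s)^2 \log(\ell+s))$, which we write as $O((\ell+s)^2 \max\{1,\log(\ell+s)\})$ to handle $\ell+s \le 1$. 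Finally, when the exploration yields a closed candidate set $U$ together with the induced strategy $\text{strategy}_i$, the call $\textbf{king}(U,\text{strategy}_i)$ checks in time near-linear in $|U| \le \ell+s$ whether this is genuinely an $i$-dominion (using the algorithm of King et al.~\cite{King2001}); this is dominated by the $O((\ell+s)^2)$ term and absorbed.

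Multiplying the number of iterations by the per-iteration cost gives
\[
 \mathsf{dom}_{n,z}(\ell,s) = O\!\left(n^{\ell+1} j^s (\ell+s)^2 \max\{1,\log(\ell+s)\}\right),
\]
which is the first claimed bound; correctness (that this procedure finds a dominion with the required degree profile whenever one exists, because such a dominion admits a positional strategy whose edge choices, read off in exploration order, form an admissible guess vector, and that it never returns a false positive because of the $\textbf{king}$ verification) follows exactly as in Lemma~\ref{dominionsize1} and from the structure of Algorithm~\ref{algorithm:dominion}. The second, cruder bound $O(n^{\ell+4} j^s)$ then follows from the trivial estimates $\ell+s \le 2n$, hence $(\ell+s)^2 \le 4n^2$ and $\max\{1,\log(\ell+s)\} = O(n)$, so $(\ell+s)^2 \max\{1,\log(\ell+s)\} = O(n^3)$. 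The main thing to be careful about is the per-iteration edge-counting: one has to argue that the $\neg i$-node expansions cannot blow up the work, which is exactly what the abort conditions ``$|\delta^+(u)| > r_1$'' / ``$> r_2$'' and the early termination on $U_m$ exceeding the size budget guarantee — so no single iteration ever examines more than $O((\ell+s)^2)$ edges before halting.
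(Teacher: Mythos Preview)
Your approach is the paper's: enumerate a start node and a guess vector in $\{1,\dots,n\}^\ell\times\{1,\dots,j\}^s$ (giving $O(n^{\ell+1}j^s)$ candidates), grow the candidate $i$-closed set, verify with King et al., and multiply out; the final bounds are correct. One detail is misattributed, though. In the paper the factor $(\ell+s)^2\max\{1,\log(\ell+s)\}$ comes from the \textbf{king} verification step, not from the exploration phase. The exploration aborts as soon as the set being built exceeds its per-type budget, so only $O(\ell+s)$ edges are ever touched there; your ``others cause an abort via the degree test'' justification for bounding the $\neg i$-expansion does not correspond to anything in the algorithm. Conversely, King et al.\ is not near-linear in $|U|$ but near-linear in the number of \emph{edges} of the one-player sub-game on $U$, which can be $(\ell+s)^2$; that is where the quadratic-times-log term actually lives. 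Since your (over)estimate for the exploration already has the right order, the product is unaffected and the lemma stands, but the sentence identifying which step dominates should be corrected.
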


\begin{proof}
  Fix an arbitrary total order $\prec$ on $V(G)$.
  Let $u \in V(G)$ be a node of $G$ and let $(u, v_1), \ldots, (u,v_{|\delta^+(u)|})$ be the edges   emanating from $u$, where $v_i \prec v_{i+1}$ for all $i \in \{1,\ldots|\delta^+(v)|-1\}$; we call $(u, v_i)$ the \emph{$i$-th outgoing edge of $u$}.
  The algorithm generates at most $O(n\cdot n^{\ell} j^s)$ $0$-closed sets of nodes that contain at most $\ell$ nodes with an out-degree greater than~$j$ and at most $s$ nodes with an out-degree at most $j$, which are candidates for being $0$-dominions.
  For every node $v \in V$ and every sequence $\left\langle a_1, \ldots , a_{\ell}, b_1, \ldots,b_s\right\rangle \in \{1,\ldots, n \}^{\ell}\times\{1,\ldots,j\}^{s}$ construct a set $U \subseteq V$ as follows.
  Start with $U = \{v\}$ and $r_1 = 1, r_2 = 1$.
  Nodes added to $U$ are initially unmarked.
  As long as there is still an unmarked node in $U$, pick the smallest such node $u \in U$ with respect to $\prec$ and mark it.
  \begin{itemize}
    \item If $u \in V_0$ and $u$ has high out-degree then add the endpoint of the $a_{r_1}$-th outgoing edge of $u$ to $U$ (if it is not already present in $U$) and increment $r_1$.
    \item If $u \in V_0$ and $u$ has low out-degree then add the endpoint of the $b_{r_2}$-th outgoing edge of $u$ to $U$ (if it is not already present in $U$) and increment $r_2$.
    \item If $u \in V_1$ then add the endpoints of all outgoing edges of $u$ that are not yet part of $U$ to $U$.
  \end{itemize}
  If at some stage $U$ contains either more than $\ell$ nodes with high out-degree or more than $s$ nodes with low out-degree, or the endpoint of the $i$-th outgoing edge of some node $v$ with out-degree $|\delta^+(v)| < i$ should be added to $U$, then discard the set $U$ and restart the construction
with the next sequence.
  If the process above ends without discarding $U$, then a $0$-closed set containing at most $\ell$ high out-degree and at most $s$ low out-degree nodes has been found.
  Furthermore, for every node $u \in U \cap V_0$, one of the outgoing edges of~$u$ was selected.
  This corresponds to a suggested strategy for player~$0$ in the game $G$ restricted to the set~$U$.
  
  Our algorithm therefore considers by exhaustive search all $0$-closed sets containing at most $\ell$ high out-degree and at most $s$ low out-degree nodes, 
and for each set considers all possible positional strategies for player~$0$.
  Using an algorithm of King et al.~\cite{King2001} we can check in time $O((\ell +s)^2 \log(\ell+s) )$ time whether a given pair of set $U$ and proposed strategy is indeed a winning strategy for
player~$0$ from all nodes of $U$. 
  Thus, if there is a $0$-dominion containing at most~$\ell$ high out-degree and at most $s$ low out-degree nodes, then the algorithm will find one.
  Finding $1$-dominions can be done in an analogous manner.
\end{proof}

With the just described algorithm, we can find a dominion~$D$ with at most $\ell$ nodes with high out-degree and at most $s$ nodes with low out-degree if such a dominion exists.
We denote this algorithm by $\textbf{dominion}_2(G,\ell,s)$, and suppose that it returns either the pair $(D,i)$ if such an $i$-dominion $D$ is found, or $(\emptyset,-1)$ if not.

The algorithm~$\textbf{new-win}_2$ starts by trying to find a ``small'' dominion $D$, where small means that~$D$ contains 
at most $\ell$ nodes with out-degree greater than~$j$ and at most $s$ nodes with out-degree at most $j$, where $\ell = \lceil \sqrt{2(n-s_j)}\rceil$ and $s = \lceil \sqrt{s_j \cdot \log_{j}{s_j}}\rceil$ are parameters chosen to minimize the run time of the whole algorithm.
If such an $i$-dominion is found, then we remove it together with its $i$-reachability set from the game and solve the remaining game recursively.
If no small dominion is found, then $\textbf{new-win}_2$ simply calls algorithm $\textbf{old-win}_2$, which is almost identical to algorithm $\textbf{win}$.
The only difference between $\textbf{old-win}_2$ and $\textbf{win}$ is that its recursive calls are made to $\textbf{new-win}_2$ and not to itself.

The recursion stops once the number of nodes with out-degree at most~$j$ and the number of nodes with out-degree greater than $j$ are both at most $3$, in which case we will test all of the at most constant different strategies for the two players in constant time.
We will call this brute force method \textbf{solve($G$)}.
The pseudocode of $\textbf{new-win}_2$ can be found in Sect.~\ref{sec:pseudocode-new-win}.

The correctness of $\textbf{new-win}_2$ follows analogously to the correctness of the simple algorithm~$\textbf{win}$.
We analyze the run time of~$\textbf{new-win}_2$
and prove Theorem~\ref{thm:outdegreebased-algorithm} in Sect.~\ref{sec:RunTimeAppendix}.

We can now proove Corollary~\ref{cor:outdegreebased-algorithm}.

\begin{proof}[Corollary~\ref{cor:outdegreebased-algorithm}]
  First consider a parity game on $n$ nodes played on a graph with maximum out-degree~$\Delta$.
  Then~$s_{\Delta}=n$ and
  \begin{equation*}
   \sqrt{n-s_{\Delta}} + \sqrt{\frac{s_{\Delta}}{\log_{\Delta}{s_{\Delta}}}}
   = \sqrt{\frac{n}{\log_{\Delta}{n}}}
   = \sqrt{\frac{\log(\Delta)n}{\log{n}}} \enspace .
  \end{equation*}
Now the first part of the corollary follows immediately from Theorem~\ref{thm:outdegreebased-algorithm}.

  Let us now consider the case that the average out-degree is~$\Delta$ and let~$z=\log(n)\Delta$.
  Then Markov's inequality implies~$s_{z} \ge (1-1/\log(n))n$. Hence,
    \begin{equation*}
       \sqrt{n-s_{z}} + \sqrt{\frac{s_{z}}{\log_{z}{s_{z}}}}
        \le \sqrt{\frac{n}{\log(n)}} + \sqrt{\frac{n}{\log_{z}{n}}}
      = \sqrt{\frac{n}{\log(n)}} + \sqrt{\frac{\log(\log(n)\Delta)n}{\log{n}}} \enspace .
    \end{equation*}
    Now the second part of the corollary follows immediately from Theorem~\ref{thm:outdegreebased-algorithm}.
\end{proof}

\section{Pseudocode for Algorithms $\textbf{new-win}$}
\label{sec:pseudocode-new-win}
We will now give the pseudocode for the algorithms $\textbf{new-win}_1(G)$ and $\textbf{new-win}_2(G,j)$ together with their subroutines $\textbf{old-win}_1(G)$ and $\textbf{old-win}_2(G,j)$.
In the pseudocodes, we call a function $\textbf{solve}(G)$. This function denotes a bruteforce method to solve parity games and is only used on very small games.

\begin{algorithm}[H]
 \caption{$\textbf{new-win}_1(G)$}
 \label{algorithm:new-win}
 \begin{algorithmic}[1]
   \Require A parity game $G = (V_0\uplus V_1,E,p)$.
   \Ensure A partition $(W_0,W_1)$ of $V$, where $W_i$ is the winning set of player~$i\in\{0,1\}$.
   \State $k \leftarrow |V_1| ; \ell \leftarrow \left\lfloor \sqrt{2k}\right\rfloor$ ; $G = \textbf{kernel}(G)$
	\If{$k\leq 4$}
	 \Return $\textbf{solve}(G)$
   \EndIf
   \State $(D,i) \leftarrow \textbf{dominion}_1(G,\ell)$		
   \If{$D = \emptyset$}
    \State $(W_0,W_1) \leftarrow \textbf{old-win}_1(G)$
   \Else
    \State $(W'_0,W'_1) \leftarrow \textbf{new-win}_1(G - \reach_i(D))$ 
    \State $(W_{\neg i},W_i) \leftarrow (W'_{\neg i},V \setminus W'_{\neg i})$
   \EndIf\\
   \Return $(W_0,W_1)$
 \end{algorithmic}
\end{algorithm}

\begin{algorithm}[H]
 \caption{$\textbf{old-win}_1(G)$}
 \label{algorithm:old-win}
 \begin{algorithmic}[1]
   \Require A parity game $G = (V_0\uplus V_1,E,p)$.
   \Ensure A partition $(W_0,W_1)$ of $V$, where $W_i$ is the winning set of player~$i\in\{0,1\}$.
	 \State $G = \textbf{kernel}(G)$
   \State $i \leftarrow p_{\max} \pmod 2$
   \State $(W'_0,W'_1) \leftarrow \textbf{new-win}_1(G - \reach_i(V_{p_{\max}}))$
   \If{$W'_{\neg i} = \emptyset$}
    \State $(W_i,W_{\neg i}) \leftarrow (V, \emptyset)$
   \Else
    \State $(W''_0,W''_1) \leftarrow \textbf{new-win}_1(G - \reach_{\neg i}(W'_{\neg i}))$ 
	 \State $(W_i,W_{\neg i}) \leftarrow (W''_i,V \setminus W''_i)$
   \EndIf\\
   \Return $(W_0,W_1)$
 \end{algorithmic}
\end{algorithm}

\begin{algorithm}[H]
 \caption{$\textbf{new-win}_2(G,j)$}
 \label{algorithm:new-win2}
 \begin{algorithmic}[1]
   \Require A parity game $G = (V_0\uplus V_1,E,p)$ and $j \in \{1,\ldots |V|\}$.
   \Ensure A partition $(W_0,W_1)$ of $V$, where $W_i$ is the winning set of player~$i\in\{0,1\}$.
	\State $s_j \leftarrow |\{v \in V | |\delta^+(v)| \leq j\}| ; \ell \leftarrow \left\lceil \sqrt{2(n-s_j)}\right\rceil ; s \leftarrow \left\lceil \sqrt{s_j\cdot\log_j{s_j}}\right\rceil$
	\If{$s_j\leq 3$ and $n-s_j \leq 3$}
	 \Return $\textbf{solve}(G)$
   \EndIf
   \State $(D,i) \leftarrow \textbf{dominion}_2(G,\ell,s)$		
   \If{$D = \emptyset$}
    \State $(W_0,W_1) \leftarrow \textbf{old-win}_2(G,j)$
   \Else
    \State $(W'_0,W'_1) \leftarrow \textbf{new-win}_2(G - \reach_i(D),j)$ 
    \State $(W_{\neg i},W_i) \leftarrow (W'_{\neg i},V \setminus W'_{\neg i})$
   \EndIf\\
   \Return $(W_0,W_1)$
 \end{algorithmic}
\end{algorithm}

\begin{algorithm}[H]
 \caption{$\textbf{old-win}_2(G,j)$}
 \label{algorithm:old-win2}
 \begin{algorithmic}[1]
   \Require A parity game $G = (V_0\uplus V_1,E,p)$.
   \Ensure A partition $(W_0,W_1)$ of $V$, where $W_i$ is the winning set of player~$i\in\{0,1\}$.
   \State $i \leftarrow p_{\max} \pmod 2$
   \State $(W'_0,W'_1) \leftarrow \textbf{new-win}_2(G - \reach_i(V_{p_{\max}}),j)$
   \If{$W'_{\neg i} = \emptyset$}
    \State $(W_i,W_{\neg i}) \leftarrow (V, \emptyset)$
   \Else
    \State $(W''_0,W''_1) \leftarrow \textbf{new-win}_2(G - \reach_{\neg i}(W'_{\neg i}),j)$ 
	 \State $(W_i,W_{\neg i}) \leftarrow (W''_i,V \setminus W''_i)$
   \EndIf\\
   \Return $(W_0,W_1)$
 \end{algorithmic}
\end{algorithm}

\section{Analysis of the Run Time}\label{sec:RunTimeAppendix}

We will show that algorithm $\textbf{new-win}(G)$ has a run time of $O(p \cdot m\cdot n) \cdot (p+k)^{O(\sqrt{k})}$ on general instances and in time $O(n^3) \cdot k^{O(\sqrt{k})}$ on bipartite instances. We will also show that algorithm $\textbf{new-win}(G,j)$ has a run time of $n^{O\left(\sqrt{n-s_j} + \sqrt{\frac{s_j}{\log_{j}{s_j}}}\right)}$, where $s_j$ the number of nodes in $G$ with out-degree at most $j$.

Note that the part $O(p m n)$ of the run time comes from the reduction of the instance and the computation and removal of reachability sets of found dominions. Since we do both of these often,
we assume them to be elementary computations with computation time $O(1)$ and show that the total run time remaining is $(p+k)^{O(\sqrt{k})}$.
In bipartite instances we need $O(n^3)$ time to reduce the instance and to compute and remove reachability sets.
We will show that the run time on bipartite instances is $k^{O(\sqrt{k})}$, when computation and removal of reachability sets and the reductions are viewed as an elementary operation.
Let $T(k,p)$ denote the time required by algorithm \textbf{new-win} on a game $G = (V_0\uplus V_1,E,p)$ with $|V_1| = k$ and $p = |\{p(v)\mid v \in V\}|$, when reduction of an instance and computation and removal of reachability sets of found dominions are viewed as elementary computations and have run time $O(1)$. 
\begin{lemma}
\label{lemma:recurrences1}
  The following recurrence relation holds:
  \begin{equation*}
	 (a)  \quad T(k,p) \leq \max\{T(k-1,p), T(k,p-1) + T(k-\ell,p)\}
						       + \mathsf{dom}_{k,p}(\ell) + O(1) \enspace .
 \end{equation*}
\end{lemma}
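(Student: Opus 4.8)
The plan is to trace one invocation of $\textbf{new-win}_1(G)$ on a game with $|V_1| = k$ and $p$ distinct priorities and to bound its cost by the right-hand side of $(a)$. After the call $G = \textbf{kernel}(G)$ — which costs $O(1)$ under the standing convention of this section — and the constant-time base-case test $k \le 4$, the algorithm runs $\textbf{dominion}_1(G,\ell)$ at cost $\mathsf{dom}_{k,p}(\ell)$ (here $\ell = \lfloor \sqrt{2k}\rfloor$), and then branches on whether a dominion $D$ with $|D \cap V_1| \le \ell$ was returned. Forming reachability sets, taking complements, and assembling the output are all $O(1)$ by the same convention, as is the entire overhead of $\textbf{old-win}_1$ outside its recursive calls; so the total overhead outside the recursive calls is $\mathsf{dom}_{k,p}(\ell) + O(1)$, and it remains to bound the recursive calls in the two branches by $T(k-1,p)$ and $T(k,p-1) + T(k-\ell,p)$ respectively. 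I will also first record the monotonicity of $T$ in both arguments: padding a game with an isolated extra odd node, or shifting all priorities up by two to create one more priority class, never decreases the worst-case running time, so a recursive call on a game with \emph{at most} $k'$ odd nodes and \emph{at most} $p'$ priorities costs at most $T(k',p')$.

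\emph{First branch: $\textbf{dominion}_1$ returns a dominion $D$.} Because $\textbf{kernel}(G)$ was applied first, the game contains no cycle lying entirely in $V_0$: in the bipartite case this is immediate from bipartiteness, and in the general case it follows from the ``long path'' step of Lemma~\ref{lemma:reduction}, after which every edge inside $V_0$ runs from an old node to one of the new nodes $v(p',w)$ whose unique out-edge leaves $V_0$. Since an $i$-dominion necessarily contains a cycle, $D$ contains at least one node of $V_1$, so $\reach_i(D) \supseteq D$ removes at least one odd node; hence $G - \reach_i(D)$ has at most $k-1$ odd nodes and at most $p$ priorities, and the single recursive call $\textbf{new-win}_1(G - \reach_i(D))$ costs at most $T(k-1,p)$ by monotonicity. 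This is the first term of the maximum.

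\emph{Second branch: no such dominion exists, and $\textbf{old-win}_1(G)$ is invoked.} Its first recursive call is on $G' = G - \reach_i(V_{p_{\max}})$ with $i = p_{\max} \bmod 2$; since $V_{p_{\max}} \subseteq \reach_i(V_{p_{\max}})$, the priority $p_{\max}$ disappears, so $G'$ has at most $p-1$ priorities and at most $k$ odd nodes, and this call costs at most $T(k,p-1)$. If the returned set $W'_{\neg i}$ is empty, $\textbf{old-win}_1$ returns without a second call, and the branch costs at most $T(k,p-1) \le T(k,p-1) + T(k-\ell,p)$. Otherwise, by Lemma~\ref{thm:recurse-after-reach-removable} together with Lemmas~\ref{thm:winning-sets-are-closed} and~\ref{thm:winning-sets-are-complement-closed}, $W'_{\neg i} = \mathsf{win}_{\neg i}(G')$ is a $(\neg i)$-dominion of the whole game $G$; since $\textbf{dominion}_1(G,\ell)$ found nothing and $\ell = \lfloor\sqrt{2k}\rfloor \le k$, this dominion must satisfy $|W'_{\neg i} \cap V_1| \ge \ell+1$. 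Hence $G'' = G - \reach_{\neg i}(W'_{\neg i})$ has at most $k-\ell-1 \le k-\ell$ odd nodes and at most $p$ priorities, so the second recursive call $\textbf{new-win}_1(G'')$ costs at most $T(k-\ell,p)$. Summing the two calls gives $T(k,p-1) + T(k-\ell,p)$; combining the two branches with the overhead yields $(a)$.

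\emph{Expected main obstacle.} The delicate point is the ``no small dominion $\Rightarrow$ a large dominion is removed'' step of the second branch: one must argue that $W'_{\neg i}$ is a dominion of $G$ itself, not merely of the sub-game $G'$, so that its absence from the output of $\textbf{dominion}_1(G,\ell)$ forces $|W'_{\neg i}\cap V_1| > \ell$, and one must check this is compatible with $\textbf{dominion}_1$ enumerating all size-$\ell$ subsets of $V_1$ (which requires $\ell \le k$, true here for $k \ge 5$). Everything else is routine accounting, modulo the monotonicity remark and the observation that kernelization destroys all cycles internal to $V_0$.
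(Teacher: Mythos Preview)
Your proposal is correct and follows essentially the same approach as the paper's own proof: trace one call of $\textbf{new-win}_1$, charge $\mathsf{dom}_{k,p}(\ell)+O(1)$ for the dominion search and overhead, bound the ``dominion found'' branch by $T(k-1,p)$, and in the ``no dominion'' branch bound the two recursive calls of $\textbf{old-win}_1$ by $T(k,p-1)$ and $T(k-\ell,p)$ using that $W'_{\neg i}$ is a $(\neg i)$-dominion of $G$ with more than~$\ell$ odd nodes. You make explicit two points the paper leaves tacit---monotonicity of $T(k,p)$ in both arguments, and the fact that after kernelization no cycle lies entirely in $V_0$ so any returned dominion meets $V_1$---but these are genuine (and necessary) details rather than a different route; the paper simply asserts ``at most $k-1$ odd nodes'' and cites Lemma~\ref{thm:winning-sets-closed-under-subgames} where you cite Lemmas~\ref{thm:recurse-after-reach-removable}, \ref{thm:winning-sets-are-closed}, and~\ref{thm:winning-sets-are-complement-closed}.
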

\begin{proof}
  Algorithm $\textbf{new-win}(G)$ tries to find dominions $D$ with $|D\cap V_1| \leq\ell= \lfloor \sqrt{2k}\rfloor$.
  By definition this takes at most $\mathsf{dom}_{k,p}(\ell)$ time on general instances. If a (non-empty) dominion is found, then the algorithm simply proceeds on the remaining game, which has at most $k-1$ odd nodes, and thus it solves this game in time bounded by $T(k-1,p)$.
  Otherwise, a call to $\textbf{old-win}(G)$ is made.
  This results in a call to $\textbf{new-win}(G - \reach_i(V_{p_{\max}}))$.
  In this case the call takes at most $T(k,p-1)$ time because we removed all nodes with the highest priority.
  If the set $W'_j$ returned by the call is empty, then we are done.
  Otherwise, $W'_j = \mathsf{win}_j(G - \reach_i(V_{p_{\max}}))$ and $W'_j\subseteq \mathsf{win}_j(G)$ by Lemma~\ref{thm:winning-sets-closed-under-subgames}.
  Therefore, $W'_j$ is a $j$-dominion of $G$.
  We are in the case that there is no dominion $D$ with $|D\cap V_1|\le \ell$ in~$G$.
  Thus, $|W'_j \cap V_1| > \ell$, and hence the second recursive call $\textbf{new-win}(G - \reach_j(W'_j))$ takes time at most $T(k-\left\lceil \ell\right\rceil,p)$.
  Consequently,
  \begin{equation*} 
    T(k,p) \leq \max\{T(k-1,p), T(k,p-1) + T(k-\left\lceil \ell\right\rceil,p)\} + \mathsf{dom}_{k,p}(\ell) + O(1).\quad\qedhere
  \end{equation*}
\end{proof}

Let $T(k,\mathsf{even})$ and $T(k,\mathsf{odd})$ denote the time required by algorithm \textbf{new-win} on a bipartite game $G = (V_0\uplus V_1,E,p)$ with $|V_1| = k$ when the largest priority is even respectively odd, when computation and removal of reachability sets of found dominions and the reductions are viewed as elementary computations.
We denote by $T(k)$ the time required by algorithm \textbf{new-win} on any bipartite game with $|V_1| = k$; thus $T(k) \leq \max\{T(k,\mathsf{even}),T(k,\mathsf{odd})\}$.
\begin{lemma}
\label{lemma:recurrences2}
  The following recurrence relations hold:
  \begin{alignat*}{2}
   (b_1)\quad && T(k,\mathsf{odd}) & \leq T(k-1) + T(k-\ell) + \mathsf{dom}_k(\ell) +O(1),\\
   (b_2)\quad && T(k,\mathsf{even}) & \leq \max\{T(k,\mathsf{odd}), T(k-1)\} + T(k-\ell) + \mathsf{dom}_k(\ell)+O(1)\\
              &&		& \leq T(k-1) + 2T(k-\ell) + 2\mathsf{dom}_k(\ell)+O(1),\\
   (b_3)\quad && T(k) & \leq T(k-1) + 2T(k-\ell) + 2\mathsf{dom}_k(\ell)+O(1) \enspace .
 \end{alignat*}
\end{lemma}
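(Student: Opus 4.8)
The plan is to analyze the recursion of $\textbf{new-win}_1$ on a bipartite game $G$ with $|V_1|=k$, tracking the number $k$ of odd nodes together with the parity of the largest priority, exactly as in the proof of Lemma~\ref{lemma:recurrences1} but exploiting bipartiteness to get a genuinely smaller subproblem in \emph{both} recursive branches. First I would handle $(b_1)$: suppose the maximum priority $p_{\max}$ is odd, so $i = p_{\max}\bmod 2 = 1$. In $\textbf{old-win}_1$ the first recursive call is to $\textbf{new-win}_1(G-\reach_1(V_{p_{\max}}))$; since $p_{\max}$ is odd, $V_{p_{\max}}\subseteq V_1$ is nonempty, and $\reach_1(V_{p_{\max}})$ contains all of $V_{p_{\max}}$, so at least one odd node is removed. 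Hence this call costs at most $T(k-1)$. If $W'_{\neg i}=\emptyset$ we stop; otherwise, because we only reach $\textbf{old-win}_1$ when $\textbf{dominion}_1(G,\ell)$ found no dominion with at most $\ell$ odd nodes, and $W'_{\neg i}$ is a $(\neg i)$-dominion of $G$ by Lemma~\ref{thm:winning-sets-closed-under-subgames}, we have $|W'_{\neg i}\cap V_1|>\ell$, so the second call $\textbf{new-win}_1(G-\reach_{\neg i}(W'_{\neg i}))$ is on a game with at most $k-\ell$ odd nodes and costs at most $T(k-\ell)$. Adding the cost $\mathsf{dom}_k(\ell)$ of the dominion search and $O(1)$ for the rest gives $(b_1)$.

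Next I would treat $(b_2)$, where $p_{\max}$ is even, so $i=0$. Now $\reach_0(V_{p_{\max}})$ need not contain any odd node, so the first recursive call $\textbf{new-win}_1(G-\reach_0(V_{p_{\max}}))$ is on a game with at most $k$ odd nodes — but here I would invoke the key structural point announced in the overview (Sect.~\ref{sec:overviewofthenewalgorithm}): in a bipartite game, removing $\reach_0(V_{p_{\max}})$ when $p_{\max}$ is even yields a game whose largest priority must be \emph{odd} (the removed even-priority nodes were exactly the top layer, and any surviving node of the old top priority would have been reachable; more carefully, after kernelization the priority set is contiguous, and removing $V_{p_{\max}}$ and its $0$-reachability set drops the top priority, which was even, leaving an odd top priority — or the game empties, or the argument of the overview that two consecutive ``expensive'' calls cannot both keep $k$ odd nodes applies). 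Thus the first call costs at most $T(k,\mathsf{odd})$, and if it does not already equal $k$ odd nodes then it is $T(k-1)$; in any case it is bounded by $\max\{T(k,\mathsf{odd}),T(k-1)\}$. The second call, as in $(b_1)$, removes a dominion with more than $\ell$ odd nodes and costs $T(k-\ell)$; together with $\mathsf{dom}_k(\ell)+O(1)$ this gives the first line of $(b_2)$. Substituting $(b_1)$ into $\max\{T(k,\mathsf{odd}),T(k-1)\}$ and using monotonicity $T(k-1)\le T(k-1)+T(k-\ell)+\mathsf{dom}_k(\ell)+O(1)$ yields the second line, and $(b_3)$ is then immediate since $T(k)\le\max\{T(k,\mathsf{even}),T(k,\mathsf{odd})\}$ and the $(b_2)$ bound dominates the $(b_1)$ bound.

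The step I expect to be the genuine obstacle is justifying, in the even case, that the first recursive call really does make progress — i.e.\ that after removing $\reach_0(V_{p_{\max}})$ from a bipartite game with even top priority the resulting game has odd top priority (so that we may bound it by $T(k,\mathsf{odd})$ rather than by $T(k,\mathsf{even})$, which would make the recurrence useless). This relies on the interplay between the kernelization rule of Lemma~\ref{prioritycriteria} (making the priority set contiguous, so that ``one fewer distinct priority'' forces a parity flip of the maximum) and the fact that $\textbf{old-win}_1$ calls $\textbf{kernel}$ at its entry. I would make this precise by noting that $\reach_0(V_{p_{\max}})\supseteq V_{p_{\max}}$ removes every node of the (even) top priority, so the new maximum priority is strictly smaller; after kernelization it is again contiguous starting at $0$ or $1$, and since it decreased by at least one step in a contiguous range, its parity is now odd (or the game is empty, handled trivially by the $T(0)$ base case). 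Everything else — the dominion-size bookkeeping, the $\mathsf{dom}_k(\ell)$ and $O(1)$ terms, and the monotonicity manipulations turning the two-line $(b_2)$ into $(b_3)$ — is routine and parallels Lemma~\ref{lemma:recurrences1}.
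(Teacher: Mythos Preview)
Your argument for $(b_1)$ contains a genuine error: you claim that because $p_{\max}$ is odd, $V_{p_{\max}}\subseteq V_1$. This conflates the parity of a node's \emph{priority} with the identity of the \emph{player} owning the node; these are completely independent in a parity game. A node of odd priority may perfectly well belong to $V_0$. The paper's argument is different: after kernelization, Lemma~\ref{indegreecriteria} guarantees that every node has an in-neighbour. If $V_{p_{\max}}\cap V_1\neq\emptyset$ we are done; otherwise $V_{p_{\max}}\subseteq V_0$, and any $v\in V_{p_{\max}}$ has an in-neighbour $w$, which by bipartiteness lies in $V_1$. Since $w\in V_1$ and $(w,v)\in E$, player~$1$ can move from $w$ into $V_{p_{\max}}$, so $w\in\reach_1(V_{p_{\max}})\cap V_1$. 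This is why Lemma~\ref{indegreecriteria} is invoked, and this is precisely where bipartiteness is used in the odd case.

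Your argument for $(b_2)$ has a related gap. You assert that after removing $\reach_0(V_{p_{\max}})$ the new top priority is always odd, because the contiguous priority range loses its top (even) element. But $\reach_0(V_{p_{\max}})$ may contain \emph{all} nodes of priority $p_{\max}-1$ as well, in which case the new maximum could be $p_{\max}-2$ or smaller and hence even. The paper instead argues a dichotomy: either $\reach_0(V_{p_{\max}})\cap V_1\neq\emptyset$, giving the bound $T(k-1)$ outright, or $\reach_0(V_{p_{\max}})\cap V_1=\emptyset$. In the latter case bipartiteness forces $\reach_0(V_{p_{\max}})=V_{p_{\max}}$ (any $v\in V_0\setminus V_{p_{\max}}$ has all out-neighbours in $V_1$, none of which lie in $\reach_0(V_{p_{\max}})$, so $v\notin\reach_0(V_{p_{\max}})$). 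Only then are exactly the priority-$p_{\max}$ nodes removed, and contiguity (Lemma~\ref{prioritycriteria}) gives the odd new maximum and the bound $T(k,\mathsf{odd})$. You gesture at this dichotomy (``if it does not already equal $k$ odd nodes then it is $T(k-1)$''), but the crucial implication $\reach_0(V_{p_{\max}})\cap V_1=\emptyset\Rightarrow\reach_0(V_{p_{\max}})=V_{p_{\max}}$ is what makes the parity claim valid in the remaining case, and it is missing from your reasoning.
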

\begin{proof}
  From the definition it follows directly that $T(k) \leq \max\{T(k,\mathsf{even}),T(k,\mathsf{odd})\}$.
  Showing $(b_1)$ and $(b_2)$ therefore yields $(b_3)$.
  Algorithm $\textbf{new-win}(G)$ tries to find dominions $D$ with $|D\cap V_1| \leq\ell= \lceil \sqrt{2k}\rceil$.
  By definition this takes at most $\mathsf{dom}_k(\ell)$ time on bipartite instances.
  If a (non-empty) dominion is found, then the algorithm simply proceeds on the remaining game, which has at most $k-1$ odd nodes, and the remaining run time is therefore at most $T(k-1)$.
  Otherwise, a call to $\textbf{old-win}(G)$ is made.
  This results in a call to $\textbf{new-win}(G - \reach_i(V_{p_{\max}}))$.
  Here we have to distinguish whether the highest priority is odd or even.

  If the highest priority is odd then, by Lemma~\ref{indegreecriteria}, the set~$\reach_1(V_{p_{\max}}) \cap V_1$ is non-empty and the call takes at most $T(k - 1)$ time.

 In case the highest priority is even, we either have $\reach_0(V_{p_{\max}}) \cap V_1 \neq \emptyset$ or $\reach_0(V_{p_{\max}}) = V_{p_{\max}}$ in which case Lemma~\ref{prioritycriteria} yields that in $G - \reach_i(V_{p_{\max}})$ the highest priority has to be odd.
 Therefore, this call needs time at most $\max\{T(k,\mathsf{odd}), T(k-1)\}$.

  If the set $W'_j$ returned by the call is empty, then we are done.
  Otherwise, $W'_j = \mathsf{win}_j(G - \reach_i(V_{p_{\max}}))$ and this is part of $\mathsf{win}_j(G)$ by Lemma~\ref{thm:winning-sets-closed-under-subgames}.
  Therefore,~$W'_j$ is a $j$-dominion of $G$.
  We are in the case that there is no dominion $D$ with $|D\cap V_1|$ at most $\ell$ in $G$, so we
know that $|W'_j \cap V_1| > \ell$ , and therefore the second recursive call $\textbf{new-win}(G - \reach_j(W'_j))$
takes at most $T(k-\ell)$ time.
  Thus, we obtain 
  \begin{equation*}
   T(k,\mathsf{odd}) \leq T(k-1) + T(k-\ell) + \mathsf{dom}_k(\ell)
  \end{equation*}
  and
  \begin{align*}
   T(k,\mathsf{even}) & \leq \max\{T(k,\mathsf{odd}),T(k-1)\} + T(k-\ell) + \mathsf{dom}_k(\ell)\\
                & \leq T(k-1) + 2T(k-\ell) + 2\mathsf{dom}_k(\ell),
  \end{align*} 
  which yields $T(k) \leq T(k-1) + 2T(k-\ell) + 2\mathsf{dom}_k(\ell)$.
\end{proof}

For $j\in\mathbb N_0$, let $T'(s_j,n-s_j)$ denote the time required by algorithm \textbf{new-win} on a game $G$ on $n$ nodes of which $s_j$ nodes have out-degree at most $j$.

\begin{lemma}
\label{lemma:recurrences3}
  The following recurrence relation holds:
  \begin{alignat*}{2}
	 (c)  \quad && T'(s_j,n-s_j) & \leq \max\{T'(s_j-1,n-s_j),T'(s_j,n-s_j-1)\}\\
							&&		& \quad + \max\{T'(s_j-s,n-s_j),T'(s_j,n-s_j-\ell)\}
					  + \mathsf{dom}_{n,n-s_j}(\ell,s) +O(1) \enspace .
 \end{alignat*}
\end{lemma}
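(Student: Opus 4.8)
The plan is to mirror the case analyses of Lemmas~\ref{lemma:recurrences1} and~\ref{lemma:recurrences2}, now carried out for $\textbf{new-win}_2(G,j)$ together with its subroutine $\textbf{old-win}_2(G,j)$, where the two size parameters are the number $s_j$ of low out-degree nodes and the number $n-s_j$ of high out-degree nodes instead of $k$ and $p$. Throughout I would read $T'(a,b)$ as a function that is non-decreasing in each argument and that bounds the running time on every sub-game of a game it counts; concretely one may take $T'(a,b)$ to be the maximum running time over all games having at most $a+b$ nodes, at most $b$ of which have out-degree larger than $j$. What makes this legitimate is that deleting nodes can only lower out-degrees, so a node is high out-degree in a sub-game only if it was already high out-degree in the supergame; deletions never move a node from the low class to the high class. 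With this convention the per-call bookkeeping is: one execution of $\textbf{dominion}_2(G,\ell,s)$ costs at most $\mathsf{dom}_{n,n-s_j}(\ell,s)$ (the current game has $n$ nodes, exactly $n-s_j$ of them high out-degree, and $\ell=\lceil\sqrt{2(n-s_j)}\rceil$, $s=\lceil\sqrt{s_j\log_j s_j}\rceil$ are the values computed by the algorithm), and all other work in one invocation of $\textbf{new-win}_2$ — reachability computations and removals, the constant-size base case — is $O(1)$ under the standing assumption that reachability operations are elementary.

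Then I would split into cases. \emph{Case 1: $\textbf{dominion}_2$ returns a non-empty $i$-dominion $D$.} The algorithm recurses on $G-\reach_i(D)$, which by Lemma~\ref{thm:winning-set-from-reachability-set} yields the correct winning sets and deletes at least one node, either a low or a high out-degree one, so this recursive call costs at most $\max\{T'(s_j-1,n-s_j),T'(s_j,n-s_j-1)\}$. \emph{Case 2: no dominion with at most $\ell$ high and at most $s$ low out-degree nodes exists,} so $\textbf{old-win}_2(G,j)$ is invoked. Its first recursive call is on $G-\reach_i(V_{p_{\max}})$; since $V_{p_{\max}}\neq\emptyset$ at least one node is removed, costing at most $\max\{T'(s_j-1,n-s_j),T'(s_j,n-s_j-1)\}$. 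If the returned set $W'_{\neg i}$ is empty we stop with no further recursion. Otherwise $W'_{\neg i}=\mathsf{win}_{\neg i}(G-\reach_i(V_{p_{\max}}))$ and, since $V(G)\setminus\reach_i(V_{p_{\max}})$ is $(\neg i)$-closed in $G$ by Lemma~\ref{thm:winning-sets-are-complement-closed}, Lemma~\ref{thm:winning-sets-closed-under-subgames} gives $W'_{\neg i}\subseteq\mathsf{win}_{\neg i}(G)$, so $W'_{\neg i}$ is a $(\neg i)$-dominion of $G$. As $\textbf{dominion}_2(G,\ell,s)$ found nothing, $W'_{\neg i}$ — and hence its superset $\reach_{\neg i}(W'_{\neg i})$ — contains more than $\ell$ high out-degree or more than $s$ low out-degree nodes of $G$; deleting $\reach_{\neg i}(W'_{\neg i})$ therefore drops the number of high out-degree nodes below $n-s_j-\ell$ or the number of low out-degree nodes below $s_j-s$ (using that surviving high out-degree nodes of the sub-game were already high out-degree in $G$), so the second recursive call $\textbf{new-win}_2(G-\reach_{\neg i}(W'_{\neg i}),j)$ costs at most $\max\{T'(s_j-s,n-s_j),T'(s_j,n-s_j-\ell)\}$.

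Combining, the recursive work in Case~1 equals the first of these maxima and is dominated (because $T'\geq 0$) by the sum appearing in Case~2, so in every case the total recursive work is at most $\max\{T'(s_j-1,n-s_j),T'(s_j,n-s_j-1)\}+\max\{T'(s_j-s,n-s_j),T'(s_j,n-s_j-\ell)\}$; adding $\mathsf{dom}_{n,n-s_j}(\ell,s)+O(1)$ gives exactly recurrence $(c)$. I expect the only genuine subtlety to be the one flagged in the first paragraph, namely ensuring the high/low out-degree partition behaves correctly under node deletion (so that removing more than $\ell$ high, resp.\ more than $s$ low, out-degree nodes of $G$ really does decrease the corresponding parameter of the sub-game); the rest is a routine transcription of arguments already present in the proofs of Lemmas~\ref{lemma:recurrences1} and~\ref{lemma:recurrences2}.
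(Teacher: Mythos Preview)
Your argument is correct and follows essentially the same route as the paper's proof: trace through the two branches of $\textbf{new-win}_2$/$\textbf{old-win}_2$, bound each recursive call by the appropriate $T'$ term, and observe that in the second branch the dominion $W'_{\neg i}$ must violate one of the two size thresholds since $\textbf{dominion}_2$ returned nothing. Your explicit treatment of how the high/low out-degree partition behaves under node deletion (and the accompanying monotone convention for $T'$) is in fact more careful than the paper, which simply asserts the bounds $\max\{T'(s_j-1,n-s_j),T'(s_j,n-s_j-1)\}$ and $\max\{T'(s_j-s,n-s_j),T'(s_j,n-s_j-\ell)\}$ without commenting on this point.
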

\begin{proof}
  Algorithm $\textbf{new-win}(G,j)$ tries to find dominions $D$ containing at most~$s$ nodes with out-degree at most $j$ and at most $\ell$ nodes with out-degree greater than $j$.
  By definition this takes at most $\mathsf{dom}_{n,n-s_j}(\ell,s)$ time.
  If a (non-empty) dominion is found, then the algorithm simply proceeds on the remaining game, which has at most $n-1$ nodes, and the remaining time is therefore at most $\max\{T'(s_j-1,n-s_j),T'(s_j,n-s_j-1)\}$.
  Otherwise, a call to $\textbf{old-win}(G,j)$ is made.
  This results in a call to $\textbf{new-win}(G - \reach_i(V_{p_{\max}}),j)$, this call is to a game with fewer nodes and can be solved in time bounded by
  \begin{equation*}  
    \max\{T'(s_j-1,n-s_j),T'(s_j,n-s_j-1)\} \enspace .
  \end{equation*}
  If the set $W'_k$ returned by the call is empty, then we are done.
  Otherwise, $W'_k = \mathsf{win}_k(G - \reach_i(V_{p_{\max}}))$, and $W'_k\subseteq\mathsf{win}_k(G)$ by Lemma~\ref{thm:winning-sets-closed-under-subgames}.
  Therefore, $W'_k$ is a $k$-dominion of $G$.
  We are in the case that there is no dominion $D$ containing at most $s$ nodes with out-degree at most $j$ and at most $\ell$ nodes with out-degree greater than $j$, so $W'_k$ either contains more than $s$ nodes with out-degree at most $j$ or more than $\ell$ nodes with out-degree greater than $j$, and therefore the second recursive call $\textbf{new-win}(G - \reach_k(W'_k))$ takes time bounded by $\max\{T'(s_j-s,n-s_j),T'(s_j,n-s_j-\ell)\}$.

  All other computations can be done in constant time.
  Thus, we obtain 
  \begin{align*}
    T'(s_j,n-s_j) & \leq \max\{T'(s_j-1,n-s_j),T'(s_j,n-s_j-1)\} \\
							& + \max\{T'(s_j-s,n-s_j),T'(s_j,n-s_j-\ell)\}
						 + \mathsf{dom}_{n,n-s_j}(\ell,s) +O(1)\enspace .\quad\qedhere
  \end{align*}
\end{proof}

We analyze recurrences $(a)$ and $(b)$ with $\ell = \lfloor \sqrt{2k}\rfloor$ in Theorem~\ref{thm:solve-recurrences1} in Sect.~\ref{sec:recurrencerelationcomputations}, which eventually shows that $T(k,p) \leq (p+k)^{O(\sqrt{k})}$ and $T(k) \leq k^{O(\sqrt{k})}$, and recurrence $(c)$ with $\ell = \left\lceil  \sqrt{2(n-s_j)}\right\rceil$ and  $s = \left\lceil \sqrt{\frac{s_j}{\log_j{s_j}}}\right\rceil $ in Theorem~\ref{thm:solve-recurrences2} in Sect.~\ref{sec:recurrencerelationcomputations}, which eventually shows that $T'(s_j,n-s_j) \leq n^{O\left(\sqrt{n-s_j}+\sqrt{\frac{s_j}{\log_j{s_j}}}\right)}$.
This completes the analysis of the run time of $\textbf{new-win}(G)$ and $\textbf{new-win}(G,j)$, and it proves Theorem~\ref{thm:fixed-parameter-k-main} and Theorem~\ref{thm:outdegreebased-algorithm}.\qed

\section{Recurrence Relation Computations}
\label{sec:recurrencerelationcomputations}
In this section we analyze the recurrence relations used to bound the run time of~\textbf{new-win}.
\begin{theorem}
\label{thm:solve-recurrences1}
 For $k\in\mathbb N$ and $\ell = \lfloor\sqrt{2k}\rfloor$, we obtain
 \begin{eqnarray*}
	 T(k,p) & = & (p+k)^{O(\sqrt{k})},\\ 
   T(k) & = & k^{O(\sqrt{k})}.
 \end{eqnarray*}
\end{theorem}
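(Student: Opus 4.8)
The plan is to solve the two recurrences from Lemmas~\ref{lemma:recurrences1} and~\ref{lemma:recurrences2} by a guess-and-verify argument, using the bounds on $\mathsf{dom}_{k,p}(\ell)$ and $\mathsf{dom}_k(\ell)$ from Lemma~\ref{dominionsize1}. The key structural feature we exploit is that $\ell = \lfloor\sqrt{2k}\rfloor = \Theta(\sqrt k)$, so that each time the ``hard'' branch of the recursion is taken (the second recursive call in \textbf{old-win}), the parameter $k$ drops by roughly $\sqrt k$; since $\sum_{i} 1/\sqrt{i}$ telescopes like $\sqrt k$, only $O(\sqrt k)$ such drops can occur before $k$ is exhausted. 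Thus along any root-to-leaf path in the recursion tree, the number of ``$k$ decreases by $\ell$'' steps is $O(\sqrt k)$, while ``$k$ decreases by $1$'' steps and ``$p$ decreases by $1$'' steps number at most $k$ and $p$ respectively.

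First I would handle the bipartite case, recurrence $(b_3)$: $T(k) \le T(k-1) + 2T(k-\ell) + 2\mathsf{dom}_k(\ell) + O(1)$ with $\mathsf{dom}_k(\ell) = O(k^\ell\cdot T(\ell)) = O(k^{\sqrt{2k}}\cdot T(\lfloor\sqrt{2k}\rfloor))$. The natural ansatz is $T(k) \le 2^{c\sqrt k \log k}$ for a suitable constant $c$, i.e.\ $T(k) = k^{O(\sqrt k)}$. To verify, I substitute: the term $T(k-1)$ contributes a factor essentially $2^{c\sqrt{k-1}\log(k-1)} \le 2^{c\sqrt k\log k}\cdot 2^{-\Omega(c\log k/\sqrt k)}$, and this small multiplicative slack (summed geometrically over the at most $k$ consecutive ``$-1$'' steps) must absorb the additive $2T(k-\ell)$ and $2\mathsf{dom}_k(\ell)$ terms. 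The point is that $\mathsf{dom}_k(\ell) \le k^{\sqrt{2k}}\cdot T(\sqrt{2k})$, and with the ansatz $T(\sqrt{2k}) = (\sqrt{2k})^{O(\sqrt[4]{k})}$ is lower-order, so $\mathsf{dom}_k(\ell) = k^{O(\sqrt k)}$ which matches the target bound; and $T(k-\ell) \le 2^{c\sqrt{k-\ell}\log(k-\ell)}$, where $\sqrt{k-\ell} \le \sqrt k - \ell/(3\sqrt k) \le \sqrt k - c'$ for a constant $c'>0$, so $T(k-\ell)$ is smaller than $T(k)$ by a genuine multiplicative factor bounded away from $1$. Choosing $c$ large enough that the recursion ``pays for itself'' then closes the induction. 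Concretely one unrolls: between two consecutive $\ell$-steps the cost only grows by a polynomial-in-$k$ factor coming from the chain of $-1$ steps, and there are $O(\sqrt k)$ $\ell$-steps total, giving $\big(k^{O(1)}\big)^{O(\sqrt k)} = k^{O(\sqrt k)}$.

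Next I would treat the general case, recurrence $(a)$: $T(k,p) \le \max\{T(k-1,p),\ T(k,p-1) + T(k-\ell,p)\} + \mathsf{dom}_{k,p}(\ell) + O(1)$ with $\mathsf{dom}_{k,p}(\ell) = O(k^\ell\cdot T(\ell,p))$. Here the natural ansatz is $T(k,p) \le (p+k)^{c\sqrt k}$. The new subtlety is the $T(k,p-1)$ branch, in which $k$ does \emph{not} decrease; however $p$ decreases, and $p$ can decrease at most $p$ times before reaching the base case (where \textbf{old-win} no longer recurses on a priority-reduced game, or rather the maximum priority term trivialises). So I would set up a two-parameter induction on the lexicographic order of $(k,p)$ — or more cleanly, first fix $k$ and induct on $p$ to absorb the $T(k,p-1)$ chain (this multiplies the bound by at most $(p+k)^{O(1)}$ per level of $p$, hence by $(p+k)^{O(p)}$... ) — wait, that is too lossy; instead one observes that along the recursion, after at most $p$ consecutive ``$p$ decreases'' steps one must either hit a base case or take a step where $k$ drops (by $1$ or by $\ell$), so the structure is: at most $O(\sqrt k)$ $\ell$-drops of $k$, interspersed with at most $k$ unit-drops of $k$ and at most $p$ unit-drops of $p$ between consecutive $k$-drops. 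The total multiplicative blow-up from the dominion terms and the $+1$'s, accumulated over a recursion tree whose every root-to-leaf path has length $O(k+p)$ and whose branching happens only at the $O(\sqrt k)$ $\ell$-drop nodes and the $p$-drop nodes, is bounded by $(p+k)^{O(\sqrt k)}$ once one checks that $\mathsf{dom}_{k,p}(\ell) = k^{\sqrt{2k}}\cdot T(\ell,p) = k^{\sqrt{2k}}\cdot(p+\sqrt{2k})^{O(\sqrt[4]{k})} = (p+k)^{O(\sqrt k)}$ is of the target order. I expect the main obstacle to be exactly this bookkeeping: showing that the $T(k,p-1)$ branch — which keeps $k$ fixed — does not cause the bound to blow up to $(p+k)^{O(p)}$ or worse. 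The resolution is that this branch does not branch further on the same $k$ without eventually decreasing $k$, so one should think of the recursion as primarily driven by the $O(\sqrt k)$ expensive $k$-reductions, with the $p$-chain contributing only a polynomial (in $p+k$) overhead per $k$-level, and carefully verify this with the substitution $T(k,p) \le (p+k)^{c\sqrt k}$, checking that $(p+k-1)^{c\sqrt k} + (p+k)^{c\sqrt{k-\ell}} + \mathsf{dom}_{k,p}(\ell) + O(1) \le (p+k)^{c\sqrt k}$ for $c$ sufficiently large, where the first term handles the $T(k,p-1)$ contribution (note $(p-1)+k = (p+k)-1$, so it decreases) and the gap $\sqrt k - \sqrt{k-\ell} = \Omega(1)$ provides the room needed for the remaining additive terms.
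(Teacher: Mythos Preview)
Your overall plan is sound and the key observations are all there: (i) only $O(\sqrt{k})$ many ``$\ell$-drops'' of $k$ are possible because iterating $k\mapsto k-\lfloor\sqrt{2k}\rfloor$ reaches the base case after $O(\sqrt{k})$ steps, and $\sqrt{k}-\sqrt{k-\ell}=\Omega(1)$; (ii) $\mathsf{dom}_k(\ell)=k^{\ell}\cdot T(\ell)$ with $T(\ell)=\ell^{O(\sqrt{\ell})}=k^{O(k^{1/4})}$ is itself $k^{O(\sqrt{k})}$; (iii) in the two-parameter case, each recursive step decreases $k+p$, so root-to-leaf paths have length $\le k+p$. A direct guess-and-verify with the ansatz $T(k)\le k^{c\sqrt{k}}$, respectively $T(k,p)\le (p+k)^{c\sqrt{k}}$, can be made to work once one checks that the gap $(p+k)^{c\sqrt{k}}-(p+k-1)^{c\sqrt{k}}\approx c\sqrt{k}\,(p+k)^{c\sqrt{k}-1}$ absorbs the remaining additive terms for $c>\sqrt{2}$.

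The paper, however, proceeds differently and more cleanly in two stages. First it does an explicit recursion-tree count: every root-to-leaf path has length at most $k+p$ (respectively $2k$) and at most $\lfloor\sqrt{2k}\rfloor$ ``right turns'' (the $\ell$-drops), so the tree has at most $\binom{k+p}{\lfloor\sqrt{2k}\rfloor}\le (k+p)^{\lfloor\sqrt{2k}\rfloor}$ leaves; multiplying by the per-node cost $\mathsf{dom}_{k,p}(\ell)$ gives $T(k,p)\le 2(k+p)^{\lfloor\sqrt{2k}\rfloor}\,\mathsf{dom}_{k,p}(\ell)$. Second, substituting $\mathsf{dom}_{k,p}(\ell)=O(k^{\ell}\,T(\ell,p))$ yields a \emph{new} recurrence $T(k,p)\le (k+p)^{O(\sqrt{k})}\cdot T(\lfloor\sqrt{2k}\rfloor,p)$, which is now geometric in the exponent (because $\lfloor\sqrt{2\ell}\rfloor\le \lfloor\sqrt{2k}\rfloor/1.5$) and is solved by a short induction. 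This separation avoids the delicate bookkeeping you flagged around the $T(k,p-1)$ branch: the tree-count step handles all three types of steps uniformly via the binomial bound, and the second recurrence has only a single parameter dropping. Your direct substitution conflates these two stages, which is why you found yourself worrying that the $T(k,p-1)$ chain might blow up to $(p+k)^{O(p)}$; the paper's decomposition makes that concern disappear structurally. One small slip in your write-up: in recurrence $(a)$ the $p$-drop and the $\ell$-drop occur at the \emph{same} branching node (one left child $(k,p-1)$ and one right child $(k-\ell,p)$), not at separate nodes.
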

To prove Theorem~\ref{thm:solve-recurrences1}, we first establish some lemmas.
\begin{lemma}
 For $k,p\in\mathbb N$ and $\ell = \lfloor\sqrt{2k}\rfloor$, it holds
 \begin{eqnarray*}
  T(k,p) & \leq & 2(k+p)^{\left\lfloor \sqrt{2k}\right\rfloor} \cdot \mathsf{dom}_{k,p}(\left\lfloor \sqrt{2k}\right\rfloor) \enspace \text{ and}\\
   T(k) & \leq & 2(2k){\left\lfloor \sqrt{2k}\right\rfloor} \cdot \mathsf{dom}_k(\left\lceil \sqrt{2k}\right\rceil) \enspace .
 \end{eqnarray*}
\end{lemma}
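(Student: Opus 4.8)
The plan is to unroll the recurrences $(a)$ and $(b_3)$ from Lemmas~\ref{lemma:recurrences1} and~\ref{lemma:recurrences2} and keep track of how many leaves the resulting recursion tree has, bounding the contribution of each leaf by $\mathsf{dom}_{k,p}(\ell)$ (respectively $\mathsf{dom}_k(\ell)$). First I would set up induction on $k$ for the bipartite bound $T(k) \leq 2(2k)^{\lfloor\sqrt{2k}\rfloor}\cdot\mathsf{dom}_k(\lceil\sqrt{2k}\rceil)$: plug the inductive hypothesis into $(b_3)$, i.e.\ into $T(k)\leq T(k-1)+2T(k-\ell)+2\mathsf{dom}_k(\ell)+O(1)$, and verify that the claimed closed form dominates the sum. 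The key numerical facts I expect to need are that $\mathsf{dom}_k(\ell)$ is nondecreasing in $k$ (so the recursively obtained $\mathsf{dom}$ terms can all be replaced by $\mathsf{dom}_k(\lceil\sqrt{2k}\rceil)$), that $\ell=\lfloor\sqrt{2k}\rfloor$ shrinks slowly enough that $(2(k-1))^{\sqrt{2(k-1)}}$ and $(2(k-\ell))^{\sqrt{2(k-\ell)}}$ are each comfortably smaller than $(2k)^{\sqrt{2k}}$ — in fact by enough of a factor that $1 + 2\cdot(\text{ratio}) \leq (\text{something})$ absorbs into the leading constant — and that the additive $O(1)$ and $\mathsf{dom}$ terms are swallowed because $\mathsf{dom}_k(\ell)=\Omega(1)$.

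For the general-instance bound $T(k,p)\leq 2(k+p)^{\lfloor\sqrt{2k}\rfloor}\cdot\mathsf{dom}_{k,p}(\lfloor\sqrt{2k}\rfloor)$ I would argue by induction on $k+p$ (or on $k$ with an inner induction on $p$), again substituting the hypothesis into recurrence $(a)$: $T(k,p)\leq\max\{T(k-1,p),\,T(k,p-1)+T(k-\ell,p)\}+\mathsf{dom}_{k,p}(\ell)+O(1)$. The branch $T(k-1,p)$ is handled exactly as in the bipartite case. The interesting branch is $T(k,p-1)+T(k-\ell,p)$: here the crucial point is that the parameter $k+p$ that sits in the base of the exponential drops by one in the first summand and the parameter $k$ in the \emph{exponent} drops by $\ell$ in the second summand. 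Because the exponent $\lfloor\sqrt{2k}\rfloor$ only ever decreases down the recursion, $T(k,p-1)$ contributes at most $2(k+p-1)^{\sqrt{2k}}\cdot\mathsf{dom}$, and $T(k-\ell,p)$ contributes at most $2(k+p)^{\sqrt{2(k-\ell)}}\cdot\mathsf{dom}$; one checks $(k+p-1)^{\sqrt{2k}}+(k+p)^{\sqrt{2(k-\ell)}}+\tfrac12\mathsf{dom}+O(1)\leq(k+p)^{\sqrt{2k}}\cdot\mathsf{dom}$, the second term being small because $\sqrt{2(k-\ell)}\le\sqrt{2k}-1$ for $k$ large (as $\ell=\lfloor\sqrt{2k}\rfloor\ge\sqrt{2k}-1\ge\sqrt{2k}\bigl(1-\tfrac{1}{\sqrt{2k}}\bigr)$, so $(k+p)^{\sqrt{2(k-\ell)}}\le(k+p)^{\sqrt{2k}}/(k+p)$), and the first term being at most $(k+p)^{\sqrt{2k}}(1-\tfrac{\sqrt{2k}}{k+p})$ by convexity or Bernoulli. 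Small values of $k$ (where $\textbf{solve}$ is invoked, $k\le 4$) form the base case and are handled by the $O(1)$ term and a large enough leading constant.

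The main obstacle I anticipate is purely the bookkeeping in the inductive step of recurrence $(a)$: one must be careful that the maximum in $(a)$ is over the two branches and that in the worst branch \emph{both} a $p$-decrease and a $k$-decrease happen but in \emph{different} additive summands rather than compounding multiplicatively — this is exactly what keeps the final bound $(p+k)^{O(\sqrt k)}$ rather than something like $(p+k)^{O(\sqrt k \log(p+k))}$. Making the slack estimates $\sqrt{2(k-1)}\le\sqrt{2k}$, $\sqrt{2(k-\ell)}\le\sqrt{2k}-1$ rigorous for all $k\ge k_0$ and absorbing finitely many exceptional $k<k_0$ into the constant, while also confirming monotonicity of $\mathsf{dom}_{k,p}(\cdot)$ in $k$ and $p$ (which follows since a sub-game with more nodes or priorities can only make the dominion search take longer, or one simply defines $\mathsf{dom}$ as a max over such games), is the one place where care is required; everything else is a routine substitution. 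I would present the bipartite inequality in full and remark that the general case is analogous with the role of the exponent base $2k$ replaced by $k+p$ and the extra $T(k,p-1)$ branch handled by the convexity estimate above.
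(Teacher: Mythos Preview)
Your inductive plan hinges on the pointwise inequality $\sqrt{2(k-\ell)}\le\sqrt{2k}-1$ (equivalently $\lfloor\sqrt{2(k-\ell)}\rfloor\le\ell-1$) with $\ell=\lfloor\sqrt{2k}\rfloor$, and this is where the argument breaks. The inequality is false for infinitely many $k$: whenever $2k=\ell^2+2\ell$ (e.g.\ $k=12,24,40,\dots$) one has $2(k-\ell)=\ell^2$, so $\lfloor\sqrt{2(k-\ell)}\rfloor=\ell$, not $\ell-1$. In that regime the induction step simply does not close, even with an arbitrary constant in place of $2$. Concretely, for $k=12$ in the bipartite recurrence $(b_3)$ you would need
\[
  2\cdot 22^{4}+4\cdot 16^{4}\;\le\;2\cdot 24^{4},
\]
which is $730656\le 663552$ and fails. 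In the general recurrence $(a)$ the situation is worse still: when $\ell'=\ell$ you are left comparing $(k+p-1)^{\ell}+(k+p-\ell)^{\ell}$ with $(k+p)^{\ell}$, and for large $p$ the left-hand side is close to $2(k+p)^{\ell}$.

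The paper avoids this trap by \emph{not} doing a step-by-step induction on the closed form. Instead it builds the recursion tree explicitly and bounds the number of nodes globally: any root-to-leaf path has length at most $k+p$ (respectively $2k$), and the number of \emph{right turns} on any such path is at most $\lfloor\sqrt{2k}\rfloor$, because iterating $n\mapsto n-\lceil\sqrt{2n}\rceil$ drives $n$ through successive buckets $\bigl(\tfrac12 j^2,\tfrac12(j+1)^2\bigr]$ and can be applied at most $\lfloor\sqrt{2k}\rfloor$ times before hitting the base case. Hence the number of leaves is at most $\binom{k+p}{\lfloor\sqrt{2k}\rfloor}\le(k+p)^{\lfloor\sqrt{2k}\rfloor}$, and the total number of nodes at most twice that. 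The point is that the bound on right turns is an \emph{amortized} statement over the whole path; a single right turn need not decrease the exponent by one, but the total number of right turns is still $\lfloor\sqrt{2k}\rfloor$. Your induction tries to force this to hold locally at every step, which it does not. If you want to rescue an inductive proof, you would need a stronger hypothesis that tracks this amortization (for instance, indexing by the bucket $j$ with $\tfrac12 j^2<k\le\tfrac12(j+1)^2$ rather than by $k$ itself), at which point you are essentially reproducing the tree argument.
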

\begin{proof}
 For every pair of integers $k$ and $p$ we construct binary trees $T_{k,p}$ and $T_{k}$ in the following way.
 The root of $T_{k,p}$ is labeled by $k$ and $k+p$ and the root of~$T_k$ is labeled by $k$.
 A node labeled by a number $k > 4$ has two children: in~$T_{k,p}$ a left child labeled by $k$ and $k+p-1$ and a right child labeled by $k - \left\lceil \sqrt{2k}\right\rceil $ and $p+k - \left\lceil \sqrt{2k}\right\rceil$. In $T_{k}$ a left child labeled by $k'$ and a right child labeled by $k - \left\lceil \sqrt{2k}\right\rceil$.
 A node labeled by $k'$ in $T_{k}$ has two children: a left child labeled by $k-1$ and a right child labeled by $k - \left\lceil \sqrt{2k}\right\rceil$.
 Nodes labeled by a number $k\leq 4$ are leaves.
 A node labeled by $k$ and $k+p$ has a cost of $\mathsf{dom}_{k,p}(\left\lfloor \sqrt{2k}\right\rfloor)$ associated with it and a node labeled by $k$ or $k'$ has a cost of $\mathsf{dom}_k(\left\lceil \sqrt{2k}\right\rceil))$ associated with it.
 It follows from Lemma~\ref{lemma:recurrences1} and Lemma~\ref{lemma:recurrences2} that the sum of the costs of the nodes of $T_{k,p}$ and $T_k$,  is an upper bound on $T(k,p)$ and $T(k)$, respectively.
 Clearly, the length of every path from the root to a leaf is at most $p+k+1$ in~$T_{k,p}$ and $2k$ in~$T_{k}$.
 We say that such a path makes a right turn when it descends from a node to its right child.
 We next claim that each such path makes at most $\left\lfloor \sqrt{2k}\right\rfloor$ right turns.
 This follows immediately from the observation that the function $f(n) = n- \left\lceil \sqrt{2n}\right\rceil$ can be iterated on $2k$ at most $\left\lfloor \sqrt{2k}\right\rfloor$ times before reaching the value of~4 or less.
 This observation can be proved by induction, based on the fact that if $\frac{1}{2} j^2 < n \leq \frac{1}{2} (j+1)^2$, then $n- \left\lceil \sqrt{2n}\right\rceil \leq \frac{1}{2} j^2$.
 (Initially we have $j =\left\lfloor \sqrt{2k}\right\rfloor$ and finally, with $1\leq n\leq 4$, we
have $j \geq 1$.)
 As each leaf of~$T_{k,p}$ and $T_k$ is determined by the positions of the right turns on the path
leading to it from the root, we get that the number of leaves is at most $\binom{k+p}{\left\lfloor \sqrt{2k}\right\rfloor}$ in $T_{k,p}$ and at most $\binom{2k}{\left\lfloor \sqrt{2k}\right\rfloor}$ in $T_{k}$.
 The total number of nodes is therefore at most at most $2\binom{k+p}{\left\lfloor \sqrt{2k}\right\rfloor} \leq 2(k+p){\left\lfloor \sqrt{2k}\right\rfloor}$ in $T_{k,p}$ and at most $2\binom{2k}{\left\lfloor \sqrt{2k}\right\rfloor} \leq 2(2k)^{\left\lfloor \sqrt{2k}\right\rfloor}$ in $T_k$.
 As the cost of each node is at most $\mathsf{dom}_{k,p}(\left\lfloor \sqrt{2k}\right\rfloor)$ in $T_{k,p}$ and at most $\mathsf{dom}_k(\left\lceil \sqrt{2k}\right\rceil)$ in $T_{k}$, we immediately get
 \begin{eqnarray*}
   T(k,p) & \leq & 2(k+p)^{\left\lfloor \sqrt{2k}\right\rfloor} \cdot \mathsf{dom}_{k,p}(\left\lfloor \sqrt{2k}\right\rfloor) \enspace \text{ and},\\
   T(k) & \leq & 2(2k)^{\left\lfloor \sqrt{2k}\right\rfloor} \cdot \mathsf{dom}_k(\left\lceil \sqrt{2k}\right\rceil) \enspace .
 \end{eqnarray*}
\end{proof}

We obtain together with Lemma~\ref{dominionsize1} that
\begin{equation*}
  T(k,p) \leq 2(k+p)^{\left\lfloor \sqrt{2k}\right\rfloor} \cdot O\bigg(k^{\left\lfloor \sqrt{2k} \right\rfloor} \cdot T(\left\lfloor \sqrt{2k} \right\rfloor,p)\bigg),
  \end{equation*}
as well as
\begin{equation*}
  T(k) \leq 2(2k)^{\Big\lfloor \sqrt{2k}\Big\rfloor} \cdot O\left(k^{\left\lceil \sqrt{2k} \right\rceil} T(\left\lceil \sqrt{2k} \right\rceil)\right).
\end{equation*}

\begin{lemma}
  Suppose that
  \begin{equation*}
    T(k,p) \leq 2(k+p)^{\left\lfloor \sqrt{2k}\right\rfloor} \cdot O\left(k^{\left\lfloor \sqrt{2k} \right\rfloor} \cdot T(\left\lfloor \sqrt{2k} \right\rfloor,p)\right)
  \end{equation*}
  and that $T(\ell,q) \leq c' \cdot (q+\ell)^{8\left\lfloor \sqrt{2\ell}\right\rfloor}$ for some constant $c' \in \mathbb{R}$ and for all pairs $(\ell, q) \in \{1,\hdots,4\}\times\mathbb{N}$.
  Then there exist constants $c_1 \geq c',c_2\geq 8$ such that for all $k \in \mathbb{N}$,
  \begin{equation*}
    T(k,p)\leq c_1 \cdot (p+k)^{c_2\left\lfloor \sqrt{2k}\right\rfloor} \enspace .
  \end{equation*}
\end{lemma}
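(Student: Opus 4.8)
The plan is to prove the displayed bound by strong induction on~$k$, choosing first a sufficiently large constant~$c_2$ (in particular $c_2\geq 8$) and afterwards a sufficiently large constant~$c_1\geq c'$. Throughout write $\ell:=\left\lfloor\sqrt{2k}\right\rfloor$, and let~$C$ be the absolute constant hidden in the $O(\cdot)$ of the assumed recurrence, so that $T(k,p)\leq 2C\,(k+p)^{\ell}\,k^{\ell}\,T(\ell,p)$ for $k>4$. For the base cases $k\in\{1,2,3,4\}$ the second hypothesis gives $T(k,p)\leq c'(p+k)^{8\left\lfloor\sqrt{2k}\right\rfloor}$, which is at most $c_1(p+k)^{c_2\left\lfloor\sqrt{2k}\right\rfloor}$ since $c_1\geq c'$, $c_2\geq 8$ and $p+k\geq 1$.

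For the inductive step fix $k>4$ and assume the claimed bound holds for all smaller values of the first argument. Since $k>2$ we have $\ell\leq\sqrt{2k}<k$, so the induction hypothesis applies to $T(\ell,p)$ and, using $\ell\leq k$, gives
\begin{equation*}
  T(\ell,p)\;\leq\;c_1\bigl(p+\ell\bigr)^{c_2\left\lfloor\sqrt{2\ell}\right\rfloor}\;\leq\;c_1(p+k)^{c_2\left\lfloor\sqrt{2\ell}\right\rfloor}.
\end{equation*}
Note that here the second argument~$p$ did not decrease; this is harmless because the first argument dropped from~$k$ to $\ell=\Theta(\sqrt{k})$, so the resulting exponent $c_2\left\lfloor\sqrt{2\ell}\right\rfloor=\Theta(k^{1/4})$ is far below the target exponent $c_2\ell=\Theta(\sqrt{k})$, leaving slack to absorb the polynomial overhead $2(k+p)^{\ell}k^{\ell}$. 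Plugging this estimate into the recurrence and bounding $k^{\ell}\leq(p+k)^{\ell}$ and $(k+p)^{\ell}=(p+k)^{\ell}$ yields
\begin{equation*}
  T(k,p)\;\leq\;2C\,c_1\,(p+k)^{\,2\ell+c_2\left\lfloor\sqrt{2\ell}\right\rfloor}.
\end{equation*}

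It remains to verify $2C\,c_1\,(p+k)^{2\ell+c_2\left\lfloor\sqrt{2\ell}\right\rfloor}\leq c_1(p+k)^{c_2\ell}$, i.e.,
\begin{equation*}
  2C\;\leq\;(p+k)^{\,c_2\bigl(\ell-\left\lfloor\sqrt{2\ell}\right\rfloor\bigr)-2\ell}.
\end{equation*}
Since $k>4$ we have $\ell\geq 3$, and an elementary case check (treating $3\leq\ell\leq 7$ directly and using $\left\lfloor\sqrt{2\ell}\right\rfloor\leq\ell/2$ for $\ell\geq 8$) shows $\ell-\left\lfloor\sqrt{2\ell}\right\rfloor\geq\ell/4$ for all $\ell\geq 3$. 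Hence the exponent on the right is at least $\ell(c_2/4-2)\geq 3(c_2/4-2)$, which is positive once $c_2>8$ and tends to infinity with~$c_2$; since $p+k\geq k\geq 5$, the right-hand side is at least $5^{\,3(c_2/4-2)}$, so fixing~$c_2$ large enough (depending only on the constant~$C$, and at least~$8$) makes the inequality hold for every $k>4$ and every~$p$, which closes the induction. The only genuinely delicate point is exactly this final bookkeeping: one must check not merely that the exponent slack $c_2\ell-2\ell-c_2\left\lfloor\sqrt{2\ell}\right\rfloor$ is nonnegative, but that, multiplied by $\log(p+k)\geq\log 5$, it dominates $\log(2C)$ uniformly in~$k$ and~$p$ — which is what forces taking $c_2$ somewhat larger than~$8$ in general.
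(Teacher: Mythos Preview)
Your proof is correct and follows essentially the same approach as the paper: strong induction on~$k$ (the paper phrases it as a minimal-counterexample argument), applying the induction hypothesis to $T(\ell,p)$ with $\ell=\lfloor\sqrt{2k}\rfloor<k$, and then verifying that the exponent slack $c_2\ell-2\ell-c_2\lfloor\sqrt{2\ell}\rfloor$ is large enough to absorb the multiplicative constant. The paper organizes the key arithmetic via the ratio $\alpha_k=\lfloor\sqrt{2k}\rfloor/\lfloor\sqrt{2\lfloor\sqrt{2k}\rfloor}\rfloor\geq 1.5$ (equivalently $\lfloor\sqrt{2\ell}\rfloor\leq 2\ell/3$), whereas you use the slightly weaker but equally sufficient bound $\ell-\lfloor\sqrt{2\ell}\rfloor\geq\ell/4$; both lead to a choice of~$c_2$ depending only on the hidden constant~$C$.
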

\begin{proof}
  Since we have $T(k,p) \leq 2(k+p)^{\left\lfloor \sqrt{2k}\right\rfloor} \cdot O\left(k^{\left\lfloor \sqrt{2k} \right\rfloor} \cdot T(\left\lfloor \sqrt{2k} \right\rfloor,p)\right)$, there exists a constant $c'_1>0$ such that 
$T(k,p) \leq 2(k+p)^{\left\lfloor \sqrt{2k}\right\rfloor} \cdot c'_1 k^{\left\lfloor \sqrt{2k} \right\rfloor} \cdot T(\left\lfloor \sqrt{2k} \right\rfloor,p)$.
  Let $\alpha_k = \frac{\left\lfloor \sqrt{2k}\right\rfloor}{\left\lfloor \sqrt{2\left\lfloor \sqrt{2k}\right\rfloor}\right\rfloor}$.
  Then for $k\geq 5$, it holds $\alpha_k \geq 1.5 > 1$.
  Let $c_1=\max\{c'_1,c'\}$, and let $c_2 = 6+ 3\log{(2c_1)}$.
  Suppose, for sake of contradiction, that the statement of the lemma does not hold for this choice of $(c_1,c_2)$.
  Then there exists a pair $(k',p')\in \mathbb{N}\times \mathbb{N}$ for which $T(k',p')> c_1 \cdot (p'+k')^{c_2\left\lfloor \sqrt{2k'}\right\rfloor}$.
  Let $k'$ be the smallest integer for which such a pair exists, and let $p' = p'(k')$ be the smallest integer for which $T(k',p')> c_1 \cdot (p'+'k)^{c_2\left\lfloor \sqrt{2k'}\right\rfloor}$.
  Note that $k' \geq 4$ and $T(\ell, q) \leq c_1 \cdot q^3 \cdot (q+\ell)^{c_2\sqrt{\ell}}$ for all pairs $(\ell,q)$ with $\ell \leq k'$, $q\leq p'$ and $\ell + q < k'+p'$.
  Further, it holds $c_2 \geq \frac{c_2}{\alpha_k}+2+ \log{(2c'_1)}$ for all $k \geq 4$.
  For $k\geq 4$ we also have $\left\lfloor \sqrt{2k} \right\rfloor <k$.
  This implies that
  \begin{align*}
    T(k',p') &\leq 2(k'+p')^{\left\lfloor \sqrt{2k'}\right\rfloor} \cdot c'_1\cdot k'^{\left\lfloor \sqrt{2k'} \right\rfloor} \cdot T(\left\lfloor \sqrt{2k'} \right\rfloor,p'))\\
			&\leq 2(k'+p')^{\left\lfloor \sqrt{2k'}\right\rfloor} \cdot c'_1\cdot k'^{\left\lfloor \sqrt{2k'} \right\rfloor} \cdot c_1 \cdot (p'+\left\lfloor \sqrt{2k'} \right\rfloor)^{c_2\left\lfloor \sqrt{2\left\lfloor \sqrt{2k'} \right\rfloor}\right\rfloor}\\
			&\leq 2(k'+p')^{\left\lfloor \sqrt{2k'}\right\rfloor} \cdot c'_1 \cdot k'^{\left\lfloor \sqrt{2k'} \right\rfloor} \cdot c_1 \cdot (p'+\left\lfloor \sqrt{2k'} \right\rfloor)^{c_2\left\lfloor \sqrt{2\left\lfloor \sqrt{2k'} \right\rfloor}\right\rfloor}\\
			&\leq (2c'_1 c_1) (k'+p')^{2\left\lfloor \sqrt{2k'} \right\rfloor+c_2\left\lfloor \sqrt{2\left\lfloor \sqrt{2k'} \right\rfloor}\right\rfloor}\\
			&\leq c_1 (k'+p')^{2\left\lfloor \sqrt{2k'} \right\rfloor+\frac{c_2}{a_{k'}}\left\lfloor \sqrt{2k'} \right\rfloor+\log{(2c'_1)}}\\
			&\leq c_1 (k'+p')^{(2+\frac{c_2}{a_{k'}}+\log{(2c'_1)})\left\lfloor \sqrt{2k'} \right\rfloor}\\
			&\leq c_1 (k'+p')^{c_2 \left\lfloor \sqrt{2k'} \right\rfloor}
  \end{align*}
  This contradicts the existence of $k'$, and therefore concludes the proof.
\end{proof}
	
\begin{lemma}
  Suppose that
  \begin{equation*}
    T(k) \leq 2(2k)^{\left\lfloor \sqrt{2k}\right\rfloor} \cdot O\left(k^{\left\lfloor \sqrt{2k} \right\rfloor} T(\left\lfloor \sqrt{2k} \right\rfloor)\right)
  \end{equation*}
  and that $T(\ell) \leq c'\ell^{\left\lfloor \sqrt{2\ell}\right\rfloor}$ for some constant $c' \in \mathbb{R}$ and for all $\ell \leq 4$.
  Then there exist constants $c_1 \geq c',c_2\geq 1$ such that for all $k \in \mathbb{N}$,
  \begin{equation*}
    T(k)\leq c_1 k^{c_2\left\lfloor \sqrt{2k}\right\rfloor} \enspace .
  \end{equation*}
\end{lemma}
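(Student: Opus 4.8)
The plan is to follow the argument for the $T(k,p)$ bound in the preceding lemma, with the priority parameter removed; the bookkeeping is slightly lighter since only one quantity is tracked. First I would make the hypothesis explicit: absorbing the $O(\cdot)$, there is a constant $c_1' \geq 1$ (which we may enlarge at will) such that
\[
  T(k) \;\leq\; 2(2k)^{\left\lfloor\sqrt{2k}\right\rfloor}\cdot c_1'\,k^{\left\lfloor\sqrt{2k}\right\rfloor}\cdot T\!\left(\left\lfloor\sqrt{2k}\right\rfloor\right)
  \qquad\text{for } k\geq 5 .
\]
I then set $c_1 := \max\{c',\, c_1'\}$ and choose $c_2$ large enough; concretely $c_2 := 9 + 3\log(2c_1')$ will work, and note $c_2 \geq 9 \geq 1$. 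The target bound $T(k) \leq c_1 k^{c_2\left\lfloor\sqrt{2k}\right\rfloor}$ is then proved by contradiction: assume it fails, and let $k'$ be the smallest integer for which it fails.

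The base case is immediate: for $\ell \leq 4$ the second hypothesis gives $T(\ell) \leq c'\ell^{\left\lfloor\sqrt{2\ell}\right\rfloor} \leq c_1\ell^{c_2\left\lfloor\sqrt{2\ell}\right\rfloor}$ since $c_1 \geq c'$, $\ell \geq 1$ and $c_2 \geq 1$; hence $k' \geq 5$, and in particular $\left\lfloor\sqrt{2k'}\right\rfloor < k'$. By minimality of $k'$, the target bound holds at $\ell := \left\lfloor\sqrt{2k'}\right\rfloor$, so $T(\left\lfloor\sqrt{2k'}\right\rfloor) \leq c_1 \left\lfloor\sqrt{2k'}\right\rfloor^{\,c_2\left\lfloor\sqrt{2\left\lfloor\sqrt{2k'}\right\rfloor}\right\rfloor}$. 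Substituting this into the recurrence and collecting powers of $k'$ — using $(2k')^{\left\lfloor\sqrt{2k'}\right\rfloor}\leq (k')^{2\left\lfloor\sqrt{2k'}\right\rfloor}$ and $2^{\left\lfloor\sqrt{2k'}\right\rfloor}\leq (k')^{\left\lfloor\sqrt{2k'}\right\rfloor}$ (valid as $k'\geq 2$) and $\left\lfloor\sqrt{2k'}\right\rfloor \leq k'$ — yields
\[
  T(k') \;\leq\; 2c_1'c_1\cdot (k')^{\,3\left\lfloor\sqrt{2k'}\right\rfloor \,+\, c_2\left\lfloor\sqrt{2\left\lfloor\sqrt{2k'}\right\rfloor}\right\rfloor}.
\]

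The final step, which I expect to be the only delicate point, is the exponent arithmetic. Writing $\alpha_{k'} := \left\lfloor\sqrt{2k'}\right\rfloor \big/ \left\lfloor\sqrt{2\left\lfloor\sqrt{2k'}\right\rfloor}\right\rfloor$, the elementary estimate already invoked in the previous lemma gives $\alpha_{k'} \geq 1.5$ for $k' \geq 5$, hence $c_2\left\lfloor\sqrt{2\left\lfloor\sqrt{2k'}\right\rfloor}\right\rfloor = (c_2/\alpha_{k'})\left\lfloor\sqrt{2k'}\right\rfloor \leq \tfrac23 c_2\left\lfloor\sqrt{2k'}\right\rfloor$. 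Peeling off a single factor $c_1$ for the target bound and folding the remaining constant $2c_1'$ into the exponent (via $2c_1' \leq (k')^{\log(2c_1')} \leq (k')^{\log(2c_1')\left\lfloor\sqrt{2k'}\right\rfloor}$, using $c_1'\geq 1$ and $k'\geq 2$) gives
\[
  T(k') \;\leq\; c_1\,(k')^{\,\left(3 \,+\, \log(2c_1') \,+\, \tfrac23 c_2\right)\left\lfloor\sqrt{2k'}\right\rfloor}.
\]
By the choice $c_2 = 9 + 3\log(2c_1')$ we have $3 + \log(2c_1') + \tfrac23 c_2 \leq c_2$, so $T(k') \leq c_1 (k')^{c_2\left\lfloor\sqrt{2k'}\right\rfloor}$, contradicting the choice of $k'$. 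The only subtlety is arranging the three constant contributions to the exponent (the $3$ from the collected powers of $k'$, the $\log$ term from the leading factor, and the $\tfrac23 c_2$ coming from $\alpha_{k'}$) to sum to at most $c_2$, which forces the shape $c_2 = \Theta(1) + \Theta(\log c_1')$; everything else is a transcription of the $T(k,p)$ case with $p$ deleted.
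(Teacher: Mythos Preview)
Your proposal is correct and follows essentially the same approach as the paper: a minimal-counterexample argument, the ratio $\alpha_{k'}=\lfloor\sqrt{2k'}\rfloor/\lfloor\sqrt{2\lfloor\sqrt{2k'}\rfloor}\rfloor\geq 1.5$, and exponent bookkeeping to close the induction. Your constants differ slightly ($c_2=9+3\log(2c_1')$ versus the paper's $c_2=12+3\log c_1$) and your accounting is a bit more explicit, but the structure is identical.
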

\begin{proof}
  Since $T(k) \leq 2(2k)^{\left\lfloor \sqrt{2k}\right\rfloor} \cdot O\left(k^{\left\lfloor \sqrt{2k} \right\rfloor} T(\left\lfloor \sqrt{2k} \right\rfloor)\right)$, there exists a constant $c'_1>0$ such that 
  $T(k) \leq 2(2k)^{\left\lfloor \sqrt{2k}\right\rfloor} \cdot c'_1\left(k^{\left\lfloor \sqrt{2k} \right\rfloor} T(\left\lfloor \sqrt{2k} \right\rfloor)\right)$.
  Let $\alpha_k = \frac{\left\lfloor \sqrt{2k}\right\rfloor}{\left\lfloor \sqrt{2\left\lfloor \sqrt{2k}\right\rfloor}\right\rfloor}$.
  Then for $k\geq 5$ it holds that $\alpha_k \geq 1.5 > 1$.
  Let $c_1=\max\{c'_1,c'\}$ and let $c_2 = 12+ 3\log{c_1}$.
  Suppose, for sake of contradiction, that the statement of the lemma does not holds for this choice of $(c_1,c_2)$.
  Then exists a $k'\in \mathbb{N}$ such that $T(k')> c_1 k'^{c_2\left\lfloor \sqrt{2k'}\right\rfloor}$.
  Let $k'$ be the smallest such integer.
  Note that we must have $k' \geq 4$ and $T(\ell) \leq c_1 \ell^{c_2\left\lfloor \sqrt{2\ell}\right\rfloor}$ for all $\ell < k'$. 
  Further, it holds that $c_2 \geq \frac{c_2}{\alpha_k'}+4+\log{c_1}$ for all $k \geq 5$.
  For $k\geq 4$ we also have $\left\lfloor \sqrt{2k} \right\rfloor <k$.
  This implies that
  \begin{align*}
    T(k') &\leq 2(2k')^{\left\lfloor \sqrt{2k'}\right\rfloor} \cdot c'_1\left(k'^{\left\lfloor \sqrt{2k'} \right\rfloor} T(\left\lfloor \sqrt{2k'} \right\rfloor)\right)\\
     &\leq 2(2k')^{\left\lfloor \sqrt{2k'}\right\rfloor} \cdot c_1\left(2c_1 k'^{\left\lfloor \sqrt{2k'} \right\rfloor} k'^{c_2\left\lfloor \sqrt{2\left\lfloor \sqrt{2k'}\right\rfloor}\right\rfloor}\right)\\
     &\leq 2c_1(2k')^{\left\lfloor \sqrt{2k'}\right\rfloor} k'^{(\frac{c_2}{\alpha_k'}+1)\left\lfloor \sqrt{2k'}\right\rfloor+\log{c_1}}\\
     &\leq 2c_1k'^{2\left\lfloor \sqrt{2k'}\right\rfloor} k'^{(\frac{c_2}{\alpha_k'}+1)\left\lfloor \sqrt{2k'}\right\rfloor+\log{c_1}}\\
     &\leq c_1 k'^{\left\lfloor \sqrt{2k'}\right\rfloor(\frac{c_2}{\alpha_k'}+3)+\log{c_1}+\log{2}}\\
     &\leq c_1 k'^{\left\lfloor \sqrt{2k'}\right\rfloor(\frac{c_2}{\alpha_k'}+4+\log{c_1})}\\
     &\leq c_1 k'^{c_2\left\lfloor \sqrt{2k'}\right\rfloor} \enspace .
  \end{align*}
  This contradicts the existence of $k'$, and therefore concludes the proof.
\end{proof}

Since $T(k,p) \in O(p^{(k^2)})$, it holds that $O(p^{8\left\lfloor \sqrt{2k}\right\rfloor})$.
Moreover, as $T(k) = O(1)$ for $k\leq 4$, we conclude that $T(k,p) = (p+k)^{O(\sqrt{k})}$ and $T(k) = k^{O(\sqrt{k})}$.
This completes the proof of Theorem~\ref{thm:solve-recurrences1}.
\hfill$\qed$

Next, we will prove the following.
\begin{lemma}
\label{thm:solve-recurrences2}
  For $s_j,j,n\in\mathbb N$, $s = \left\lceil \sqrt{\frac{s_j}{\log_j{s_j}}}\right\rceil$ and $\ell = \left\lceil\sqrt{2(n-s_j)}\right\rceil$ we obtain
  \begin{equation*}
    T(s_j,n-s_j) = n^{O\left(\sqrt{n-s_j}+\sqrt{\frac{s_j}{\log_j{s_j}}}\right)} \enspace .
  \end{equation*}
\end{lemma}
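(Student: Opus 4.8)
The plan is to solve recurrence~$(c)$ of Lemma~\ref{lemma:recurrences3} by the recursion-tree method, in direct analogy with the proof of the bound $T(k)=k^{O(\sqrt k)}$ in Theorem~\ref{thm:solve-recurrences1}, the only novelty being that we must track the two quantities $\sigma:=s_j$ and $\nu:=n-s_j$ at once. I would unfold $(c)$ into a rooted tree whose root carries the pair $(s_j,n-s_j)$ and in which a node carrying $(\sigma,\nu)$ that is not yet a base case has a \emph{left} child (the first recursive call), carrying $(\sigma-1,\nu)$ or $(\sigma,\nu-1)$, and a \emph{right} child (the second recursive call), carrying $(\sigma-s,\nu)$ or $(\sigma,\nu-\ell)$, where $\ell=\lceil\sqrt{2\nu}\rceil$ and $s=\lceil\sqrt{\sigma\log_j\sigma}\rceil$ are the parameters used by $\textbf{new-win}_2$ at that node; the leaves are the base cases $\sigma,\nu\le 3$. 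By $(c)$, the running time $T'(s_j,n-s_j)$ is at most the number of nodes of this tree times the largest per-node cost, and by Lemma~\ref{dominionsize2} each per-node cost is $\mathsf{dom}_{n,\nu}(\ell,s)+O(1)=O(n^{\ell+4}\,j^{\,s})$.

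First I would bound the per-node cost. Since $\nu\le n-s_j$ everywhere in the tree, $\ell=O(\sqrt{n-s_j})$ and so $n^{\ell+4}=n^{O(\sqrt{n-s_j})}$. For the factor $j^{\,s}$ I would not use the wasteful estimate $j^{\,s}\le n^{\,s}$ but rather $j^{\,s}=n^{\,(s\ln j)/\ln n}$, and then check, using $s\le\sqrt{\sigma\log_j\sigma}+1\le\sqrt{s_j\log_j s_j}+1$ together with $s_j\le n$ and $j\le n$, that $(s\ln j)/\ln n=O\!\left(\sqrt{s_j/\log_j s_j}\right)$: indeed $\sqrt{s_j\log_j s_j}\cdot\ln j=\sqrt{s_j\ln s_j\ln j}$ while $\sqrt{s_j/\log_j s_j}\cdot\ln n=\sqrt{s_j\ln j/\ln s_j}\cdot\ln n$, and the ratio of these is $\ln s_j/\ln n\le 1$. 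Hence each per-node cost is $n^{O\left(\sqrt{n-s_j}+\sqrt{s_j/\log_j s_j}\right)}$. This step is exactly where the $\log_j$ inside the definition of $s$ earns its keep; with $s=\lceil\sqrt{s_j}\rceil$ the factor $j^{\,s}$ could be as large as $n^{\Theta(\sqrt{s_j})}$.

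Next I would bound the number of nodes of the tree by bounding, on every root-to-leaf path, the total number $R$ of \emph{right turns}, and then noting that a path has length at most $n$ (each step lowers $\sigma+\nu$), so it is determined by which of its at most $n$ positions are right turns; hence there are at most $(R+1)\binom{n}{R}=n^{O(R)}$ leaves, and $n^{O(R)}$ nodes in all. Call a right turn of type~$H$ if it lowers $\nu$ by $\ell$ and of type~$L$ if it lowers $\sigma$ by $s$. Along any path $\nu$ is non-increasing and each type-$H$ turn applies the map $v\mapsto v-\lceil\sqrt{2v}\rceil$; exactly as in Theorem~\ref{thm:solve-recurrences1}, this map reaches a constant from $n-s_j$ within $\lfloor\sqrt{2(n-s_j)}\rfloor$ iterations, and the intervening decrements of $\nu$ only help, so a path has $O(\sqrt{n-s_j})$ type-$H$ turns. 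Similarly $\sigma$ is non-increasing and each type-$L$ turn applies $w\mapsto g(w):=w-\lceil\sqrt{w\log_j w}\rceil$; granting the iteration lemma below, $g$ reaches a constant from $s_j$ within $O\!\left(\sqrt{s_j/\log_j s_j}\right)$ iterations, so a path has that many type-$L$ turns. Thus $R=O\!\left(\sqrt{n-s_j}+\sqrt{s_j/\log_j s_j}\right)$, the tree has $n^{O\left(\sqrt{n-s_j}+\sqrt{s_j/\log_j s_j}\right)}$ nodes, and multiplying by the per-node bound from the previous paragraph yields the claim.

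The hard part will be the iteration lemma for $g(w)=w-\lceil\sqrt{w\log_j w}\rceil$. I would prove it with a potential $\Psi(w)=\int_{2}^{w}\frac{dm}{\sqrt{m\log_j m}}$. Past a fixed constant one has $\sqrt{w\log_j w}\le w/2$, hence $g(w)\ge w/2$, hence $m\log_j m\le w\log_j w$ for all $m\in[g(w),w]$; therefore $\Psi(w)-\Psi(g(w))=\int_{g(w)}^{w}\frac{dm}{\sqrt{m\log_j m}}\ge\frac{w-g(w)}{\sqrt{w\log_j w}}\ge 1$, so $\Psi$ drops by at least $1$ at every type-$L$ turn. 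On the other hand $\Psi(s_j)=\sqrt{\ln j}\int_{2}^{s_j}\frac{dm}{\sqrt{m\ln m}}$, and comparing the integrand with $\frac{d}{dm}\!\left(\sqrt m/\sqrt{\ln m}\right)=\frac{1}{2\sqrt{m\ln m}}\bigl(1-\tfrac{1}{\ln m}\bigr)$ gives $\int_{2}^{s_j}\frac{dm}{\sqrt{m\ln m}}=O\!\left(\sqrt{s_j/\ln s_j}\right)$, so $\Psi(s_j)=O\!\left(\sqrt{s_j/\log_j s_j}\right)$; the number of type-$L$ turns is thus at most $\Psi(s_j)$ plus the $O(1)$ further steps needed below the constant, which is absorbed by the base case $\sigma\le 3$. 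The only remaining work is the routine check that the $O(\cdot)$ bookkeeping in the exponent goes through — e.g.\ that the stray factors $n^{4}$ and the factor $n$ coming from unary chains in the tree are swallowed by $n^{O\left(\sqrt{n-s_j}+\sqrt{s_j/\log_j s_j}\right)}$ whenever that exponent is bounded away from $0$, the statement being trivially true otherwise.
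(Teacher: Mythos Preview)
Your proposal is correct and follows essentially the same recursion-tree argument as the paper: build a binary tree whose nodes carry the pair $(\sigma,\nu)$, bound the number of right turns along any root-to-leaf path by $O(\sqrt{n-s_j}+\sqrt{s_j/\log_j s_j})$ via the iteration bounds for $v\mapsto v-\lceil\sqrt{2v}\rceil$ and $w\mapsto w-\lceil\sqrt{w\log_j w}\rceil$, and multiply the resulting node count by the per-node cost from Lemma~\ref{dominionsize2}. Your write-up is in fact more explicit than the paper's in two places the paper simply asserts: the verification that $j^{\,s}=n^{O(\sqrt{s_j/\log_j s_j})}$ (the paper folds this into the sentence ``Together with Lemma~\ref{dominionsize2}, we obtain\ldots''), and the iteration bound for $g(w)=w-\lceil\sqrt{w\log_j w}\rceil$, for which your integral-potential argument is a clean justification of what the paper states without proof.
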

To prove Lemma~\ref{thm:solve-recurrences2}, we first establish another lemma.
\begin{lemma}
  For $s_j,j,n\in\mathbb N$, $s = \left\lceil \sqrt{\frac{s_j}{\log_j{s_j}}}\right\rceil$ and $\ell = \left\lceil\sqrt{2(n-s_j)}\right\rceil$, it holds
  \begin{equation*}
    T(s_j,n-s_j) \leq n^{O\left(\sqrt{n-s_j}+\sqrt{\frac{s_j}{\log_j{s_j}}}\right)} \cdot \left(\mathsf{dom}_{n,n-s_j}(\ell,s)+O(1)\right) \enspace .
  \end{equation*}
\end{lemma}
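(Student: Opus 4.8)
The plan is to mirror the binary-tree argument from the proof of Theorem~\ref{thm:solve-recurrences1}, now driven by recurrence $(c)$ of Lemma~\ref{lemma:recurrences3}. Abbreviate the two sizes tracked by $(c)$ by $a = s_j$ (the number of low-out-degree nodes) and $b = n - s_j$ (the number of high-out-degree nodes). First I would build a binary tree $T = T_{a,b}$: its root carries the label $(a,b)$; any node labelled $(a',b')$ with $a'>3$ or $b'>3$ is internal and receives a \emph{left child} obtained by decrementing by one whichever of $a',b'$ makes the first $\max$ in $(c)$ largest, and a \emph{right child} obtained by subtracting from $a'$ the current value of $s$, resp.\ from $b'$ the current value of $\ell$, whichever makes the second $\max$ largest; nodes with $a'\le 3$ and $b'\le 3$ are leaves. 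Unrolling $(c)$ shows that the sum over the nodes of $T$ of the per-node cost is an upper bound on $T(s_j,n-s_j)$. Since $\mathsf{dom}$ is monotone in all of its parameters and the parameters $n',z',\ell',s'$ at any node are at most their values at the root, that per-node cost is at most $\mathsf{dom}_{n,n-s_j}(\ell,s)+O(1)$; it therefore suffices to prove $|T|\le n^{O(\sqrt{n-s_j}+\sqrt{s_j/\log_j s_j})}$.

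To bound $|T|$, I would control the length of, and the number of right turns on, any root-to-leaf path, exactly as in Theorem~\ref{thm:solve-recurrences1}. Each step (left or right turn) strictly decreases $a+b$, which starts at $n$ and stays non-negative, so every path has length at most $n$. The right turns split into \emph{$b$-turns}, which subtract $\lceil\sqrt{2b'}\rceil$ from the current high-out-degree count, and \emph{$a$-turns}, which subtract the current value of $s$ from the low-out-degree count. Because the decremented coordinate is non-increasing along a path, listing the $b$-turns in the order they occur and reading off the value of $b$ just before each of them shows that their number is at most the number of times the map $x\mapsto x-\lceil\sqrt{2x}\rceil$ iterates from $b=n-s_j$ down to a constant, which the paper already bounds by $O(\sqrt{n-s_j})$; the same reasoning bounds the number of $a$-turns by the number of iterations, starting from $a=s_j$, of the map $g$ that realises one big decrement of the low-out-degree coordinate. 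Granting (see below) that this is $O(\sqrt{s_j/\log_j s_j})$, every path makes at most $R:=O(\sqrt{n-s_j}+\sqrt{s_j/\log_j s_j})$ right turns. Since the tree $T$ is now fully determined, a leaf is identified by the $\{L,R\}$-word labelling the path leading to it, so the number of leaves is at most $\sum_{L=0}^{n}\binom{L}{R}\le n^{O(R)}$, whence $|T|\le 2\,n^{O(R)}=n^{O(\sqrt{n-s_j}+\sqrt{s_j/\log_j s_j})}$; multiplying by the per-node bound yields the statement.

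The one genuinely new ingredient, and the step I expect to be the main obstacle, is the iteration bound for the map $g$: that starting from $x = s_j$ it reaches a constant within $O(\sqrt{s_j/\log_j s_j})$ iterations. This is the analogue of, but slightly more delicate than, the paper's observation that $x\mapsto x-\lceil\sqrt{2x}\rceil$ iterates $O(\sqrt{x})$ times; I would prove it by exhibiting a potential comparable to $\sqrt{x/\log_j x}$ that drops by at least a fixed positive constant at each step, verified by an induction over dyadic (or suitably spaced) ranges of $x$ in the style of that proof, together with the fact that one step of $g$ never carries $x$ below the next such range by more than a constant additive term. Minor points are routine and do not affect the $O(\cdot)$ estimates: $\log_j$ is read with the convention used elsewhere in the paper when $j$ is small, each decrement is capped so that coordinates stay non-negative (degenerate turns that would move no coordinate are discarded), and the parameters $\ell',s'$ recomputed inside the recursive sub-calls only shrink, which is exactly what the monotonicity of $\mathsf{dom}$ invoked in the first paragraph requires.
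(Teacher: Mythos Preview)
Your proposal is correct and follows essentially the same approach as the paper: build a binary recursion tree labelled by the pair $(a,b)$ of low-/high-out-degree counts, bound every root-to-leaf path by length $n$, bound the number of right turns by the iteration counts of the two decrement maps, and conclude by counting $\{L,R\}$-words. The paper simply asserts the iteration bound for $g(x)=x-\lceil\sqrt{x\log_j x}\rceil$ as an observation, whereas you flag it as the main step and sketch a potential-function proof; likewise the monotonicity of $\mathsf{dom}$ you invoke explicitly is used by the paper only implicitly when it replaces the per-node cost $\mathsf{dom}_{a+b,b}(\cdot,\cdot)$ by $\mathsf{dom}_{n,n-s_j}(\ell,s)$.
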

\begin{proof}
  For each parity game $G$ on $n$ nodes and $s_j = s_j(G)$ we construct a binary tree $T_G$ in the following way.
  The root of $T_G$ is labeled by $(s_j,n-s_j)$.
  A node labeled by $(a,b)$ with $a > 3$ and $b > 3$ has up to two children: a left child labeled by $(a-1,b)$ or $(a,b-1)$, and possibly a right child labeled by $(a-\sqrt{a\cdot \log_j{a}},b)$ or $(a,b-\sqrt{b})$.
  Each child of a node corresponds to one of the recursive calls.
  The choice on how we label the children depends on the behavior of the algorithm.
  We label the children of a node by $(a',b')$ and $(a'',b'')$ such that the recursive calls of the algorithm are to games containing at most $a'$, respectively $a''$ nodes of out-degree at most $j$ and at most $b'$, respectively $b''$, nodes of out-degree greater than $j$.
  Nodes labeled by $(a,b)$ with $a,b \in  \{0,1, 2, 3\}$ are leaves.
  A node labeled by $(a,b)$ has a cost of $\left(\mathsf{dom}_{a+b,b}(\sqrt{b},\sqrt{a\cdot \log_j{a}})+O(1)\right)$ associated with it.

  It follows from Lemma~\ref{lemma:recurrences3} that the sum of the costs of the nodes of $T_G$ is an upper bound on the run time of $\textbf{new-win}(G,j)$.
  The worst possible sum of the costs of the nodes of $T_G$ we can obtain for some instance $G$ with $s_j = s_j(G)$ and $n = |V|$ therefore is an upper bound of $T(s_j,n-s_j)$.
  Clearly, the length of every path in~$T_G$ from the root to a leaf is at most $n$.
  We say that such a path makes a \emph{right turn} when it descends from a node to its right child.
  We next claim that each such path makes at most $O(\sqrt{n-s_j}+\sqrt{\frac{s_j}{\log_j{s_j}}})$ right turns.
  This follows immediately from the observation that the function $f(x) = x- \left\lceil \sqrt{2x}\right\rceil$ can be  iterated on $n-s_j$ at most $O(\sqrt{n-s_j})$ times before reaching the value of 3 or less and the function $g(x) = x -\left\lceil \sqrt{x\cdot\log_j{x}}\right\rceil$ can be iterated on $s_j$ at most $O(\sqrt{\frac{s_j}{\log_j{s_j}}})$ times before reaching the value of 3 or less.
  As each leaf of $T_G$ is determined by the positions of the right turns on the path
 leading to it from the root, we get that the number of leaves in $T_G$ is at most $n^{O\left(\sqrt{n-s_j}+\sqrt{\frac{s_j}{\log_j{s_j}}}\right)}$.
  The total number of nodes in $T_G$ is therefore at most $n^{O\left(\sqrt{n-s_j}+\sqrt{\frac{s_j}{\log_j{s_j}}}\right)}$.
  As the cost of each node is at most $\left(\mathsf{dom}_{n,n-s_j}(\ell,s)\right) + O(1)$, we immediately have that
  \begin{equation*}
  T(s_j,n-s_j) \leq n^{O\left(\sqrt{n-s_j}+\sqrt{\frac{s_j}{\log_j{s_j}}}\right)} \cdot \left(\mathsf{dom}_{n,n-s_j}(\ell,s)+O(1)\right).\quad\qedhere
  \end{equation*}
\end{proof}
Together with Lemma~\ref{dominionsize2}, we obtain
\begin{equation*}
  T(s_j,n-s_j) = n^{O\left(\sqrt{n-s_j}+\sqrt{\frac{s_j}{\log_j{s_j}}}\right)}.
\end{equation*}
This completes the proof of Lemma~\ref{thm:solve-recurrences2}.

\medskip
\noindent
\textbf{Acknowledgements.}
M. M. thanks L{\'a}szlo V{\'e}gh for introducing him to parity games, and the authors of~\cite{GajarskyEtAl2015} for sending us a preprint.

\bibliographystyle{abbrv}
\bibliography{paritygames}

\end{document}